\documentclass[aps,pra,10pt,superscriptaddress,nofootinbib]{revtex4-2}

\PassOptionsToPackage{%
  colorlinks=true,
  linkcolor=palatinateblue,
  citecolor=brightpink,
  urlcolor=amaranth,
  linktoc=all
}{hyperref}

\usepackage{amsmath,amssymb,amsfonts}
\usepackage{braket}
\usepackage{graphicx}

\usepackage{etoolbox}

\usepackage{amsthm,mathtools}

\theoremstyle{remark}  

\newtheorem{theorem}{Theorem}          
\newtheorem{lemma}[theorem]{Lemma}    
\newtheorem{proposition}[theorem]{Proposition}
\newtheorem{corollary}[theorem]{Corollary}
\newtheorem{definition}[theorem]{Definition}

\usepackage{extpfeil}

\usepackage{cancel}
\usepackage{xcolor}

\usepackage{epstopdf}
\usepackage{extarrows}
\usepackage{multirow}
\usepackage{tikz}

\usepackage{MnSymbol}
\usepackage{changes}

\usepackage{orcidlink}
\usepackage{enumitem}

\usetikzlibrary{matrix,arrows}

\usepackage{empheq}
\usepackage{ulem} 
\providecommand{\rbr}[1]{\left( #1 \right)}%
\providecommand{\sqbr}[1]{\left[ #1 \right]} %
\providecommand{\mtn}[1]{{\scriptscriptstyle \! #1}}%
\providecommand{\sgn}[1]{\mathrm{sgn} \left( #1 \right) }%
\def\ra{\rightarrow}
\def\Ra{\Rightarrow}

\definecolor{dgreen}{rgb}{0, 0.6, 0.1}
\definecolor{amaranth}{rgb}{0.9, 0.17, 0.31}
\definecolor{purple(munsell)}{rgb}{0.62, 0.0, 0.77}
\definecolor{americanrose}{rgb}{1.0, 0.01, 0.24}
\definecolor{palatinateblue}{rgb}{0.15, 0.23, 0.89}
\definecolor{royalblue(web)}{rgb}{0.25, 0.41, 0.88}
\definecolor{hanpurple}{rgb}{0.32, 0.09, 0.98}
\definecolor{beaublue}{rgb}{0.74, 0.83, 0.9}
\definecolor{carminered}{rgb}{1.0, 0.0, 0.22}
\definecolor{brightpink}{rgb}{1.0, 0.0, 0.5}
\definecolor{vividviolet}{rgb}{0.62, 0.0, 1.0}

\definechangesauthor[name=Charles, color=blue]{CCC}

\newcommand{\pbar}{\lower1.45ex\hbox{$\mathchar'26$}\mkern-7mu p}
\newcommand{\qbar}{\lower1.45ex\hbox{$\mathchar'26$}\mkern-11mu q}

\newcommand{\bbar}{\lower0.1ex\hbox{$\mathchar'26$}\mkern -8.5mu b}

\begin{document}


\title{Catalytic quantum thermodynamics beyond additivity and reduced-state monotones}
\author{Ali Can G\"unhan}
\thanks{Corresponding author}
\email{alicangunhan@mersin.edu.tr}
\affiliation{Department of Physics, Mersin University, 33343 Mersin, T\"urkiye}

\author{Onur Pusuluk}
\email{onur.pusuluk@gmail.com}
\affiliation{Faculty of Engineering and Natural Sciences, Kadir Has University, 34083 İstanbul, T\"urkiye}

\author{Thomas Oikonomou}
\email{thomas.oikonomou1@gmail.com}
\affiliation{Institute of Physical Chemistry, National Center for Scientific Research ``Demokritos'', 15310 Athens, Greece}

\author{G.\ Baris Bagci}
\email{gbagci@mersin.edu.tr}
\affiliation{Department of Physics, Mersin University, 33343 Mersin, T\"urkiye}

\date{\today}

\begin{abstract}
 
The generalized second laws of quantum thermodynamics are usually formulated in terms of R\'enyi divergences and the associated family of generalized free energies. In catalytic thermal transformations, this framework typically certifies the existence of a suitable catalyst but does not make the catalytic contribution explicit in the resulting system-level inequalities. Here we develop a complementary formulation based on non-additive divergences, whose pseudo-additive structure yields a family of generalized free energies with an explicit catalyst-dependent correction term. For uncorrelated catalytic thermal transformations, we show that this leads to non-additive second-law relations that make the catalytic contribution explicit and provide nontrivial constraints on admissible catalysts when the catalyst is returned only approximately. We also analyze correlated catalytic thermal transformations and show, through explicit finite-dimensional examples, that reduced-state data are generally insufficient to characterize thermodynamic accessibility: the thermo-majorization behavior of the joint transformation can change while the system and catalyst marginals remain fixed, and even states with identical marginals and the same mutual information can exhibit different thermo-majorization accessibility. Our results show that non-additivity can be thermodynamically informative in uncorrelated catalysis, whereas correlated catalysis generally requires a genuinely joint-state-sensitive description beyond reduced-state monotones.

\end{abstract}
\maketitle


\section{Introduction}
\label{sec::intro}

The generalization of the second law to small-scale systems out of equilibrium~\cite{Janzing2000, HorodeckiOppenheim2013, Brandao2013, Renes2014, 2015_PRL_EnTO, Gour2018, 2018_Quantum_eTOs, ContThermo2022} has become one of the central achievements of contemporary quantum thermodynamics~\cite{Book2018}, building on earlier order-theoretic ideas such as mixing character and mixing distance~\cite{MeadRuch1976, 1978_Ruch_etall_Thermomajorization}. In particular, the seminal work of Brand\~ao et al.~\cite{BrandaoEtAl2015} identified an infinite family of generalized free-energy conditions for state convertibility under exact catalytic thermal operations, expressed through R\'enyi divergences and the associated generalized free energies; see also Refs.~\cite{Gour2021, Farooq2024, Verhagen2025} for later refinements and rigorous formulations. These results, together with subsequent developments on correlated catalytic transformations~\cite{2015_PRL_Lostaglio_Muller, 2017_Entropy_CorrelatedCatal, PhysRevX.8.041051, 2022_PRL_CorrelatedCatal, 2025_PRL_CorrCatalAndFreeE, 2025_arXiv_CorrCatalAndFreeE}, have played a foundational role in the modern understanding of how the second law emerges beyond the thermodynamic limit~\cite{2025_EPL_Shiraishi_CorrelatedCatalyzerAndUniqueF}.

Despite this progress, two conceptual aspects of catalytic thermodynamics remain insufficiently transparent. First, in the standard uncorrelated catalytic setting~\cite{BrandaoEtAl2015, 2024_RMP_Review}, the R\'enyi-based framework typically yields existential statements: if the generalized free-energy conditions are satisfied, then there exists a catalyst enabling the desired transition. While such results establish convertibility in principle, they do not directly reveal how the catalyst itself enters the thermodynamic balance. Moreover, beyond the exact setting, approximate catalysis is known to be operationally fragile: if catalytic return is quantified solely by small trace-distance error, then for any $\varepsilon > 0$ there exist catalyst states enabling arbitrary state transitions~\cite{2017_NJP_Ng_LimitsOfCatal}. Nontrivial constraints are restored only when additional physical restrictions, such as bounded catalyst dimension or finite average energy, are imposed. In this regime, approximate uncorrelated catalysis becomes a finite-resource accounting problem whose feasibility depends on the interplay between catalyst dimension, spectral structure, and the thermodynamic demand of the target transition. This motivates seeking complementary formulations in which the catalyst's athermality enters the generalized free-energy balance directly, rather than being constrained only indirectly.

Second, in correlated catalytic settings~\cite{2015_PRL_Lostaglio_Muller, 2017_Entropy_CorrelatedCatal, PhysRevX.8.041051, 2022_PRL_CorrelatedCatal, 2025_PRL_CorrCatalAndFreeE, 2025_arXiv_CorrCatalAndFreeE}, where the catalyst is recovered only marginally, and correlations are allowed to build up with the system, it is not clear to what extent any family of reduced-state thermodynamic monotones can remain complete. Recent finite-size results have begun to make correlated catalysis more operational by relating catalytic feasibility to multicopy transformation rates, by deriving sufficient finite-dimensional catalyst bounds, and by identifying genuinely state-dependent effects such as catalytic resonance~\cite{2025_PRA_FiniteSizeCatal}. However, such advances still leave open a different question: whether, in the correlated regime, the relevant state dependence can ever be fully captured by reduced-state data, even when supplemented with simple scalar descriptors of correlation.

These two issues are related, but in different ways, to how generalized free energies behave under composition. In the conventional R\'enyi formulation, additivity allows the generalized free energies of product states to decompose cleanly, which is essential for standard catalytic arguments but also tends to keep explicit catalyst-dependent contributions outside the final system-level inequalities, rather than incorporating them as part of the generalized second-law balance itself.  By contrast, once residual system--catalyst correlations are allowed, the deeper issue is no longer additivity alone, but whether the relevant state dependence can be captured at the level of reduced states.

In this work, we address the first issue by developing a complementary formulation based on non-additive divergences and the corresponding family of generalized free energies~\cite{Tsallis1988}. Unlike the R\'enyi family, these divergences satisfy a pseudo-additive composition law~\cite{FuruichiYanagiKuriyama2004}, so that the generalized free energy of a product state contains an explicit cross term. Rather than treating this term as a technical inconvenience, we retain it and interpret it as a thermodynamically meaningful contribution. This leads to a non-additive family of second-law inequalities for catalytic thermal transformations in which the catalyst enters explicitly through the pseudo-additive correction. In this way, the  non-additive framework provides a structurally transparent refinement of catalytic thermodynamic bookkeeping: the catalyst is not only certified to exist, but its contribution becomes visible in the generalized free-energy balance. We then turn to the second issue and show, through explicit thermo-majorization examples, that in finite-dimensional correlated catalytic transformations the relevant thermodynamic constraints cannot, in general, be captured completely at the level of reduced states alone. Even when the system and catalyst marginals are fixed and simple scalar correlation measures are matched, thermodynamic accessibility can still depend on the internal structure of the final joint state.

The remainder of this work is organized as follows. In Sec.~\ref{sec::RTofTD}, we introduce the resource-theoretic framework and notation underlying both the uncorrelated and correlated catalytic settings considered here. In Sec.~\ref{sec::T2ndLaw}, we derive the non-additive second laws for uncorrelated catalytic thermal transformations, and in Sec.~\ref{sec::PseudoAdd}, we show how the pseudo-additive correction yields explicit catalyst-dependent constraints in approximate finite-dimensional settings. In Sec.~\ref{sec:CorrCatTherTransBeyondRedStateMonotones}, we then examine correlated catalytic transformations through explicit thermo-majorization examples, which reveal that finite-size accessibility can depend on genuinely joint-state features that are not, in general, recoverable from reduced states alone. Finally, in Sec.~\ref{sec::discuss}, we discuss the broader implications of these results for generalized second laws and catalytic quantum thermodynamics, including their connection to a companion study~\cite{CompanionQ2Laws} in which the role of intrinsic equilibrium correlations becomes structurally unavoidable.

\section{Framework and preliminaries}
\label{sec::RTofTD}
Quantum resource theories provide a general framework to study state transformations under restricted sets of operations, and have found applications in entanglement theory, quantum thermodynamics, coherence, non-locality and quantum correlations~\cite{Horodecki2009, Brunner2014,
Lostaglio2015, LostaglioEtAl2015, Adesso_2016, Gour20151, Goold_2016, Streltsov2017}. In general, a resource theory is based on two main ingredients: a class of free operations that has zero implementation cost, and free states that can be generated and used without a cost. If a state is not free, then it is called a resource. For example, in the resource theory of coherence, incoherent states are free, while coherence is a resource~\cite{Chitambar2019}.

The resource theory then attempts to determine which quantum states are possible to obtain by manipulating an initial quantum state given that one has an arbitrary number of copies of free states and free operations at disposal. The possibility of such processes is often expressed in terms of a particular function which decreases at the end of the quantum state transitions~\cite{Chitambar2019}.

Thermodynamics in this regard can be viewed as a resource theory, since its second law relies on the decrease in Helmholtz free energy $F$ when temperature and volume are kept constant. In other words, a free transition between the states is possible if and only if it corresponds to a non-increasing free energy. However, thermodynamics is originally a macroscopic theory, which deals with physical systems composed of a large number of weakly interacting degrees of freedom. As a result, ordinary thermodynamics rests on neglecting fluctuations and assumes weak correlations. Therefore, one cannot expect that ordinary thermodynamics is a viable quantum resource theory valid at much smaller scales, since strong fluctuations and correlations become dominant in this regime.

Such a resource-theoretical approach has recently been developed for quantum thermodynamics~\cite{Janzing2000,Brandao2013,HorodeckiOppenheim2013,BrandaoEtAl2015,Gour2018}. In that quantum resource approach to thermodynamics, the free states are thermal (i.e., Gibbs) states, while the free operations are the energy conserving transformations that preserve the thermal state. The underlying principle governing the state transitions in small-scale thermodynamics is commonly formulated in terms of an infinite family of non-decreasing generalized free energies $F_\mtn{\alpha}^{\mtn{R}}$. Under the standard assumptions of block-diagonal states and uncorrelated catalytic transitions, the free transition from a state $\rho_\mtn{S}$ to $\rho'_\mtn{S}$ is possible if the conditions $\Delta F_\mtn{\alpha}^\mtn{R} \leq 0$ hold for all $\alpha \geq 0$, where $\Delta F^\mtn{R}_\mtn{\alpha} = F^\mtn{R}_\mtn{\alpha} (\rho'_\mtn{S}) - F^\mtn{R}_\mtn{\alpha} (\rho_\mtn{S})$ and   
\begin{eqnarray}\label{eq:rule_uncorrelated_R}
F^\mtn{R}_\mtn{\alpha} (\rho_\mtn{S}) = k_\mtn{B} T D^\mtn{R}_\mtn{\alpha} (\rho_\mtn{S} \| \gamma_\mtn{S}) - k_\mtn{B} T \ln (Z_{\mtn{S}})\,,
\end{eqnarray}
where $k_\mtn{B}$ is the Boltzmann constant, $Z_\mtn{S}$ is the partition function of the corresponding thermal state, and $ D^\mtn{R}_\mtn{\alpha} (\rho_\mtn{S} \| \gamma_\mtn{S})$ is the R\'enyi divergence between the state $\rho_\mtn{S} $ and the corresponding thermal state $\gamma_\mtn{S}$ \cite{BrandaoEtAl2015}. Between two probability vectors $p,q\in\mathcal{I}_m^+$, where $\mathcal{I}^+_m\coloneqq\{(x_1,\ldots,x_m)\in\mathbb{R}^m\Big|\; 0<x_i<1, \;\sum_{i=1}^{m}x_i=1\}$, the R\'enyi divergence  $D^\mtn{R}_\mtn{\alpha} (p\|q)$ \cite{BrandaoEtAl2015} for $\alpha\in\mathbb{R}$ reads 
\begin{eqnarray}\label{eq:renyi_div1}
D^\mtn{R}_\mtn{\alpha} (p\|q) &=&
\frac{\mathrm{sgn}(\alpha)}{\alpha-1} \ln\rbr{\sum_{i=1}^{m} \rbr{\frac{p_i}{q_i}}^\alpha q_i}.
\end{eqnarray}

Note that the R\'enyi divergence is non-negative i.e. $D^\mtn{R}_\mtn{\alpha} (p\|q)\geq 0$, which can be shown by using Jensen's inequality~\cite{jensen_2018_2371297}. This generalized free energy definition $F^\mtn{R}_\mtn{\alpha}$ reduces to the usual Helmholtz free energy $F$ in the limit $\alpha \rightarrow 1 $. The relation  between the R\'enyi divergence and the R\'enyi entropy $H^\mtn{R}_\mtn{\alpha}(p)$ is given as
\begin{eqnarray}\label{eq:renyi_uniform}
D^\mtn{R}_\mtn{\alpha} (p\|\eta) = \sgn\alpha\ln (m)-H^\mtn{R}_\mtn{\alpha}(p)\,,
\end{eqnarray}
where $\eta=\lbrace \frac{1}{m},\frac{1}{m},...,\frac{1}{m}  \rbrace$ is the uniform probability distribution, and the R\'enyi entropy $H^\mtn{R}_\mtn{\alpha}(p)$ \cite{Renyi1961}, for any probability distribution $p\in \mathcal{I}^+_m$, reads
\begin{eqnarray}\label{eq:renyi_entropy2}
H^\mtn{R}_\mtn{\alpha}(p)\coloneqq\frac{\mathrm{sgn}(\alpha)}{1-\alpha}\ln\left(\sum_{i=1}^{m}p_i^\alpha\right)\,.
\end{eqnarray}
The parameter $\alpha\in\mathbb{R}$ denotes the order of the measure, $\ln(x)$ is the natural logarithm, and the function $\sgn{x}$ is $+1$ or $-1$ for $x\geq0$ and $x<0$, respectively \cite{BrandaoEtAl2015}.
The R\'enyi entropy is additive over product probability distributions for $\forall\alpha\in\mathbb{R}$, and yields the Shannon entropy $H(p)\coloneqq-\sum_{i=1}^{m}p_i \ln(p_i)$ as $\alpha \rightarrow 1$.

However, the criteria $\Delta F^\mtn{R}_\mtn{\alpha} \leq 0$ for possible free transitions are based on some assumptions: First of all, we only consider such states that are block diagonal in energy i.e. the thermalization time is much greater than the decoherence time. In fact, an impossibility proof has recently been provided against the incorporation of coherence into this picture \cite{LostaglioMuller2019,Marvian2019}. The second important assumption is that we consider only transitions of the form $\rho_\mtn{S} \otimes \sigma_\mtn{M} \rightarrow \rho_\mtn{S}' \otimes \sigma_\mtn{M}$ where $\sigma_\mtn{M}$ denotes the state of the thermal machine containing a catalyst and possibly a work bit. The work bit is a two-level system always in an energy eigenstate, that is, ground and excited states, and provides the extracted or spent work. The thermal machine preserves its state during the transition and hence can be used again. The final state after transition, just as the initial state, is a product state, namely, the system and the thermal machine do become uncorrelated again even if they were not during the process. Lastly, the aforementioned criteria assume additivity of the free energy $F^\mtn{R}_\mtn{\alpha}$ i.e. 
\begin{eqnarray}\label{eq:denemelibelli}
F^\mtn{R}_\mtn{\alpha} (\rho_\mtn{S} \otimes \sigma_\mtn{M} ) =  F^\mtn{R}_\mtn{\alpha} (\rho_\mtn{S}) +  F^\mtn{R}_\mtn{\alpha} (\sigma_\mtn{M}),
\end{eqnarray}
since the R\'enyi divergence satisfies the following additivity property 
\begin{eqnarray}\label{eq:denemeli}
D_\mtn{\alpha}^\mtn{R}\Bigl(p\otimes r\|q\otimes s\Bigr) = D_\mtn{\alpha}^\mtn{R}\Bigl(p\|q\Bigr)+D_\mtn{\alpha}^\mtn{R}\Bigl(r\|s\Bigr),
\end{eqnarray}
in general. In what follows, unless otherwise indicated, generalized quantities without a superscript refer to the non-additive formulation. The expressions associated with the R\'enyi entropy will carry a superscript~$R$. With this notation in place, we proceed to derive the non-additive second laws governing catalytic state transformations. These take the form of monotonicity constraints expressed in terms of non-additive divergences.

\section{Non-additive second laws from non-additive divergences}
\label{sec::T2ndLaw}
We now introduce a non-additive formulation of the generalized free-energy conditions by replacing the additive R\'enyi divergence with a non-additive divergence. In this formulation, the transition from a state $\rho_\mtn{S}$ to $\rho'_\mtn{S}$ is characterized by the quantities $\Delta F_\mtn{\alpha}=F_\mtn{\alpha} (\rho'_\mtn{S}) - F_\mtn{\alpha} (\rho_\mtn{S})$, where $F_\mtn{\alpha} (\rho_\mtn{S})$ is defined as
\begin{eqnarray}\label{eq:rule_uncorrelated}
F_\mtn{\alpha} (\rho_\mtn{S}) = k_\mtn{B} T D_\mtn{\alpha} (\rho_\mtn{S} \| \gamma_\mtn{
S}) - k_\mtn{B} T \ln (Z_\mtn{S})\,,
\end{eqnarray}   
and the non-additive divergence \cite{BorlandPlastinoTsallis1998,BorlandPlastinoTsallis1999E,FuruichiYanagiKuriyama2004} reads
\begin{eqnarray}\label{eq:tsallis_div1_defn}
D_\mtn{\alpha} (p\|q) =
\frac{\mathrm{sgn}(\alpha)}{\alpha-1}\sqbr{\sum_{i=1}^{m}\rbr{\frac{p_i}{q_i}}^\alpha q_i-1}\,.
\end{eqnarray}
This divergence is non-additive, since it satisfies the following relation:
\begin{eqnarray}\label{eq:non_add_relation}
D_\mtn{\alpha}\Bigl(p\otimes r\|q\otimes s\Bigr) 
&=& 
D_\mtn{\alpha}\Bigl(p\|q\Bigr)
+
D_\mtn{\alpha}\Bigl(r\|s\Bigr)+\sgn{\alpha}
(\alpha-1)
D_\mtn{\alpha}\Bigl(p\|q\Bigr)
D_\mtn{\alpha}\Bigl(r\|s\Bigr).
\end{eqnarray}
%
Due to this relation above, the free energy expression $F_\mtn{\alpha}$ also becomes non-additive 
\begin{eqnarray}\label{eq:non_add_free}
F_\mtn{\alpha}(\rho_\mtn{S} \otimes \sigma_\mtn{M} ) 
&=&  
F_\mtn{\alpha} (\rho_\mtn{S}) 
+
F_\mtn{\alpha} (\sigma_\mtn{M})+
\sgn{\alpha} \frac{(\alpha-1)}{k_\mtn{B}T}
\Big[ F_\mtn{\alpha} (\rho_\mtn{S}) + k_\mtn{B} T \ln (Z_\mtn{S}) \Big]\Big[F_\mtn{\alpha} (\sigma_\mtn{M}) + k_\mtn{B} T \ln (Z_{\mtn{M} })\Big],
\end{eqnarray}
%
retaining an additional term depending on quantum resource and classical free energy expressions when compared with Eq. (\ref{eq:denemelibelli}) .

Under the same standard assumptions used in the additive catalytic framework, this non-additive formulation provides a complementary representation of the generalized second laws (see Appendix~\ref{appE} for the proof): the free transition from a state $\rho_\mtn{S}$ to $\rho'_\mtn{S}$ is characterized by the conditions $\Delta F_\mtn{\alpha} \leq 0$ for all $\alpha \geq 0$. In this sense, the existence of generalized second laws is robust against the non-additivity of the free-energy expression, in that the non-additive $F_\mtn{\alpha}$ also defines a family of monotones.

This result also indicates that there may exist different representations of generalized second laws in terms of different divergence families even when one uses the same trumping relations as a point of departure (see Appendices~\ref{appC} and~\ref{appD}).

Another issue of interest is the difference between the maximal amount of work that can be extracted from a state and the minimal amount of work that must be invested to prepare it when the state is in contact with a heat bath at temperature $T$. The non-additive divergence is monotonically increasing in the parameter $\alpha$, such that $F_{\mtn{\alpha} = 0} (\rho_\mtn{S}) < F_{\mtn{\alpha} \rightarrow 1} (\rho_\mtn{S})< F_{\mtn{\alpha} = \infty} (\rho_\mtn{S})$, where $F_{\mtn{\alpha} \rightarrow 1} (\rho_\mtn{S})$ corresponds to the ordinary free energy. Hence, even in the presence of non-additivity, the minimal amount of work required to prepare a state surpasses the maximal amount of work reliably extracted from a state. Only in the limit $\alpha \rightarrow 1$, where ordinary free energy becomes the sole choice, do these two distinct work values coincide.

This asymmetry between work extraction and work formation is naturally quantified by the notion of \textit{work distance} in the resource theory of quantum thermodynamics defined as follows~\cite{BrandaoEtAl2015},
\begin{equation}\label{eq:Workable_dist}
  \mathcal{D}_{\text{work}}(\rho_\mtn{S} \succ\rho_\mtn{S}') \coloneqq k_\mtn{B} T \inf_\mtn{\alpha}[F^\mtn{R}_\mtn{\alpha}(\rho_\mtn{S}) - F^\mtn{R}_\mtn{\alpha}(\rho_\mtn{S}')]\,.
\end{equation}

The work distance, originally defined there in terms of the additive R\'enyi divergence, remains invariant under the non-additive free-energy expression (see Appendix~\ref{appG}). Hence, the same definition can be consistently maintained. This establishes the non-additive second laws at the level of marginal free-energy differences, and sets the stage for analyzing their behavior in the presence of finite-size effects and system--catalyst correlations.

\section{Finite-size catalytic implications of pseudo-additivity}
\label{sec::PseudoAdd}

In this section, we analyze these effects explicitly in the finite-size regime. The pseudo-additive structure of non-additive free-energy expression, as encoded in the last term of Eq.~\eqref{eq:non_add_free}, has an immediate consequence for catalytic thermodynamics: in contrast with the additive cancellation exhibited in Eq.~\eqref{eq:denemelibelli}, the catalytic contribution does not disappear from the monotonicity condition even for uncorrelated catalytic transformations. This makes the thermodynamic role of the catalyst explicit and turns catalytic feasibility into a finite-resource accounting problem rather than a purely existential statement. The detailed derivations for the exact and approximate cases, together with the finite-dimensional benchmark examples discussed below, are presented in Appendix~\ref{app:pseudo-additivity-finite-size}.

For exact uncorrelated catalysis of the form
\begin{equation}
  \rho_\mtn{S} \otimes \sigma_\mtn{M} \;\longmapsto\; \rho'_\mtn{S} \otimes \sigma_\mtn{M} ,
\end{equation}
the total non-additive free-energy change factorizes into a system term multiplied by a catalyst-dependent correction; see Appendix~\ref{app:pseudo-additivity-finite-size}, in particular Sec.~\ref{app:exact-uncorr-cat}. In this sense, the catalyst's athermality remains present in the generalized second-law relation. For trivial catalyst Hamiltonians, however, this exact condition alone does not yet imply a nontrivial finite-size restriction. The finite-size content of the non-additive formulation becomes more informative when the catalyst is allowed to be returned only approximately.

To make this precise, we consider approximate uncorrelated catalysis,
\begin{equation}
  \rho_\mtn{S} \otimes \sigma_\mtn{M} \;\longmapsto\; \rho'_\mtn{S} \otimes \sigma'_\mtn{M},
\end{equation}
under a trace-distance constraint
\begin{equation}
  \frac12\|\sigma'_\mtn{M}-\sigma_\mtn{M}\|_1 \le \varepsilon .
\end{equation}
In this setting, pseudo-additivity yields an explicit trade-off between the target transformation on the system $S$, the allowed catalytic return error $\varepsilon$ and the finite-dimensional structure of the catalyst; see Sec.~\ref{app:approx-uncorr-cat}, especially Eq.~\eqref{eq:tradeoff-sufficient} and the resulting bounds in Eqs.~\eqref{eq:eps-bound-alpha-ge-1} and \eqref{eq:eps-bound-alpha-lt-1}. The resulting monotonicity condition depends not only on the system free-energy change, but also on dimension-dependent quantities associated with the catalyst spectrum. Thus, in contrast with the way this dependence appears in additive formulations, approximate catalysis is naturally recast here as a finite-size feasibility problem at the level of the generalized monotonicity condition itself.

A useful benchmark is obtained by taking the catalyst Hamiltonian to be trivial, so that the reference thermal state is maximally mixed. If the returned catalyst is chosen to be uniform, then the pseudo-additive correction is controlled by how the initial catalyst distribution deviates from uniformity. Two simple examples, worked out in Sec.~\ref{app:finite-benchmark-examples}, illustrate this point. In the first, the trace-distance error is distributed symmetrically over half of the catalyst levels. In the second, the same return-error scale is concentrated into only two levels. Although both constructions satisfy the same approximate-return constraint at fixed catalyst dimension $d_\mtn{M}$, they generate parametrically different pseudo-additive corrections in the generalized monotonicity condition. In particular, within the fixed-$d_\mtn{M}$ comparison considered there, the concentrated perturbation yields a parametrically stronger leading-order contribution than the distributed one. Hence, the pseudo-additive correction depends not only on the size of the catalytic deviation, but also on how that deviation is distributed across the catalyst spectrum.

It is worth noting that this profile sensitivity is not exclusive to the non-additive formulation: in additive R\'enyi-based treatments, the catalyst contribution can also distinguish between different spectral profiles at fixed return error. What is distinctive here is instead the mode of thermodynamic bookkeeping: within the pseudo-additive framework, this dependence appears directly in the generalized free-energy balance itself, rather than only through a separate additive catalyst term.

Taken together, these examples support a physically transparent conclusion: in approximate uncorrelated catalysis, monotonicity conditions in the non-additive framework do not merely indicate whether catalytic assistance exists, in principle. Rather, they quantify how catalyst dimension, spectral profile, admissible return error, and the thermodynamic demand of the target transition combine to determine whether catalytic assistance is feasible at finite resources. These observations highlight that pseudo-additivity leads to a qualitatively different sensitivity to the structure of finite correlations, which cannot be captured solely at the level of marginal descriptions.

\section{Correlated catalytic thermal transformations\texorpdfstring{\\}{} beyond reduced-state monotones}
\label{sec:CorrCatTherTransBeyondRedStateMonotones}

Correlated catalytic thermal transformations constitute a distinct regime in which the catalyst is returned only marginally and is allowed to become correlated with the system during the transition. In the asymptotic regime of negligibly small correlations, the infinite family of generalized second laws is known to collapse to the ordinary free-energy inequality. Here, by contrast, we focus on finite-dimensional regimes in which the generated system--catalyst correlations are not negligible and can materially affect the thermo-majorization verdict for the joint transition. The detailed constructions and calculations underlying this section are presented in Appendix~\ref{app:corrcatal}.

Our starting point is a fixed initial uncorrelated (uc) state,
\begin{equation}
\rho_\mtn{SM}^\mtn{\mathrm{uc}}=\rho_\mtn{S}^\mtn{\beta_2}\otimes \rho_\mtn{M}^\mtn{\beta_1},
\end{equation}
where the thermal operations are taken with respect to a bath at inverse temperature $\beta_b$, together with fixed final marginals
\begin{equation}
\rho_\mtn{M}'=\rho_\mtn{M}^\mtn{\beta_1},
\qquad
\rho_\mtn{S}'=\rho_\mtn{S}^\mtn{\beta_3},
\end{equation}
so that the catalyst is locally returned to its initial thermal state while the system is transformed to a different local thermal state. The key point is that, throughout the examples below, the initial state, the final marginals, and the thermal reference state used in the thermo-majorization test are all held fixed. Only the \textit{correlation structure} of the final joint state is varied; see Appendix~\ref{app:corrcatal}, especially Eqs.~\eqref{eq:rhoC} and~\eqref{eq:rhoD}. Whenever mutual information is quoted below, we mean the standard von Neumann mutual information.
\begin{figure*} [t]
    \centering
    \includegraphics[width=\textwidth]{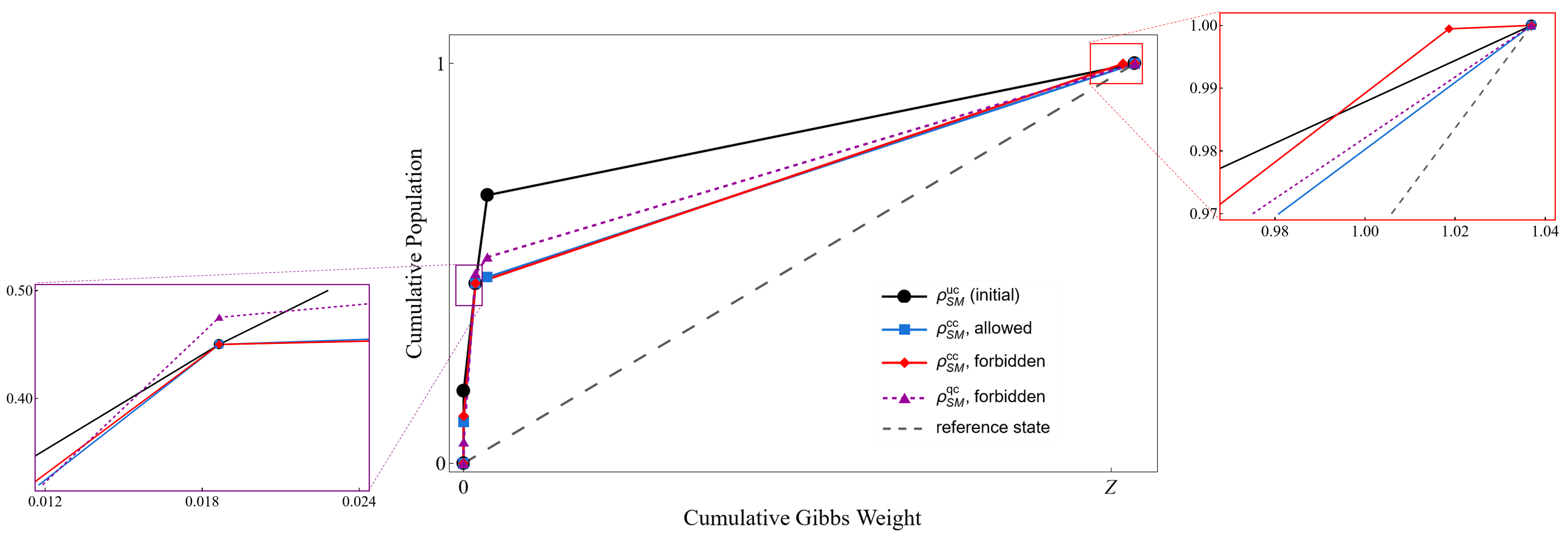}
    \caption{Thermo-majorization curves for the common initial product state $\rho_\mtn{SM}^\mtn{\mathrm{uc}}=\rho_\mtn{S}^\mtn{\beta_2}\otimes\rho_\mtn{M}^\mtn{\beta_1}$ (black), the allowed classically correlated final state $\rho_\mtn{SM}^\mtn{\mathrm{cc}}$ with $\chi= 5 \times 10^{-2}$ (blue), the forbidden classically correlated final state $\rho_\mtn{SM}^\mtn{\mathrm{cc}}$ with $\chi=6.5 \times 10^{-2}$ (red), the forbidden discordant final state $\rho_\mtn{SM}^\mtn{\mathrm{qc}}$ with $\lambda=9.47 \times 10^{-2}$ (purple dashed), and the common thermal reference state (gray dashed), for $E_g=0$, $E_e=2$, $\beta_1=0.1$, $\beta_2=0.2$, $\beta_3=1$, and $\beta_b=2$. In all three final-state examples, the initial state, the thermal reference state, and the final marginals are identical; only the correlation structure of the final joint state is changed. In particular, $\rho_\mtn{SM}^\mtn{\mathrm{qc}}(\lambda=9.47 \times 10^{-2})$ has the same reduced marginals and essentially the same (standard) mutual information as the allowed $\rho_\mtn{SM}^\mtn{\mathrm{cc}}(\chi=5 \times 10^{-2})$ state, yet thermo-majorization yields a different accessibility verdict, showing that reduced-state data and a single scalar correlation measure are generally insufficient in finite-dimensional correlated catalysis.}
    \label{fig:correlated-catalysis-thermomajorization}
\end{figure*}

This immediately implies that all reduced-state generalized free-energy functionals based on R\'enyi and non-additive divergences depend only on the fixed local transition $\rho_\mtn{S}^\mtn{\beta_2}\mapsto\rho_\mtn{S}^\mtn{\beta_3}$, and are therefore insensitive to the detailed final correlations; see Eq.~\eqref{eq:renyi-system-values} in Appendix~\ref{app:corrcatal}. In particular, any family of second laws formulated solely in terms of reduced-state functionals must assign the same verdict to all final states considered here. The thermo-majorization analysis of the \textit{joint} transition shows that this is not sufficient.

Our first example is a family of purely classically correlated (cc) final states $\rho_\mtn{SM}^\mtn{\mathrm{cc}}$, parameterized by $\chi$ and given in Eq.~\eqref{eq:rhoC}. The reduced states are independent of $\chi$, but the thermo-majorization verdict is not. For the parameter choice $\chi=5 \times 10^{-2}$, the mutual information is $I(S{:}M)=7.46 \times 10^{-2}$, and the transition $\rho_\mtn{SM}^\mtn{\mathrm{uc}}\mapsto\rho_\mtn{SM}^\mtn{\mathrm{cc}}$ is allowed. For $\chi=6.5 \times 10^{-2}$, the mutual information increases to $I(S{:}M)=14.63 \times 10^{-2}$, but the transition becomes forbidden. Thus, even with fixed marginals, changing only the final classical correlation can change thermodynamic accessibility.

Our second example sharpens this conclusion. We consider a discordant, i.e., quantum-correlated (qc), family of final states $\rho_\mtn{SM}^\mtn{\mathrm{qc}}$, parameterized by $\lambda$ and given in Eq.~\eqref{eq:rhoD}, with the same marginals as {$\rho_\mtn{SM}^\mtn{\mathrm{cc}}$. Choosing $\lambda=9.47 \times 10^{-2}$, one obtains
$I(S{:}M)=7.46 \times 10^{-2}$, which is equal to the mutual information of the allowed classically correlated state $\rho_\mtn{SM}^\mtn{\mathrm{cc}}$ at $\chi=5 \times 10^{-2}$. Nevertheless, the transition $\rho_\mtn{SM}^\mtn{\mathrm{uc}}\mapsto\rho_\mtn{SM}^\mtn{\mathrm{qc}}$ is forbidden by thermo-majorization. Hence, even supplementing reduced-state data with a single scalar correlation measure such as mutual information, is not sufficient in general: thermodynamic accessibility can depend not only on the amount of correlation, but also on its nature.

The three final states considered above are compared in Fig.~\ref{fig:correlated-catalysis-thermomajorization}, where the thermo-majorization curves of the common initial state, the common thermal reference state, and the three final states are shown together. Since only the final joint state is varied, the figure makes the main conclusion visually transparent: an allowed classically correlated final state, a forbidden more strongly classically correlated final state, and a forbidden discordant final state with essentially the same mutual information as the allowed one can all share the same reduced marginals.

Taken together, these examples show that, in finite-dimensional correlated catalysis, no family of second laws defined solely in terms of reduced-state monotones can provide a complete characterization of state accessibility. The obstruction is genuinely joint-state dependent. More strongly, the examples also show that even the total amount of correlation is not, by itself, sufficient: the detailed structure of the final system--catalyst correlations can be thermodynamically decisive. The full constructions, parameter choices, and explicit matrices are given in Appendix~\ref{app:corrcatal}.

\section{Discussion}
\label{sec::discuss}
The generalized second laws for catalytic thermal operations were established by Brand\~ao et al.~\cite{BrandaoEtAl2015} in terms of R\'enyi divergences and subsequently refined in~\cite{Gour2021, Farooq2024, Verhagen2025}. The limits of approximate catalysis, including explicit bounds on the trace-distance return error as a function of catalyst dimension and energy, were analyzed by Ng et al.~\cite{2017_NJP_Ng_LimitsOfCatal}, while recent finite-size results have made correlated catalytic constructions substantially more operational by relating catalytic feasibility to multicopy transformation rates, deriving sufficient finite-dimensional catalyst bounds, and showing that the required catalyst size can depend sharply on the particular transition through phenomena such as catalytic resonance~\cite{2025_PRA_FiniteSizeCatal}. A comprehensive classification of catalytic settings and their interrelations was given in the recent review by Lipka-Bartosik, Wilming, and Ng~\cite{2024_RMP_Review}. Against this background, the present work isolates two distinct but complementary finite-size issues that are not equally transparent in the standard additive formulation.

For uncorrelated catalytic transformations, the standard R\'enyi formulation's additivity implies that the catalyst contribution cancels exactly in the generalized free-energy balance, so that the resulting system-level inequalities certify the existence of a suitable catalyst without making its thermodynamic role explicit.
The non-additive formulation developed here does not alter the underlying transition conditions: the work distance is invariant (Appendix~\ref{appG}), and the same state conversions are certified. What changes is the bookkeeping. The pseudo-additive correction retains the catalyst in the generalized monotonicity condition, so that its athermality and spectral structure enter the balance relations directly. In particular, in approximate catalytic scenarios, this representation reveals that admissibility can depend not only on the trace-distance return error but also on how that error is distributed across the catalyst spectrum (Appendix~\ref{app:pseudo-additivity-finite-size}), a sensitivity that is also present in catalyst-level R\'enyi quantities but is invisible in the usual system-level R\'enyi balance.

For correlated catalytic transformations in finite-dimensional regimes where the generated system--catalyst correlations are not negligible, the recent finite-size framework of Ref.~\cite{2025_PRA_FiniteSizeCatal} provides an important complementary backdrop:
it shows that correlated-catalytic feasibility can be made finite-dimensional and quantitatively meaningful, and that the required catalyst size can depend sharply on the specific transformation, including through catalytic resonance. Our focus is different. Through explicit thermo-majorization examples, we show that once such finite-size correlated catalysis is admitted, the reduced state of the system is generally not a sufficient carrier of the relevant thermodynamic constraints. More precisely, there exist families of joint transformations for which the system and catalyst marginals are fixed, while thermo-majorization accessibility changes as the final system--catalyst correlation structure is varied. Consequently, no family of second laws formulated solely in terms of reduced-state functionals can provide a complete characterization of correlated catalytic thermal transformations in such regimes.

We further show that even the total amount of correlation need not be sufficient. In particular, we compare final correlated states with identical marginals and the same mutual information, but different internal correlation structure---for example, one purely classically correlated~\cite{Vedral-2001, modi2010unified, modi2012classical} and another discordant~\cite{Zurek-2002, modi2010unified, bera2017quantum}--- and find that thermo-majorization can assign different accessibility verdicts to them. Thus, in the correlated catalytic regime, thermodynamic accessibility can depend not only on the amount of correlation, but also on its nature. The relevant thermodynamic information is therefore not, in general, compressible into system-only state functions, nor even into system-only data supplemented by a single scalar correlation measure.

These results reveal a sharp contrast between uncorrelated and correlated catalysis. In uncorrelated catalytic transformations, non-additivity can be used constructively: the pseudo-additive structure of non-additive divergences exposes catalyst-dependent contributions that remain hidden in additive formulations. In correlated catalytic transformations, by contrast, the central limitation is not the absence of additivity but the breakdown of any complete reduced-state description. Correlated catalytic thermodynamics is therefore not merely a perturbative extension of ordinary catalysis; rather, it is a regime in which thermodynamic bookkeeping becomes intrinsically joint-state dependent. This is consistent with the observation that the work cost of quantum processes depends on the full system-environment correlations rather than on reduced-state data alone~\cite{Faist2015,FaistRenner2018}.

\paragraph*{Toward a compositional perspective.}
The contrast between the two catalytic regimes studied here also points toward a broader conceptual lesson: no formulation based solely on reduced-state data can remain complete. Although the two regimes differ in how this limitation manifests, both indicate that the fundamental limitations of standard generalized second-law formulations arise from how thermodynamic constraints behave under composition. In one case, composition leaves a visible pseudo-additive residue even for product states; in the other, it forces the relevant constraints to depend on genuinely joint-state structure. A fuller synthesis of this idea, in which pseudo-additivity and correlation structure are treated as complementary manifestations of a common compositional principle in finite-size catalytic thermodynamics, is currently under development and will be explored in a companion work~\cite{CompanionQ2Laws}.

\section{Conclusion}
\label{sec::conc}

We have developed a complementary formulation of the generalized second laws of quantum thermodynamics based on non-additive divergences and the associated non-additive free energies. The pseudo-additive composition law of the non-additive divergence retains an explicit catalyst-dependent correction in the generalized monotonicity conditions, making the thermodynamic role of the catalyst structurally visible in the free-energy balance. This formulation does not alter the underlying transition preorder: the work distance is invariant (Appendix~\ref{appG}), and the same state conversions are certified as in the standard R\'enyi-based framework. What changes is the mode of thermodynamic bookkeeping. 

In uncorrelated catalytic thermal transformations, this bookkeeping makes explicit the finite-size feasibility constraints that depend not only on the return error and catalyst dimension, but also on how the catalytic deviation is distributed across the catalyst spectrum. For correlated catalytic thermal transformations, we have shown through explicit finite-dimensional examples that no family of second laws formulated solely in terms of reduced-state quantities can provide a complete characterization of thermodynamic accessibility. Even when the system and catalyst marginals are fixed, and even when the total amount of correlation is matched, different internal correlation structures can nonetheless yield different thermo-majorization verdicts. The relevant thermodynamic constraints are therefore intrinsically joint-state dependent.

These two sets of results reveal a sharp contrast. In uncorrelated catalysis, non-additivity can be used constructively to expose catalyst-dependent information that remains implicit in additive formulations. In correlated catalysis, the central limitation is more fundamental: no reduced-state description, additive or non-additive, can remain complete. A companion study~\cite{CompanionQ2Laws} extends the present framework to systems governed by deformed Gibbs equilibria, where the non-multiplicativity of the deformed-exponential renders the joint equilibrium state intrinsically correlated and the joint-state character of the second laws becomes structurally unavoidable. Technical details and extended derivations supporting these results are provided in the Appendices.

\begin{acknowledgments}
We are grateful to M. Horodecki, G. Gour, M. Tomamichel, and M. P. Müller for fruitful correspondence. 
\end{acknowledgments}

\appendix
\renewcommand{\theequation}{A\arabic{equation}}
\newcommand{\appsection}[1]{%
  \refstepcounter{section}
  \section*{Appendix \Alph{section}: #1}%
}

\setcounter{equation}{0}  
\appsection{Non-additive Entropy and Divergence}\label{appA}

In this Appendix, we provide the definition and important properties of the non-additive entropy, its associated divergence and their properties.

Consider probability distributions $p=\{p_i\}_{i=1}^n$, $q=\{q_i\}_{i=1}^n$. The non-additive entropy of order $\alpha\in\mathbb R\setminus\{0,1\}$~\cite{Tsallis1988} is defined as
\begin{equation}\label{eq:TsallisEntroptApp}
    H_\mtn{\alpha}(p)\coloneqq\frac{\sgn\alpha}{1-\alpha}\left( \sum_{i=1}^n p_i^{\alpha} -1\right),
\end{equation}
where 
\begin{equation}\label{eq:sgnApp}
\sgn\alpha\coloneqq
  \begin{cases}
      +1 & \alpha\geq 0 \\
      -1 & \alpha < 0 
   \end{cases}    \,.
\end{equation}

The following relation is important
\begin{equation}\label{eq:RenyiHTsallisH}
    \ln\left[\sgn{\alpha}(1-\alpha)H_\mtn{\alpha}(p)+1\right]=\sgn{\alpha}(1-\alpha)H_\mtn{\alpha}^\mtn{R}(p).
\end{equation}\\
Here $H_\mtn{\alpha}^\mtn{R}$ is the R\'enyi entropy of order $\alpha\in\mathbb R\setminus\{0,1\}$~\cite{Renyi1961} and is defined as
\begin{equation}\label{eq:RenyisEntroptApp}
    H_\mtn{\alpha}^\mtn{R}(p)\coloneqq\frac{\sgn\alpha}{1-\alpha}\ln\left( \sum_{i=1}^n p_i^{\alpha}\right),
\end{equation}
where $\ln(\cdot)$ is the natural logarithm function.

The non-additive divergence of order $\alpha\in\mathbb{R}$~\cite{BorlandPlastinoTsallis1998, BorlandPlastinoTsallis1999E} reads
\begin{eqnarray}\label{eq:TsallisDApp2}
    D_\mtn{\alpha}(p\|q) \coloneqq -\sgn{\alpha}\sum_{i=1}^{n}p_i\ln_\mtn{\alpha}\left(\frac{q_i}{p_i}\right),
\end{eqnarray}
where $\alpha-$logarithmic function~\cite{UmarovTsallisSteinberg2008} is 
\begin{equation}\label{eq:LnAlphaApp}
    \ln_\mtn{\alpha}(x)\coloneqq\frac{x^{1-\alpha}-1}{1-\alpha}.    
\end{equation}

The following relations hold
\begin{subequations}\label{eq:RenyiDTsallisD}
\begin{eqnarray}\label{eq:RenyiDTsallisD1}
    \exp_{2-\alpha}\left[D_\mtn{\alpha}(p\|q)\right]&=&\exp\left[D_\mtn{\alpha}^\mtn{R}(p\|q)\right],\\
\label{eq:RenyiDTsallisD2}    \ln\left[(\alpha-1)D_\mtn{\alpha}(p\|q)+1\right]&=&(\alpha-1)D_\mtn{\alpha}^\mtn{R}(p\|q),
\end{eqnarray}
\end{subequations}
where $\alpha-$exponential function~\cite{UmarovTsallisSteinberg2008} is
\begin{equation}\label{eq:ExpAlphaApp}
    \exp_\mtn{\alpha}{(x)}\coloneqq\left[1+(1-\alpha)x\right]^\frac{1}{(1-\alpha)},
\end{equation}
and $D_\mtn{\alpha}^\mtn{R}$ denotes the R\'enyi divergence of order  $\alpha\in\mathbb{R}\setminus\{1\}$, whenever the expression 
\begin{equation}\label{eq:RenyiDApp}
 D_\mtn{\alpha}^\mtn{R}(p\|q)\coloneqq\frac{\sgn\alpha}{\alpha-1}\ln\left(\sum_{i=1}^n p_i^{\alpha} q_i^{1-\alpha}\right)
\end{equation}
is well-defined~\cite{Renyi1961}. The relations in Eqs.~(\ref{eq:RenyiDTsallisD}) hold for all $\alpha \neq 1$ whenever the quantity $\sum_i p_i^\alpha q_i^{1-\alpha}$ is finite and strictly positive, so that both $D_\alpha(p\|q)$ and $D_\alpha^{\mtn{R}}(p\|q)$ are well-defined. For $\alpha>1$, this requires the support condition $\mathrm{supp}(p)\subseteq\mathrm{supp}(q)$. For $\alpha<0$, these expressions are well-defined under the additional assumption that $p_i,q_i>0$ for all $i$.

We now list some limiting cases of the non-additive entropy and divergence:

$\alpha=0$:
\begin{subequations}
\begin{empheq}{alignat=2}        
\label{eq:Div_0TApp} D_\mtn{0}(p\|q)&=\lim_{\alpha\rightarrow0^+}D_\mtn{\alpha}(p\|q)&&=-\left(\sum_{i:p_i\neq0}q_i-1\right)\\
\label{eq:Ent_0TApp} H_\mtn{0}(p)&=\lim_{\alpha\rightarrow0^+}H_\mtn{\alpha}(p)&&=\:\text{rank}(p)-1.
\end{empheq}
\end{subequations}

$\alpha=1$:
\begin{subequations}
\begin{empheq}{alignat=2}
\label{eq:Div_1App}    D_\mtn{1}(p\|q)&=-\sum_{i=1}^n p_i\ln{\frac{q_i}{p_i}}&&\coloneqq D^\mtn{KL}(p\|q) \\
\label{eq:Ent_1App}    H_\mtn{1}(p)&=-\sum_{i=1}^n p_i\ln{p_i}&&\coloneqq H(p)
\end{empheq}
\end{subequations}
where $D^\mtn{KL}(p\|q)$ is the Kullback-Leibler divergence~\cite{KullbackLeibler1951}, and $H(p)$ is the Shannon entropy~\cite{Shannon1948a,Shannon1948b,Shannon1949Book}.

$\alpha\rightarrow+\infty$:
\begin{subequations}
\begin{empheq}{alignat=2}
\label{eq:Div_inftyApp}    D_\mtn{+\infty}(p\|q)&=\lim_{\alpha\rightarrow +\infty}D_\mtn{\alpha}(p\|q)&&=+\ln\left(\max_i\frac{p_i}{q_i}\right)\\ 
\label{eq:Ent_inftyTApp}    H_\mtn{+\infty}(p)&=\lim_{\alpha\rightarrow +\infty}H_\mtn{\alpha}(p)&&=0.
\end{empheq}
\end{subequations}

$\alpha\rightarrow-\infty$:
\begin{subequations}
\begin{empheq}{alignat=2}
\label{eq:Div_-inftyRApp}    D_\mtn{-\infty}(p\|q)&=\lim_{\alpha\rightarrow -\infty}D_\mtn{\alpha}(p\|q)&&=+\ln\left(\max_i\frac{q_i}{p_i}\right) \\ 
\label{eq:Ent_-inftyRApp}    H_\mtn{-\infty}(p)&=\lim_{\alpha\rightarrow -\infty}H_\mtn{\alpha}(p)&&=+\ln\left(\min_i p_i\right).
\end{empheq}
\end{subequations}
Another useful relation reads~\cite{BrandaoEtAl2015}:
\begin{equation} \label{eq:AUsefulRelnApp}
    \alpha\,\sgn{\alpha}D_\mtn{1-\alpha}(p\|q)=(1-\alpha)\sgn{1-\alpha}D_\mtn{\alpha}(q\|p),\quad \forall\alpha\in\mathbb{R}\setminus\{0,1\}.
\end{equation}
which can be obtained by direct substitution.\\
Note also that 
    \begin{equation}\label{eq:PositiveDApp}
        D_\mtn{\alpha}(p\|q)\ge0,\qquad\forall\alpha\in\mathbb{R}.
    \end{equation}
The non-additive divergence $D_\mtn{\alpha}$ satisfies the data processing inequality
\begin{equation}\label{eq:DataPreocessApp}
    D_\mtn{\alpha}(p\|q)\geq D_\mtn{\alpha}(\Lambda(p)\|\Lambda(q)) ,\qquad\forall\alpha\in\mathbb{R},
\end{equation}
for a column-stochastic mapping $\Lambda$, $\sum_{j=1}^n\Lambda_\mtn{ji}=1$,
\begin{equation}\label{eq:ChannelApp}
    \Lambda:p\longmapsto\Lambda(p).
\end{equation}

\noindent In the range $\alpha\ge0$~\cite{FuruichiYanagiKuriyama2004}, we have 

\begin{subequations}
\begin{align}
\label{eq:MonotoneDiv1}D_\mtn{\alpha}(\Lambda(p)\|\Lambda(q))
&\equiv-\sum_{j=1}^{n}(\Lambda(p))_j\ln_\mtn{\alpha}\frac{(\Lambda(q))_j}{(\Lambda(p))_j}\\
\label{eq:MonotoneDiv2}
&=-\sum_{j=1}^{n}\left(\sum_{i=1}^{n}\Lambda_\mtn{ji} p_i\right)\left(\ln_\mtn{\alpha}\frac{\sum_{k=1}^{n}\Lambda_\mtn{jk} q_k}{\sum_{m=1}^{n}\Lambda_\mtn{jm} p_m}\right)\\
\label{eq:MonotoneDiv3}
&\le \sum_{j=1}^{n}\left[\sum_{i=1}^{n}\Lambda_\mtn{ji} p_i\left(\ln_\mtn{\alpha}\frac{q_i}{p_i}\right)\right] = \sum_{i=1}^{n}\left[ p_i\left(\ln_\mtn{\alpha}\frac{q_i}{p_i}\right)\right]\equiv D_\mtn{\alpha}(p\|q).
\end{align}
\end{subequations}

\noindent Note that the inequality \eqref{eq:MonotoneDiv3} follows from the generalized log-sum inequality~\cite{BorlandPlastinoTsallis1998,BorlandPlastinoTsallis1999E}:
\begin{equation}\label{eq:GenLogSumApp}
    -\left(\sum_{i=1}^{n}p_i\right)\ln_\mtn{\alpha}\left(\frac{\sum_{j=1}^{n}q_j}{\sum_{k=1}^{n}p_k}\right)
\leq 
-\sum_{i=1}^{n}p_i\ln_\mtn{\alpha}\left(\frac{q_i}{p_i}\right).
\end{equation}

\noindent One can generalize the result above to the case $\alpha<0$ by denoting $\alpha'\equiv(1-\alpha)$ so that $\alpha'>1$:
\begin{subequations}
    \begin{align}
\label{eq:ProofNegAlpha1}
(1-\alpha')\sgn{1-\alpha'}D_\mtn{\alpha'}(\Lambda(p)\|\Lambda(q))&\le(1-\alpha')\sgn{1-\alpha'}D_\mtn{\alpha'}(p\|q)\\
\label{eq:ProofNegAlpha2}
\alpha'\sgn{\alpha'}D_\mtn{(1-\alpha')}(\Lambda(p)\|\Lambda(q))&\le\alpha'\sgn{\alpha'}D_\mtn{(1-\alpha')}(p\|q)\\   
\label{eq:ProofNegAlpha3}
D_\mtn{(1-\alpha')}(\Lambda(p)\|\Lambda(q))&\le D_\mtn{(1-\alpha')}(p\|q)\\
\label{eq:ProofNegAlpha4}
D_\mtn{\alpha}(\Lambda(p)\|\Lambda(q))&\le D_\mtn{\alpha}(p\|q)\qquad\forall\alpha<0.
    \end{align}
\end{subequations}

\noindent Introducing the uniform probability distribution of dimensions $n$ as
    \begin{equation}\label{eq:UniformProbDistApp}
        \eta_n=\{\eta_\mtn{n,i}\}_{i=1}^n,\qquad \eta_\mtn{n,i}=1/n,\quad \forall i\in\{1,\cdots, n\},
    \end{equation}
one can easily show that the non-additive entropy and divergence are related to one another as

\begin{empheq}{alignat=2}
\label{eq:EntDivT}        
H_\mtn{\alpha}(p) &+ n^{1-\alpha} D_\mtn{\alpha}(p\|\eta_\mtn{n})&&=\frac{\sgn{\alpha}}{1-\alpha}\left( n^{1-\alpha}-1\right).
\end{empheq}

\renewcommand{\theequation}{B\arabic{equation}}
\setcounter{equation}{0}  
\appsection{Some Technical Heuristics} \label{appB}

We now restate, without proof, the following lemma, which sets the condition for a channel to be written as a direct sum of two channels that act separately on disjoint partitions of the total input/output space (see Lemma 16 in~\cite{BrandaoEtAl2015} for proof).\\

\begin{lemma} \label{lemma1} 
Consider a channel $\Lambda$ acting on the space of probability distributions -the probability simplex- $\mathcal{K}$, and $b\in\mathcal{K}$,
\begin{subequations}
    \begin{empheq}{alignat=3}
\label{LChannel1App}        \Lambda&:\mathcal{K}&&\rightarrow\mathcal{K}\\
\label{LChannel2App}               &:b&&\mapsto \Lambda(b).
    \end{empheq}
\end{subequations}

If there exist distributions $b^{0},b',b''\in\mathcal{K}$ of some fixed dimension $n=m+z$, such that for $b^0$ of full rank,
    \begin{equation}
\label{eq:Lemma16-1App} \Lambda(b^0)=b^0,
    \end{equation}
and for $b'=(b'_1,\cdots,b'_m,0,\cdots,0)$
    \begin{equation}
\label{eq:Lemma16-3App} \Lambda(b')\equiv b''=(b''_1,\cdots,b''_m,0,\cdots,0),        
    \end{equation}
then $\Lambda$ can be decomposed as
    \begin{equation}
\Lambda\equiv\Lambda^{(m)}\oplus\Lambda^{(z)},
    \end{equation}
where $\Lambda^{(m)}$ acts only on the first $m$ elements mapping them onto the same group of elements, while $\Lambda^{(z)}$ acts similarly on the remaining $z=n-m$ elements.
\end{lemma}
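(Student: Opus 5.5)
The idea is to show that every nonzero entry of $\Lambda$ sits inside one of two diagonal blocks: the first $m$ coordinates, or the last $z$ coordinates. Write $\Lambda$ as a column-stochastic matrix with entries $\Lambda_{ji}\ge 0$ and $\sum_j \Lambda_{ji}=1$. It then suffices to prove two things. First, $\Lambda_{ji}=0$ whenever $i\le m<j$, so mass from the first block never leaks into the second. Second, $\Lambda_{ji}=0$ whenever $j\le m<i$, so mass from the second block never leaks into the first. Once both hold, $\Lambda=\Lambda^{(m)}\oplus\Lambda^{(z)}$, and each block is itself column-stochastic because the column sums are unaffected by zero off-diagonal blocks.

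The first vanishing condition comes from the condition on $b'$. Since $b''_j=\sum_{i\le m}\Lambda_{ji}b'_i=0$ for every $j>m$, and all terms are nonnegative, we get $\Lambda_{ji}b'_i=0$ for $i\le m<j$. If $b'_1,\dots,b'_m$ are all strictly positive (the support of $b'$ is exactly the first block), this forces $\Lambda_{ji}=0$ on that off-diagonal block. I would state this support assumption explicitly, as in Lemma 16 of~\cite{BrandaoEtAl2015}. Otherwise, one restricts $m$ to the support of $b'$, and $b''$ must then vanish outside that support as well.

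The second vanishing condition is the main step and uses the full-rank fixed point $b^0$. Sum the fixed-point equation over $j\le m$ and use the first condition:
\begin{equation*}
\sum_{j\le m} b^0_j=\sum_{j\le m}\sum_{i\le m}\Lambda_{ji}b^0_i+\sum_{j\le m}\sum_{i>m}\Lambda_{ji}b^0_i=\sum_{i\le m}b^0_i+\sum_{i>m}\Big(\sum_{j\le m}\Lambda_{ji}\Big)b^0_i .
\end{equation*}
The middle equality uses $\sum_{j\le m}\Lambda_{ji}=1$ for $i\le m$, which holds because $\Lambda_{ji}=0$ for $j>m$ by the first condition. The identity then says $\sum_{i>m}\big(\sum_{j\le m}\Lambda_{ji}\big)b^0_i=0$. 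Every term is nonnegative and $b^0_i>0$ because $b^0$ has full rank, so $\Lambda_{ji}=0$ for all $j\le m<i$.

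The only delicate point is the support assumption on $b'$, namely that the listed entries $b'_1,\dots,b'_m$ are nonzero. Without it, only the columns indexed by $\mathrm{supp}(b')$ are constrained, and the block decomposition holds with respect to that support rather than the first $m$ indices. Everything else is a nonnegativity argument plus the mass-conservation identity above.
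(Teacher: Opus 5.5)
The paper does not prove this lemma itself. It restates it ``without proof'' and refers to Lemma 16 of \cite{BrandaoEtAl2015}, so your argument supplies what is omitted.

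Your proof is correct and is the standard block-matrix argument. The block from the first $m$ inputs to the last $z$ outputs vanishes by nonnegativity together with $b'_i>0$. The reverse block vanishes once you sum the fixed-point equation over the first $m$ coordinates, use column-stochasticity, and use $b^0_i>0$.

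Your positivity caveat is genuinely needed, and the paper's statement leaves it out. Take $n=3$, $m=2$, and let $\Lambda$ be the permutation exchanging coordinates $2$ and $3$. It fixes the full-rank $b^0=(1/3,1/3,1/3)$ and maps $b'=(1,0,0)$ to $b''=(1,0,0)$, which has the required form. Yet $\Lambda$ is not of the form $\Lambda^{(2)}\oplus\Lambda^{(1)}$.

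Your fallback of restricting to $\mathrm{supp}(b')$ also needs $b''$ to vanish outside $\mathrm{supp}(b')$. You state this in your second paragraph but drop it in your last.
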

    
We now prove that any probability distribution with irrational values can be arbitrarily approximated by those with all rational components. Note that a similar lemma, i.e. Lemma 15 in~\cite{BrandaoEtAl2015} shows that this is possible for any full-rank probability distribution.
    
\begin{lemma} \label{lemma2}     
     Let $\bbar=\{\bbar_i\}_{i=1}^n$ be an $n$-dimensional probability distribution with decreasing order having $z$ zero components and $m$ non-zero components, of which some are possibly irrational,
    \begin{eqnarray}\label{eq:SetBbar1App}
     \bbar_1\ge\cdots\ge\bbar_m>\bbar_{m+1}=\cdots=\bbar_{m+z}=0.
    \end{eqnarray}
Then, $\mathcal{K}$ being the probability simplex and $b\in\mathcal{K}$, there exists a valid channel $L$,
\begin{subequations}
    \begin{empheq}{alignat=3}
\label{eq:EChannel1App}      L&:\mathcal{K}&&\rightarrow\mathcal{K}\\
\label{eq:EChannel2App}         &:b&&\mapsto L(b),
    \end{empheq}
\end{subequations}
which is composed of channels $\lambda$ and $\Lambda$,
\begin{equation}\label{eq:EChannel0App}        
    L\equiv\lambda\circ\Lambda,
\end{equation}    
such that 

\begin{enumerate}[label=(\roman*)]
    \item $\Lambda(\bbar)=b'\equiv\{b'_i\}_{i=1}^n$,

\begin{equation}\label{eq:EBbarApp}
    b'_1\ge\cdots\ge b'_m> b'_\mtn{m+1}=\cdots=b'_n=0,\qquad b'_i\in\mathbb{Q},\quad \forall i=1,2,\cdots,n,
\end{equation}
and for any chosen $M>0$ 
\begin{equation}\label{eq:StatDistance1App}
    \|\bbar-b'\|<\frac{m}{M},
\end{equation}
\item $\lambda(b')=q\equiv\{q_i\}_{i=1}^n$,
\begin{equation}
    q_1\ge\cdots q_m\ge q_\mtn{m+1}\ge\cdots\ge q_n>0, \qquad q_i\in\mathbb{Q}, \quad \forall i=1,2,\cdots,n, 
\end{equation}
and for any chosen $M'>f'>0$
\begin{equation}\label{eq:StatDistanceNext1App}
    \|b'-q\|<\frac{f'}{M'}.
\end{equation}
\end{enumerate}

\end{lemma}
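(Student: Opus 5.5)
Both channels are built explicitly as stochastic matrices. The only care needed is to keep the rational approximations ordered and strictly positive where the statement requires it.

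\emph{Step 1: the channel $\Lambda$.} Fix $M>0$. Since the rationals are dense, for each $i\le m-1$ pick a rational $b'_i$ with $0\le \bbar_i-b'_i<1/M$. Choose these inductively so that $b'_1\ge\cdots\ge b'_{m-1}$. Ties $\bbar_i=\bbar_{i+1}$ are handled by truncating both to the same rational, for instance $b'_i=\lfloor M\bbar_i\rfloor/M$, taking $M$ large enough that the gap to $\bbar_m$ persists. Set $b'_m=1-\sum_{i<m}b'_i$, which is rational, and $b'_i=0$ for $i>m$. Then $b'_m=\bbar_m+\sum_{i<m}(\bbar_i-b'_i)\ge\bbar_m>0$, and $\|\bbar-b'\|_1\le 2(m-1)/M$. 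Rescaling $M$ absorbs the constant and gives Eq.~\eqref{eq:StatDistance1App}. The ordering $b'_{m-1}\ge b'_m$ needs a little more care. Enlarge $M$ so that $\sum_{i<m}(\bbar_i-b'_i)$ is smaller than the smallest strict gap in $\bbar$. If $\bbar_m=\bbar_{m-1}$, the mass deficit can instead be spread over the whole tied block; this is the fiddly point.

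To realize the map $\bbar\mapsto b'$, let $\Lambda$ act as the identity on the zero block. On the first $m$ coordinates, coordinate $i<m$ keeps fraction $b'_i/\bbar_i\le1$ of its mass and sends the remainder to coordinate $m$. This matrix is column-stochastic and maps $\bbar$ to $b'$.

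\emph{Step 2: the channel $\lambda$.} Fill the zeros. Choose a rational $\delta>0$ with $z\delta<b'_m$ and $2z\delta<f'/M'$, small enough not to disturb the ordering. Define $q_i=b'_i$ for $i<m$, $q_m=b'_m-z\delta$, and $q_{m+j}=\delta$. All entries are rational and strictly positive, and $\|b'-q\|_1=2z\delta<f'/M'$. The ordering $q_m\ge\delta$ holds for small $\delta$; if $q_m$ drops below $q_{m-1}$ in a tied block, take the mass instead proportionally from all top entries, chosen rationally.

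The channel $\lambda$ moves fraction $z\delta/b'_m$ of coordinate $m$ onto the zero coordinates, split equally among them, and acts as the identity elsewhere. It is column-stochastic, so $L=\lambda\circ\Lambda$ is a valid channel with $L(\bbar)=q$.

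\emph{Main obstacle.} The analysis is routine. The real bookkeeping difficulty is keeping the decreasing order and the rationality of the compensating entry simultaneously. This is solved by choosing $M$ and $\delta$ smaller than all strict gaps of $\bbar$ and treating tied blocks uniformly.
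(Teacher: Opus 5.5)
Your construction is correct, but it rounds in the opposite direction from the paper, and that is where your ``fiddly point'' comes from.

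\textbf{Step 1 ($\Lambda$).} The paper rounds the top $m-1$ entries \emph{up}, $b'_i=\lceil M\bbar_i\rceil/M$, and pays for this out of the smallest nonzero entry via $\Pr(m\to j)=(b'_j-\bbar_j)/\bbar_m$. Then $b'_{m-1}\ge\bbar_{m-1}\ge\bbar_m\ge b'_m$ holds automatically, ties included. The only requirement is $M>(m-1)/\bbar_m$, so that $b'_m>0$.

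You round \emph{down} and push the surplus into coordinate $m$. This gives $b'_m\ge\bbar_m>0$ for free, but it puts the ordering at risk. When $\bbar_{m-1}=\bbar_m$ is irrational the ordering genuinely fails: $b'_{m-1}\le\bbar_m\le b'_m$, and equality is impossible. So the repair is not optional.

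Your repair does work. Let $\{k,\dots,m\}$ be the bottom tied block. Truncate only the entries $i<k$, and let each column $i<k$ send its surplus equally to the block. Every block entry then equals $(1-\sum_{i<k}b'_i)/(m-k+1)$. This is rational precisely because the block total is rational; state that explicitly, since rationality is the whole point.

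Your gap condition also needs tightening. Making $\sum_{i<m}(\bbar_i-b'_i)$ smaller than the gap is not enough, because $b'_{k-1}$ also drops by up to $1/M$. Something like $M\ge k/(\bbar_{k-1}-\bbar_m)$ suffices. Enlarging $M$ is harmless, since a bound $<m/M''$ with $M''\ge M$ implies $<m/M$. Also, the paper's $\|\cdot\|$ is the total-variation distance (half the $\ell_1$ norm), so your factor $2$ disappears without rescaling.

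\textbf{Step 2 ($\lambda$).} The paper fills the $z$ empty slots one at a time. Each step shaves an equal rational amount off all currently nonzero entries. Your single-source version, taking $z\delta$ from coordinate $m$ alone, is simpler and equally valid.

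Your contingency about ``$q_m$ dropping below $q_{m-1}$'' is unnecessary. The required order is $q_{m-1}\ge q_m$, and lowering $q_m$ can only help. The binding constraint is instead $q_m\ge q_{m+1}=\delta$, i.e.\ $\delta\le b'_m/(z+1)$. This is stronger than your stated $z\delta<b'_m$, but your ``small $\delta$'' covers it.
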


\begin{proof}
Choose first $M\in\mathbb{Z}^+$. Let  
\begin{equation}\label{eq:LeastIntApp}
    b'_i\equiv
    \begin{cases}
        \frac{\lceil M\bbar_i\rceil}{M},&i=1,\cdots,(m-1)\\
        &\\
        1-{\displaystyle\sum_{k=1}^{m-1}}b'_i, &i=m\\
        &\\
        0,&i=(m+1),\cdots,n
    \end{cases},
\end{equation}

where $\lceil M\bbar_i\rceil$ is the least integer greater than or equal to $M\bbar_i$. Consider the following amplitudes:

\begin{empheq}[left =Pr(i\ra j){\equiv}\empheqlbrace\,]{equation}\label{eq:TransitionProbsApp}
    \begin{alignedat}{3}
        & i\in \{1,\cdots,(m-1)\} & &\Ra \:     
        \begin{cases}
        1,&j=i\\
        0,&j\ne i
    \end{cases}\\
    &&&\\
        & i=m & &\Ra \:     \begin{cases}
        \frac{b'_j-\bbar_j}{\bbar_m},&j\in \{1,\cdots,(m-1)\}\\
        &\\
        1-{\displaystyle\sum_{k=1}^{m-1}}\frac{b'_k-\bbar_k}{\bbar_m},&j=m\\
        &\\
        0,&j\in\{(m+1),\cdots,(m+z)\}
    \end{cases}\\
    &&&\\
        & i\in \{(m+1),\cdots,(m+z)\} & &\Ra \: 
            \begin{cases}
        1,&j=i\\
        0,&j\ne i
    \end{cases}\\
    \end{alignedat}.
\end{empheq}
Transition probabilities given by \eqref{eq:TransitionProbsApp} satisfying
\begin{equation}\label{eq:ValidChannelCndApp}
    \displaystyle\sum_{j=1}^{m+z}Pr(i\ra j)=1, \qquad\forall i=1,\cdots,m+z,
\end{equation}
defines a valid channel, $\Lambda\equiv\Lambda^{(m)}\oplus\Lambda^{(z)}$, so that
\begin{equation}\label{EbbarEqQApp}
    \Lambda(\bbar)\equiv b'=\{b'_i\}_{i=1}^{n}
\end{equation}
is an $n$-dimensional probability distribution with decreasing order having $z$ zero components and  $m=n-z$ non-zero components all rational.

With these settings, the proof of \eqref{eq:StatDistance1App} is immediate: 
\begin{subequations}
\begin{align}\label{eq:StatDistance2App}
    \|\bbar-\Lambda(\bbar)\|&\equiv\|\bbar-b'\|\\
    &=\frac{1}{2}\left[\displaystyle\sum_{i:\bbar_i>b'_i}(\bbar_i-b'_i)+\displaystyle\sum_{i:b'_i>\bbar_i}(b'_i-\bbar_i)\right]\\
    &=\frac{1}{2}\left[\displaystyle\sum_{i:\bbar_i>b'_i}(\bbar_i-b'_i)+\cancel{\displaystyle\sum_{i:b'_i>\bbar_i}(b'_i-\bbar_i)}+\displaystyle\sum_{i:\bbar_i\ge b'_i}(\bbar_i-b'_i)-\cancel{\displaystyle\sum_{i:\bbar_i\ge b'_i}(\bbar_i-b'_i)}\right]\\
    &=\displaystyle\sum_{i:\bbar_i>b'_i}(\bbar_i-b'_i)\\
\label{eq:StatDistance4App}  &\equiv \bbar_m-b'_m\\
    &=\left[\bbar_m-Pr(m\ra m)\bbar_m\right]\\
    &=\frac{1}{M}\displaystyle\sum_{k=1}^{m-1}\left(\lceil Mb_k\rceil-Mb_k\right)\le\frac{m-1}{M}<\frac{m}{M},
\end{align}
\end{subequations}
where in \eqref{eq:StatDistance4App} $m^{th}$ components are
\begin{subequations}
\begin{eqnarray}\label{eq:MinBbarApp}
    \bbar_m&=&\min_i\bbar_i,\\
        b'_m&=&\min_ib'_i.
\end{eqnarray}
\end{subequations}
Therefore, $\|\bbar-\Lambda(\bbar)\|$ can be made arbitrarily small by choosing $M$. For $\Lambda$ characterized by \eqref{eq:TransitionProbsApp}, the statistical distance between a random distribution $b$ and $\Lambda(b)$ can easily be shown to be 
\begin{equation}\label{eq:StatDistance3App}
    \|b-\Lambda(b)\|<\frac{m}{M}\frac{b_m}{\bbar_m}<\frac{m}{M}\frac{1}{\bbar_m}.
\end{equation}
Here, $m$ being the index of the smallest component of the probability vector $\bbar$, $b_m$ is the $m^{th}$ component of $b$. Being random, $b_m$ need not be the smallest component of $b$.

Consider now an $n$-dimensional distribution, $b'$, of rank $m$ in the form of \eqref{eq:EBbarApp}, 
    \[
    b'_1\ge\cdots\ge b'_m> b'_\mtn{m+1}=\cdots=b'_\mtn{m+z}=0,\qquad b'_i\in\mathbb{Q},\quad \forall i=1,2,\cdots,(m+z).
    \]    
Perturbing $b'$, a full-rank distribution can be obtained in $z$ steps. At each step, one component, which is next to the last and least non-zero component, is perturbed to be the next last and least non-zero component, and then the greater ones perturbed accordingly to keep the normalization. The channel $\lambda$ is assumed to be composed of channels $\lambda_i, i=1,2,\cdots,z$,
    \begin{equation}\label{eq:PertChannelApp}
        \lambda\equiv\lambda_z\circ\lambda_{z-1}\circ\cdots\circ\lambda_1.
    \end{equation}

Here is the procedure:

    \begin{enumerate}
        \item[Step 1.]
            Choose $f'_1,M'_1\in\mathbb{Z}^+$ such that $f'_1<M'_1$ and
    \begin{equation}\label{eq:Proclambda1ChooseApp}
        b'_m\ge\left(1+\frac{1}{m}\right)\frac{f'_1}{M'_1}.
    \end{equation}
    Set $\lambda_1(b')$ as
    \begin{equation}\label{eq:Proclambda1App}
        \lambda_1(b')=\Biggl(\left(b'_1-\frac{f'_1/M'_1}{m}\right),\cdots,\left(b'_m-\frac{f'_1/M'_1}{m}\right),\frac{f'_1}{M'_1},0,\cdots,0\Biggr).
    \end{equation}
    
        \item[Step 2.]
            Choose $f'_2,M'_2\in\mathbb{Z}^+$ such that $f'_2<M'_2$ and
    \begin{equation}\label{eq:Proclambda2ChooseApp}
        \frac{f'_1}{M'_1}\ge\left(1+\frac{1}{m+1}\right)\frac{f'_2}{M'_2}.
    \end{equation}
    Set $\lambda_2\circ\lambda_1(b')$ as
    \begin{equation}\label{eq:Proclambda2App}
        \lambda_2(\lambda_1(b'))=\Biggl(\left(b'_1-\frac{f'_1/M'_1}{m}-\frac{f'_2/M'_2}{m+1}\right),\cdots,\left(b'_m-\frac{f'_1/M'_1}{m}-\frac{f'_2/M'_2}{m+1}\right),\left(\frac{f'_1}{M'_1}-\frac{f'_2/M'_2}{m+1}\right),\frac{f'_2}{M'_2},0,\cdots,0\Biggr).
    \end{equation}
    \item[$\vdots\;\;$]
    \item[Step $z$.] We present the result obtained by repeating similar steps:\\
    $\lambda(b')=q=\{q_i\}_{i=1}^{m+z}$, where\\ 
    \begin{equation}\label{eq:ProcZApp}
        q_i=\begin{cases}
            b'_i-{\displaystyle\sum_{k=1}^z}\frac{f'_k/M'_k}{m+k-1},&i=1,\cdots,m\\
            &\\
            \frac{f'_i}{M'_i}-{\displaystyle\sum_{k=i+1}^z}\frac{f'_k/M'_k}{m+k-1},&i=(m+1),\cdots,(m+z)
        \end{cases}.
    \end{equation}
    \end{enumerate}
Finally, the statistical distance between states $b'$ and $\lambda(b')$ is calculated as
\begin{subequations}
\begin{eqnarray}
    \label{eq:StatDistlambd1App}    \|b'-\lambda(b')\|&\equiv&\|b'-q\|\\
    \label{eq:StatDistlambd2App}                      &=&{\displaystyle\sum_{k=1}^z}\frac{f'_k}{M'_k}\\
    \label{eq:StatDistlambd3App}                      &\equiv&\frac{f'}{M'},
\end{eqnarray}
\end{subequations}
and can be made arbitrarily small by choosing $f'$ and $M'$.
\end{proof}

\renewcommand{\theequation}{C\arabic{equation}}
\setcounter{equation}{0}  
\appsection{Trumping and Non-additive Divergence} \label{appC}

In this section, after restating the trumping condition we are going to examine how it is related to non-additive divergences.

\begin{theorem}\label{thm1}  
Consider $N$-dimensional probability distributions $\tilde{p},\tilde{p}'\in(\mathbb{R^+})^N$ that do not both contain components equal to zero, and $\tilde{p}\ne\tilde{p}'$. Then $\tilde{p}$ can be trumped into $\tilde{p}'$, that is, $\tilde{p}$ can be transformed to $\tilde{p}'$ by means of catalytic thermal operations if and only if
    \begin{equation}\label{eq:TrumpingCondApp}
        f_\mtn{\alpha}(\tilde{p})>f_\mtn{\alpha}(\tilde{p}'),\qquad\forall\alpha\in\mathbb{R},
    \end{equation}
    where the functions $f_\mtn{\alpha}$ for probability vectors $x\in\mathbb{(R^+)}^N$ are given as
    \begin{equation}\label{eq:TrumpingFncsApp}
        f_\mtn{\alpha}(x)=
        \begin{cases}
            \ln\displaystyle\sum_{i=1}^N x_i^{\alpha},& 1<\alpha<+\infty\\
            \displaystyle\sum_{i=1}^N x_i\ln x_i,& 1=\alpha\\
            -\ln\displaystyle\sum_{i=1}^N x_i^{\alpha},& 0<\alpha<1\\
            -\displaystyle\sum_{i=1}^N \ln x_i,& 0=\alpha\\
            \ln\displaystyle\sum_{i=1}^N x_i^{\alpha},& -\infty<\alpha<0
        \end{cases}.
    \end{equation}
\end{theorem}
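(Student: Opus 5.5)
The proof splits into three steps: first reduce catalytic thermal operations to catalytic majorization, then prove necessity from additivity and Schur-convexity, and finally prove sufficiency by constructing a catalyst.

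\emph{Step 1: reduction to majorization.} I would first pass from catalytic thermal operations to ordinary catalytic majorization, $\tilde p\otimes c\succ\tilde p'\otimes c$ for some finite catalyst $c$. The route is the standard embedding (``Gibbs-rescaling'') argument of Ref.~\cite{BrandaoEtAl2015}.
\begin{itemize}
\item When the thermal distribution has rational entries, each level $i$ with weight $g_i/G$ is split into $g_i$ equiprobable sublevels.
\item Under this splitting, Gibbs-preserving stochastic maps become doubly stochastic maps on the enlarged space.
\item The irrational case, and the case of vectors with zero components, is handled by the rational, full-rank approximation of Lemma~\ref{lemma2}, together with the block decomposition of Lemma~\ref{lemma1}. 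Since the defining inequalities in \eqref{eq:TrumpingCondApp} are strict, they survive small perturbations; continuity of $f_\alpha$ on the relevant domain then transfers the statement back to the original vectors.
\end{itemize}
From here on I work with $\tilde p,\tilde p'$ under majorization.

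\emph{Step 2: necessity.} Two facts about the functions $f_\alpha$ in \eqref{eq:TrumpingFncsApp} do the work.
\begin{itemize}
\item \emph{Additivity.} Each $f_\alpha$ is additive on tensor products: $f_\alpha(x\otimes c)=f_\alpha(x)+f_\alpha(c)$. For $\alpha\neq 0,1$ this holds because $\sum_{ij}(x_ic_j)^\alpha$ factorizes. For $\alpha=1$ it follows from the Shannon chain rule, and for $\alpha=0$ from $\ln(x_ic_j)=\ln x_i+\ln c_j$ with uniform weights. The sign conventions in \eqref{eq:TrumpingFncsApp} are chosen precisely so that each $f_\alpha$ is Schur-convex, since $t^\alpha$, $t\ln t$, $-\ln t$ are convex or concave in the appropriate ranges.
\item \emph{Strict Schur-convexity.} Each $f_\alpha$ is Schur-convex, and strictly so for $\alpha\neq 0$ away from permutations.
\end{itemize}
Hence $\tilde p\otimes c\succ\tilde p'\otimes c$ gives $f_\alpha(\tilde p)+f_\alpha(c)\ge f_\alpha(\tilde p')+f_\alpha(c)$, and the catalyst term cancels. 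Strict inequality follows from strict Schur-convexity unless $\tilde p\otimes c$ is a permutation of $\tilde p'\otimes c$. In that case the sorted spectra force $\tilde p$ to be a permutation of $\tilde p'$, which the hypothesis $\tilde p\neq\tilde p'$ (understood up to reordering) excludes. For $\alpha\le 0$, $f_\alpha$ may take the value $+\infty$ on vectors with zeros. The hypothesis that the two vectors do not both contain zeros keeps these comparisons meaningful, and I would treat that case separately: the target $\tilde p'$ must then be full rank.

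\emph{Step 3: sufficiency, the main obstacle.} This is the Klimesh--Turgut direction, and I would follow a two-stage route.
\begin{itemize}
\item \emph{Multi-copy majorization.} I would show that the strict inequalities for all $\alpha\in\mathbb R$ imply $\tilde p^{\otimes n}\succ\tilde p'^{\otimes n}$ for some finite $n$. The natural tool is large-deviation analysis of the Lorenz curves of $\tilde p^{\otimes n}$. The quantities $\ln\sum_i x_i^\alpha$ are exactly the cumulant generating functions of $-\ln x_i$ under the uniform measure, and the bulk of the Lorenz-curve comparison is governed by their Legendre transforms. Strictness for all $\alpha$, including the endpoint cases $\alpha=0,1$ and $\alpha\to\pm\infty$ that control the tails, is what makes the comparison uniform along the whole curve at large $n$.
\item \emph{Duan's catalyst.} Given multi-copy majorization, I would take
\[
c=\bigoplus_{k=0}^{n-1}\tilde p^{\otimes k}\otimes\tilde p'^{\otimes(n-1-k)}\quad\text{(normalized)}.
\]
Then $\tilde p\otimes c$ and $\tilde p'\otimes c$ share all summands except one, namely $\tilde p^{\otimes n}$ versus $\tilde p'^{\otimes n}$. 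The majorization therefore reduces to the multi-copy one, since majorization is preserved under direct sums with common blocks.
\end{itemize}
I expect the large-deviation step to be the genuine difficulty, especially controlling the Lorenz-curve endpoints governed by $\alpha\le 0$ and $\alpha\to\infty$. If it resists a clean treatment, I would fall back on invoking Klimesh's and Turgut's published proofs directly.
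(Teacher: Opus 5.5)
\textbf{The gap is in Step~3: strict inequalities do not imply multi-copy majorization.} Your necessity argument (Step~2) is essentially sound. The sufficiency route fails because the first bullet of Step~3 is false: strict inequalities $f_\alpha(\tilde p)>f_\alpha(\tilde p')$ for all $\alpha\in\mathbb{R}$ do \emph{not} imply $\tilde p^{\otimes n}\succ\tilde p'^{\otimes n}$ for any finite $n$.

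Take $\tilde p=(0.6,0.2,0.1,0.1,0)$ and $\tilde p'=(0.6,0.16,0.16,0.04,0.04)$. Only one of them has a zero, so they satisfy the hypotheses.
\begin{itemize}
\item With $r=(0.6,0.4)$, the sorted partial sums of $\tilde p\otimes r$ are $0.36,0.6,0.72,0.8,0.86,0.92,0.96,1,1,1$. Those of $\tilde p'\otimes r$ are $0.36,0.6,0.696,0.792,0.856,0.92,0.944,0.968,0.984,1$. So $\tilde p$ trumps $\tilde p'$, and by your own Step~2 every $f_\alpha$ inequality holds strictly.
\item Yet for every $n\ge 1$, the $2n+1$ largest entries of $\tilde p^{\otimes n}$ are $0.6^n$, $n$ copies of $0.6^{n-1}\cdot 0.2$ and $n$ copies of $0.6^{n-1}\cdot 0.1$. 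Their sum is $0.6^{n-1}(0.6+0.3n)$.
\item The $2n+1$ largest entries of $\tilde p'^{\otimes n}$ are $0.6^n$ and $2n$ copies of $0.6^{n-1}\cdot 0.16$. Their sum is $0.6^{n-1}(0.6+0.32n)$.
\end{itemize}
Hence multi-copy majorization fails for every $n$, and the Duan catalyst is never available. This is an instance of the known strict gap between catalytic and multiple-copy transformations (Feng, Duan and Ying).

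\textbf{Why it fails and how to repair it.} The failure sits exactly at the endpoints you flagged, and it is structural rather than technical. Strictness for each finite $\alpha$ only gives $\tilde p_{\max}\ge\tilde p'_{\max}$ and $\tilde p_{\min}\le\tilde p'_{\min}$ in the limits $\alpha\to\pm\infty$. When an extremal entry coincides (here $0.6=0.6$), the top of the Lorenz curve of $\tilde p^{\otimes n}$ is controlled by combinatorial multiplicities, such as the $n$ entries $\tilde p_{\max}^{\,n-1}\tilde p_2$, not by exponential rates. At that scale it simply replicates the violated single-copy partial sum ($0.9<0.92$) for every $n$, and no Legendre-transform estimate can see or cure this. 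A large-deviation argument can at best give multi-copy majorization under the extra hypotheses $\tilde p_{\max}>\tilde p'_{\max}$ and $\tilde p_{\min}<\tilde p'_{\min}$ (cf.\ Aubrun and Nechita).

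To close the gap you need a separate reduction for the degenerate case. For instance, if $a$ is a common extremal entry, write $\tilde p=(a)\oplus t$ and $\tilde p'=(a)\oplus t'$. Then:
\begin{itemize}
\item each $f_\alpha$ comparison for $(\tilde p,\tilde p')$ is equivalent to the same comparison for the renormalized $(t,t')$;
\item $t\otimes c\succ t'\otimes c$ implies $\tilde p\otimes c\succ\tilde p'\otimes c$.
\end{itemize}
So you may strip common extremal entries until both extremal inequalities are strict, and only then invoke a rigorous multi-copy theorem. Without such a reduction you are left with citing Klimesh and Turgut. That is all the paper does: it gives no proof of this theorem and refers to those works.

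\textbf{Smaller points.}
\begin{itemize}
\item Step~1 is unnecessary, since the unweighted $f_\alpha$ already presuppose a uniform reference state.
\item Step~1's claim that strict inequalities survive small perturbations is false for the same reason. Moving weight $\eta\to0^+$ from the smallest to the largest entry of $\tilde p'$ above reverses $f_\alpha$ for all large $\alpha$.
\item In Step~2, $f_0$ is not additive; it satisfies $f_0(x\otimes c)=M f_0(x)+N f_0(c)$ for $x\in\mathbb{R}^N$, $c\in\mathbb{R}^M$. This still lets the catalyst term cancel once $c$ is restricted to its support, which you need anyway so that $f_\alpha(c)$ is finite for $\alpha\le 0$.
\item Strictness at $\alpha=0$ also holds, because $-\ln t$ is strictly convex.
\end{itemize}
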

The proof of the above theorem can be found in~\cite{Klimesh2004,Turgut2007,Klimesh2007}.

\begin{proposition}\label{prop1}     
For $N$ dimensional probability distributions $\tilde{p},\tilde{p}'\in(\mathbb{R^+})^N$, the following conditions are equivalent: 
    \begin{enumerate}[label=(\roman*)]
    \label{Prop4_Cnd1} \item For arbitrary $\epsilon>0$, there exists an $N$-dimensional probability distribution $\tilde{p}'_{\epsilon}$, such that $\|\tilde{p}'-\tilde{p}'_{\epsilon}\|\le\epsilon$, and $\tilde{p}$ can be trumped into $\tilde{p}'_{\epsilon}$~\cite{Klimesh2004,Turgut2007,Klimesh2007}, that is, for all $\alpha\in(-\infty,+\infty)$,
    \begin{equation}\label{eq:TrumpingCondeApp}
        f_\mtn{\alpha}(\tilde{p})>f_\mtn{\alpha}(\tilde{p}'_{\epsilon}).
    \end{equation}
        \item $\eta_N$ being the uniform probability distribution \eqref{eq:UniformProbDistApp} of dimensions $N$, the following inequality holds for $\alpha\in \mathbb{R}$
    \begin{equation}\label{eq:TrumpingCndDivApp}
        D_\mtn{\alpha}(\tilde{p}\|\eta_{N})\geq D_\mtn{\alpha}(\tilde{p}'\|\eta_{N}).
    \end{equation}
    \end{enumerate}
\end{proposition}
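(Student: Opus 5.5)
The plan is to reduce both conditions to the same family of scalar inequalities by showing that, for each order $\alpha$, $D_\alpha(\cdot\|\eta_N)$ is a strictly increasing function of the trumping function $f_\alpha$ of Theorem~\ref{thm1}. Write $S_\alpha(x)=\sum_i x_i^\alpha$. From \eqref{eq:tsallis_div1_defn} with $q=\eta_N$,
\begin{equation*}
D_\alpha(x\|\eta_N)=\frac{\sgn{\alpha}}{\alpha-1}\left[N^{\alpha-1}S_\alpha(x)-1\right].
\end{equation*}
I would check the sign bookkeeping case by case against \eqref{eq:TrumpingFncsApp}.
\begin{itemize}
\item For $\alpha>1$: $D_\alpha=(N^{\alpha-1}e^{f_\alpha}-1)/(\alpha-1)$.
\item For $0<\alpha<1$: $D_\alpha=(N^{\alpha-1}e^{-f_\alpha}-1)/(\alpha-1)$, and the prefactor is negative.
\item For $\alpha<0$: $D_\alpha=(N^{\alpha-1}e^{f_\alpha}-1)/(1-\alpha)$.
\item For $\alpha=1$: $D_1=\ln N+f_1$.
\end{itemize}
In every case, $D_\alpha(\tilde p\|\eta_N)\ge D_\alpha(\tilde p'\|\eta_N)$ holds if and only if $f_\alpha(\tilde p)\ge f_\alpha(\tilde p')$, and the same equivalence holds with strict inequalities.

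The order $\alpha=0$ needs separate treatment, because $D_0(\cdot\|\eta_N)$ vanishes identically on full-rank vectors. I would expand $S_\alpha(x)N^{\alpha-1}=1+\alpha\bigl(\ln N+\tfrac1N\sum_i\ln x_i\bigr)+O(\alpha^2)$. This gives
\begin{equation*}
\lim_{\alpha\to0^+}\frac{D_\alpha(x\|\eta_N)}{\alpha}=\frac{f_0(x)}{N}-\ln N.
\end{equation*}
So the $f_0$ condition is the first-order content of the $D_\alpha$ family at $\alpha\to0$. I would read (ii) at $\alpha=0$ in this limiting sense.

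\emph{(i)$\Rightarrow$(ii).} For each $\epsilon>0$, the strict trumping inequalities give $D_\alpha(\tilde p\|\eta_N)>D_\alpha(\tilde p'_\epsilon\|\eta_N)$ for all $\alpha$, including the rescaled $\alpha\to0$ quantity. Since $\tilde p'$ has strictly positive entries, each $D_\alpha(\cdot\|\eta_N)$ is continuous at $\tilde p'$; for $\alpha\le0$ this uses positivity of the entries. Letting $\epsilon\to0$ at fixed $\alpha$ then yields the non-strict inequality \eqref{eq:TrumpingCndDivApp}.

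\emph{(ii)$\Rightarrow$(i).} This is the main obstacle, since strict inequalities must be manufactured from non-strict ones, simultaneously for all $\alpha\in\mathbb{R}$. I would use the mixture $\tilde p'_\delta=(1-\delta)\tilde p'+\delta\eta_N$, which satisfies $\|\tilde p'-\tilde p'_\delta\|\le\delta$. The key fact is strict convexity of $x\mapsto D_\alpha(x\|\eta_N)$ for every $\alpha\neq0$:
\begin{itemize}
\item $S_\alpha$ is strictly convex for $\alpha>1$ or $\alpha<0$, and strictly concave for $0<\alpha<1$, where the negative prefactor restores convexity;
\item for $\alpha=1$ one uses the strict convexity of $\sum x\ln x$;
\item for $\alpha=0$ one uses the strict convexity of $-\sum\ln x_i$.
\end{itemize}
If $\tilde p'\neq\eta_N$, then $D_\alpha(\eta_N\|\eta_N)=0$ together with strict convexity gives
\begin{equation*}
D_\alpha(\tilde p'_\delta\|\eta_N)<(1-\delta)D_\alpha(\tilde p'\|\eta_N)\le D_\alpha(\tilde p'\|\eta_N)\le D_\alpha(\tilde p\|\eta_N)
\end{equation*}
pointwise in $\alpha$, and analogously $f_0(\tilde p'_\delta)<f_0(\tilde p')\le f_0(\tilde p)$. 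By the equivalence established in the first paragraph, this is exactly \eqref{eq:TrumpingCondeApp}. Theorem~\ref{thm1} requires only pointwise strictness, so no uniformity in $\alpha$ is needed. It also requires $\tilde p\neq\tilde p'_\delta$; since the mixtures are pairwise distinct, this can fail for at most one $\delta$, and shrinking $\delta$ avoids it.

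The degenerate case $\tilde p'=\eta_N$ is handled separately. If $\tilde p\neq\eta_N$, then $D_\alpha(\tilde p\|\eta_N)>0=D_\alpha(\eta_N\|\eta_N)$ by strict convexity, and $f_0(\tilde p)>f_0(\eta_N)$, so I would take $\tilde p'_\epsilon=\eta_N$. If $\tilde p=\tilde p'=\eta_N$, the statement reduces to the trivial identity transformation, which I would treat as excluded by the hypothesis $\tilde p\neq\tilde p'$ of Theorem~\ref{thm1}.
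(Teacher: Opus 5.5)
Your proposal is correct. Its backbone is exactly the content of the paper's proof: for every $\alpha\neq0$, $D_\alpha(\cdot\|\eta_N)$ is a strictly increasing function of $f_\alpha$. The paper stops there and treats the equivalence as a pure reparametrization, which leaves two gaps.
\begin{itemize}
\item At $\alpha=0$ the paper replaces $f_0$ by $\lim_{\alpha\to0^+}f_\alpha$, i.e.\ a rank comparison. This is not the $f_0(x)=-\sum_i\ln x_i$ of Theorem~\ref{thm1}, and it is vacuous on full-rank vectors, where $D_0(\cdot\|\eta_N)\equiv0$.
\item The paper passes between strict inequalities for $\tilde p'_\epsilon$ and non-strict ones for $\tilde p'$ in a single ``we can therefore write''. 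It never constructs the approximants that (ii)$\Rightarrow$(i) requires.
\end{itemize}
Your first-order expansion $D_\alpha(x\|\eta_N)/\alpha\to f_0(x)/N-\ln N$ fills the first gap. Your mixture $(1-\delta)\tilde p'+\delta\eta_N$, combined with strict convexity, fills the second by manufacturing the strict inequalities pointwise in $\alpha$. What the paper's version gains is only brevity.

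Two small refinements. First, you need not reinterpret (ii) at $\alpha=0$. The inequality $f_0(\tilde p)\ge f_0(\tilde p')$ already follows from (ii) at small $\alpha>0$: divide by $\alpha$ and let $\alpha\to0^+$. Second, your check that $\tilde p\neq\tilde p'_\delta$ is unnecessary, since the strict inequality at $\alpha=1$ already forces it.

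Finally, the case $\tilde p=\tilde p'=\eta_N$ is a genuine counterexample to the proposition as stated, not merely an excluded convention. There (ii) holds with equality, but (i) is impossible because $\eta_N$ minimizes every $f_\alpha$. A hypothesis such as $\tilde p\neq\tilde p'$ (inherited from Theorem~\ref{thm1}) has to be added to the statement.
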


\begin{proof}
Note first that if $\tilde{p}'_{\epsilon}$ is not of full rank, perturbing it with a small amount of noise, one can always obtain a full-rank distribution as proved in Lemma~\ref{lemma2}, since adding a small amount of noise is a valid thermal operation~\cite{BrandaoEtAl2015}. Hence, without loss of generality, $\tilde{p}'_{\epsilon}$ is assumed to be of full rank.

    Now, let us check how $f_\mtn{\alpha}$ and $D_\mtn{\alpha}$ are related:
    
\begin{enumerate}
        \item $1<\alpha<+\infty$:
        \begin{eqnarray}
\label{eq:Prop4_1a}   f_\mtn{\alpha}(\tilde{p})>f_\mtn{\alpha}(\tilde{p}'_{\epsilon})\equiv \ln\sum_{i=1}^N(\tilde{p}_i)^{\alpha}>\ln\sum_{i=1}^N(\tilde{p}'_{\epsilon,i})^{\alpha} &\Longleftrightarrow& \sum_{i=1}^N(\tilde{p}_i)^{\alpha}>\sum_{i=1}^N(\tilde{p}'_{\epsilon, i})^{\alpha}\\
\label{eq:Prop4_1b} &\Longleftrightarrow& D_\mtn{\alpha}(\tilde{p}\|\eta_N)> D_\mtn{\alpha}(\tilde{p}'_{\epsilon}\|\eta_N)
\end{eqnarray}
        
        \item $\alpha=1$:
        \begin{equation}\label{eq:Prop4_2}
f_{1}(\tilde{p})>f_{1}(\tilde{p}'_{\epsilon})\equiv \sum_{i=1}^N\tilde{p}_i\ln \tilde{p}_i>\sum_{i=1}^N\tilde{p}_{\epsilon, i}\ln \tilde{p}_{\epsilon, i}\Longleftrightarrow D_{1}(\tilde{p}\|\eta_N)> D_{1}(\tilde{p}'_{\epsilon}\|\eta_N)
        \end{equation}
        
        \item $0<\alpha<1$:\
        \begin{eqnarray}
\label{eq:Prop4_3a} f_\mtn{\alpha}(\tilde{p})>f_\mtn{\alpha}(\tilde{p}'_{\epsilon})\equiv     -\ln\sum_{i=1}^N(\tilde{p}_i)^{\alpha} > -\ln\sum_{i=1}^N (\tilde{p}_{\epsilon, i})^{\alpha} &\Longleftrightarrow&
\frac{1}{\alpha -1} \sum_{i=1}^N(\tilde{p}_i)^{\alpha} > \frac{1}{\alpha -1}\sum_{i=1}^N (\tilde{p}_{\epsilon, i})^{\alpha}\\
\label{eq:Prop4_3b} &\Longleftrightarrow& D_\mtn{\alpha}(\tilde{p}\|\eta_N)> D_\mtn{\alpha}(\tilde{p}'_{\epsilon}\|\eta_N)
        \end{eqnarray}
        
        \item $\alpha=0$:\\ 
        
        For $\tilde{p}_{\epsilon}$ being of full rank, we have
        \begin{align}
\label{eq:Prop4_4a}   f_0(\tilde{p})>f_0(\tilde{p}'_{\epsilon}) &\equiv  &&\lim_{\alpha\rightarrow0^{+}}f_\mtn{\alpha}(\tilde{p})>\lim_{\alpha\rightarrow0^{+}}f_\mtn{\alpha}(\tilde{p}'_{\epsilon})\\
\label{eq:Prop4_4b} &\equiv &&\lim_{\alpha\rightarrow0^{+}}\left(-\log\sum_{i=1}^N(\tilde{p}_i)^{\alpha} \right)>\lim_{\alpha\rightarrow0^{+}}\left(-\log\sum_{i=1}^N(\tilde{p}_{\epsilon, i})^{\alpha} \right)\\
\label{eq:Prop4_4c} &\Leftrightarrow &&-\text{rank} (\tilde{p}) > -\text{rank} (\tilde{p}_{\epsilon})\\
\label{eq:Prop4_4d} &\equiv &&-H_0(\tilde{p}) >-H_0(\tilde{p}_{\epsilon})\\
\label{eq:Prop4_4e} &\equiv &&D_0(\tilde{p}\|\eta_N) >D_0(\tilde{p}_{\epsilon}\|\eta_N).
        \end{align}

        \item $-\infty<\alpha<0$:
        \begin{eqnarray}
\label{eq:Prop4_5a}   f_\mtn{\alpha}(\tilde{p})>f_\mtn{\alpha}(\tilde{p}'_{\epsilon})\equiv \ln\sum_{i=1}^N(\tilde{p}_i)^{\alpha}>\ln\sum_{i=1}^N(\tilde{p}'_{\epsilon, i})^{\alpha} &\Longleftrightarrow& \sum_{i=1}^N(\tilde{p}_i)^{\alpha}>\sum_{i=1}^N(\tilde{p}'_{\epsilon, i})^{\alpha}\\
\label{eq:Prop4_5b} &\Longleftrightarrow& D_\mtn{\alpha}(\tilde{p}\|\eta_N)> D_\mtn{\alpha}(\tilde{p}'_{\epsilon}\|\eta_N)
        \end{eqnarray}
    \end{enumerate}

For arbitrary $\epsilon>0$, given $\|\tilde{p}'-\tilde{p}'_{\epsilon }\|\leq\epsilon$, we can therefore write
\begin{eqnarray}\label{eq:Prop4LastIneqApp}
    f_\mtn{\alpha}(\tilde{p})>f_\mtn{\alpha}(\tilde{p}'_{\epsilon})\Longleftrightarrow D_\mtn{\alpha}(\tilde{p}\|\eta_N)\geq D_\mtn{\alpha}(\tilde{p}'\|\eta_N),\qquad \forall\alpha\in\mathbb{R}. 
\end{eqnarray}
\end{proof}

\begin{definition}\label{it:EmbeddingChn}
Let $p=\{p_i\}_{i=1}^n$, $q=\{q_i\}_{i=1}^n$ be two probability distributions, such that 
    \begin{equation}\label{eq:EmbeddedDistrApp}
        q_i=\frac{d_i}{N},\qquad  d_i\in\mathbb{Z}^+\qquad \forall i\in\{1,\cdots,n\}. 
    \end{equation}
An \textit{embedding channel} $\Gamma$ is a mapping from the space of probability distributions to itself, 
\begin{empheq}{alignat=4}
\label{eq:EmbeddChn_1App} \Gamma &: &&\mathcal{K}&&\rightarrow &&\mathcal{K}\\
\label{eq:EmbeddChn_2App}&:(p&&;q)&&\mapsto\Gamma(&&p;q),
\end{empheq}
and is defined in~\cite{BrandaoEtAl2015} as the fine-grained $N$-dimensional probability distribution 
    \begin{equation}\label{eq:EmbeddChnApp}
        \Gamma(p;q)\coloneqq(\tilde{p};q)
    \end{equation}
with
    \begin{eqnarray} 
\label{eq:EmbeddChnPApp}
    \tilde{p}&=&\Biggl(\underbrace{\overbrace{\frac{p_1}{d_1},\cdots,\frac{p_1}{d_1}}^{d_1},\overbrace{\frac{p_2}{d_2},\cdots,\frac{p_2}{d_2}}^{d_2},\cdots\cdots,\overbrace{\frac{p_n}{d_n},\cdots,\frac{p_n}{d_n}}^{d_n}}\Biggr),\\
\label{eq:SelfFineGrainedApp}
    \Gamma(q;q)=\eta_N=\tilde{q}&=&\Biggl(\overbrace{\frac{1}{N},\cdots\cdots\cdots\cdots\cdots\cdots\cdots\cdots\cdots\cdots\cdots\cdots\cdots\cdots\cdots,\frac{1}{N}}^{N}\Biggr).
    \end{eqnarray}    
The inverse mapping is also defined and denoted as $\Gamma^*$, where
\begin{empheq}{alignat=10}
\label{eq:InverseAndEmbeddingApp}
    &\Gamma^*&\circ &\Gamma  &  &(p;q)              & = & \Gamma^*(\tilde{p};q) &= &(p;q)             &\\
\label{eq:EmbeddingAndInverseApp}
   &\Gamma  &\circ  &\Gamma^*& & (\tilde{p};q) & = & \Gamma(p;q)                &= &(\tilde p;q). &
\end{empheq}
\end{definition}

\begin{lemma}\label{lemma3}   
Let $p=\{p_i\}_{i=1}^n$ be a probability distribution with decreasing order, $p_1\ge p_2\ge\cdots\ge p_n$, and $q\coloneqq\{\frac{d_i}{N}\}_{i=1}^n=\{q_i\}_{i=1}^n$ be a full-rank probability distribution of dimensions $n$ such that $d_i\in\mathbb Z^+$. Define the following fine-grained $N$-dimensional distribution
\begin{equation}\label{eq:FineGrainedApp}
    \tilde p\coloneqq\Gamma(p;q)\coloneqq\Biggl(\underbrace{\frac{p_1}{d_1},\cdots,\frac{p_1}{d_1}}_{d_1},\underbrace{\frac{p_2}{d_2},\cdots,\frac{p_2}{d_2}}_{d_2},\cdots\cdots\underbrace{\frac{p_n}{d_n},\cdots,\frac{p_n}{d_n}}_{d_n}\Biggr).
\end{equation}
Then, for all $\alpha\in(-\infty,\infty)$, and $\eta_N$, the $N$-dimensional uniform probability distribution \eqref{eq:SelfFineGrainedApp}, we have
\begin{equation}\label{eq:Lemma13App}
    D_\mtn{\alpha}(p\|q)=D_\mtn{\alpha}(\tilde p\|\eta_N).
\end{equation}
\end{lemma}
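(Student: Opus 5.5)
The plan is a direct computation from the definition \eqref{eq:tsallis_div1_defn}. For every order $\alpha$, $D_\mtn{\alpha}(p\|q)$ depends on $(p,q)$ only through the sum
\begin{equation*}
S_\mtn{\alpha}(p\|q)\coloneqq\sum_{i=1}^n p_i^{\alpha}q_i^{1-\alpha},
\end{equation*}
via the prefactor $\sgn{\alpha}/(\alpha-1)$ and the shift by $-1$. It therefore suffices to show $S_\mtn{\alpha}(p\|q)=S_\mtn{\alpha}(\tilde p\|\eta_N)$ for all $\alpha\neq 1$. The limiting orders $\alpha=0,1,\pm\infty$ are then handled either by passing to the limit on both sides, or by checking the explicit limiting formulas \eqref{eq:Div_0TApp}, \eqref{eq:Div_1App}, \eqref{eq:Div_inftyApp}, \eqref{eq:Div_-inftyRApp} directly.

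\textbf{Generic orders.} Group the $N$ entries of $\tilde p$ into the $n$ blocks of length $d_i$ defined in \eqref{eq:FineGrainedApp}. Every entry of block $i$ equals $p_i/d_i$, and the matching entry of $\eta_N$ equals $1/N$. Hence block $i$ contributes
\begin{equation*}
d_i\left(\frac{p_i}{d_i}\right)^{\alpha}\left(\frac{1}{N}\right)^{1-\alpha}=p_i^{\alpha}\left(\frac{d_i}{N}\right)^{1-\alpha}=p_i^{\alpha}q_i^{1-\alpha}.
\end{equation*}
Summing over $i$ gives the identity of the two sums. Equivalently, the likelihood ratio $\tilde p_j/\eta_{N,j}=Np_i/d_i=p_i/q_i$ is constant on each block, and each block carries total reference weight $q_i$.

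\textbf{Limiting orders.}
\begin{itemize}
\item For $\alpha=1$, the same block argument applies to $\sum_j \tilde p_j\ln(\tilde p_j/\eta_{N,j})$, since the ratio is constant on blocks.
\item For $\alpha\to\pm\infty$, the maxima of $p_i/q_i$ and of $q_i/p_i$ are unchanged, because the fine-grained ratio takes exactly the same set of values.
\item For $\alpha=0$, note that $\tilde p_j\neq 0$ exactly on the blocks with $p_i\neq0$. So $\sum_{j:\tilde p_j\neq0}1/N=\sum_{i:p_i\neq0}d_i/N=\sum_{i:p_i\neq 0}q_i$.
\end{itemize}

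\textbf{Main obstacle.} The only delicate step is well-definedness when $p$ has zero entries and $\alpha<0$ or $\alpha>1$. Here $q$ is full rank by hypothesis, so the case $\alpha>1$ is harmless. For $\alpha<0$ I would adopt the standing convention of Appendix~\ref{appA}: either the expressions are taken with full-rank $p$, or both sides are infinite together, since zero entries of $p$ correspond exactly to zero blocks of $\tilde p$. With that remark, the equality holds for all $\alpha\in(-\infty,\infty)$.
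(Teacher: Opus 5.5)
Your proposal is correct and follows essentially the same route as the paper: a block-by-block evaluation of $\sum_i p_i^{\alpha}q_i^{1-\alpha}$ for generic $\alpha$, followed by separate checks at $\alpha=0,1,\pm\infty$. The paper treats $\alpha=-\infty$ via $D_{-\infty}(p\|q)=D_{+\infty}(q\|p)$ rather than through your direct ratio-set argument, which is a cosmetic difference. Your option of passing to the limit on both sides of the finite-$\alpha$ identity is, if anything, the more robust way to cover the endpoint orders, because it does not rely on the exact form of the stated limiting formulas.
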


\begin{proof}
We consider distinct intervals in our proof. Hence, 
\begin{enumerate}[label=(\roman*)]
    \item For $\alpha\in\mathbb R\setminus\{-\infty,0,1,+\infty\}$:
\begin{subequations}
\begin{eqnarray}\label{eq:Lemma13Proof_1App}
    D_\mtn{\alpha}(p\|q)&=& \frac{\sgn\alpha}{\alpha-1}\left( \sum_{i=1}^np_i^{\alpha}q_i^{1-\alpha}-1 \right)\\
                           &=& \frac{\sgn\alpha}{\alpha-1}\left( \sum_{i=1}^n d_i \left(\frac{p_i}{d_i}\right)^{\alpha}\left(\frac{1}{N}\right)^{1-\alpha}-1 \right)\\
                           &=& \frac{\sgn\alpha}{\alpha-1}\left( \sum_{i=1}^N  \left(\frac{p_i}{d_i}\right)^{\alpha}\left(\frac{1}{N}\right)^{1-\alpha}-1 \right)\\
                           &\equiv& D_\mtn{\alpha}(\tilde p\|\eta_N).
\end{eqnarray}
\end{subequations}

    \item For $\alpha=0$:
\begin{subequations}
\begin{eqnarray}\label{eq:Lemma13Proof_2App}
    D_{0}(p\|q)&=&\lim_{\alpha\rightarrow0^{+}}D_\mtn{\alpha}(p\|q)\\
                      &=& -\left(\sum_{i:p_i\neq0}q_i -1\right)\\
                      &=& -\left(\sum_{i:p_i\neq0}d_i\frac{1}{N} -1\right)\\
                      &=& -\left(\sum_{i:\tilde p_i\neq0}\frac{1}{N} -1\right)\\
                      &\equiv& D_{0}(\tilde p\|\eta_N).
\end{eqnarray}
\end{subequations}

   \item For $\alpha=1$:
\begin{subequations}
\begin{eqnarray}\label{eq:Lemma13Proof_3App}
    D_{1}(p\|q)&=&\sum_{i=1}^n p_i\ln \left(p_i\frac{N}{d_i} \right)\\
                      &=&\sum_{i=1}^n d_i \frac{p_i}{d_i}\ln \left(\frac{p_i}{d_i}N\right)\\
                      &=&\sum_{i=1}^N \tilde p_i \ln\left(\frac{\tilde p_i}{\eta_{N,i}}\right)\\
                      &\equiv& D_{1}(\tilde p\|\eta_N).
\end{eqnarray}
\end{subequations}

   \item For $\alpha=+\infty$:
\begin{subequations}
\begin{eqnarray}\label{eq:Lemma13Proof_4App}
    D_{+\infty}(p\|q)&=& \ln\max_i\frac{p_i}{q_i}\\                                
                            &\equiv& D_{+\infty}(\tilde p\|\eta_N).
\end{eqnarray}
\end{subequations}

   \item For $\alpha=-\infty$:
\begin{subequations}
\begin{eqnarray}\label{eq:Lemma13Proof_5App}
    D_{-\infty}(p\|q)&=&D_{+\infty}(q\|p)\\
                            &=&D_{+\infty}(\eta_N\|\tilde p)\equiv D_{-\infty}(\tilde p\|\eta_N).
\end{eqnarray}
\end{subequations}

\end{enumerate}
Therefore, we have
\begin{subequations}
\begin{eqnarray}
\label{eq:Lemma13-1LastApp}    D_\mtn{\alpha}\left(p\|q\right)&=&D_\mtn{\alpha}\left(\tilde{p}\|\eta_N\right)\\
\label{eq:Lemma13-2LastApp}                                 &=&D_\mtn{\alpha}\left(\Gamma(p;q)\|\Gamma(q;q)\right),
\end{eqnarray}
\end{subequations}
for all $\alpha\in\mathbb{R}$.
\end{proof}

\renewcommand{\theequation}{D\arabic{equation}}
\setcounter{equation}{0}  
\appsection{Non-additive Divergence and Catalytic \texorpdfstring{$d$}{d}-Majorization}  \label{appD}

The relation between the catalytic thermal transformations and non-additive divergence $D_\mtn{\alpha}$ is given in the following theorem.

\begin{theorem}\label{thm2} 
Given probability distributions $p,p',q,q'$ with $q,q'$ of full rank, the following conditions are equivalent:
\begin{enumerate}[label=(\Roman*)]
    \item\label{it:Thm17iApp}$D_\mtn{\alpha}(p\|q)\geq D_\mtn{\alpha}(p'\|q')$, \quad$\forall\alpha\in\mathbb{R}$,
    \item\label{it:Thm17iiApp} For any $\epsilon>0$, there exist probability distributions $r,s$ of full rank, a distribution $p'_\epsilon$ and a stochastic mapping $\Lambda$ such that
    \begin{enumerate}
        \item\label{it:Thm17IIiApp} $\Lambda(p\otimes r)=p'_{\epsilon}\otimes r$ 
        \item\label{it:Thm17IIiiApp} $\Lambda(q\otimes s)=q'\otimes s$
        \item\label{it:Thm17IIiiiApp} $\|p'-p'_{\epsilon}\|\leq\epsilon$
    \end{enumerate}
\end{enumerate}

\end{theorem}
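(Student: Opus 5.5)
We will follow the strategy of Theorem~17 in~\cite{BrandaoEtAl2015}, replacing R\'enyi divergences by the non-additive $D_\alpha$. The two families are linked monotonically through Eq.~\eqref{eq:RenyiDTsallisD2}: for fixed $\alpha$, $x\mapsto \ln[(\alpha-1)x+1]/(\alpha-1)$ is strictly increasing, so $D_\alpha(p\|q)\ge D_\alpha(p'\|q')$ holds iff the corresponding R\'enyi inequality holds. The limiting orders $\alpha\in\{0,1,\pm\infty\}$ are handled directly with the explicit limits listed in Appendix~\ref{appA}.

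\textbf{(II)$\Rightarrow$(I).} We apply the data processing inequality~\eqref{eq:DataPreocessApp} to $\Lambda$ acting on the pair $(p\otimes r,\,q\otimes s)$. This gives
\[
D_\alpha(p\otimes r\|q\otimes s)\ge D_\alpha(p'_\epsilon\otimes r\|q'\otimes s).
\]
Expanding both sides with the pseudo-additive law~\eqref{eq:non_add_relation} yields
\[
\bigl[D_\alpha(p\|q)-D_\alpha(p'_\epsilon\|q')\bigr]\bigl[1+\sgn{\alpha}(\alpha-1)D_\alpha(r\|s)\bigr]\ge 0 .
\]
The bracket $1+(\alpha-1)D_\alpha(r\|s)=\sum_i r_i^\alpha s_i^{1-\alpha}>0$ for $\alpha\geq0$, since $r,s$ have full rank. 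For $\alpha<0$ the sign convention needs a short check; we will use Eq.~\eqref{eq:AUsefulRelnApp} to reduce it to $1-\alpha>1$. The catalyst factor is therefore strictly positive and can be cancelled, leaving $D_\alpha(p\|q)\ge D_\alpha(p'_\epsilon\|q')$. Letting $\epsilon\to0$ and using continuity of $D_\alpha$ in its first argument (valid because $q'$ has full rank) gives (I).

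\textbf{(I)$\Rightarrow$(II).} We first reduce to rational $q,q'$. Since both are full rank, they can be approximated by rational full-rank distributions (Lemma~15 of~\cite{BrandaoEtAl2015}, or Lemma~\ref{lemma2}), at the cost of small errors absorbed into $\epsilon$; this is done exactly as in~\cite{BrandaoEtAl2015}. With a common denominator $N$, the embedding channel $\Gamma$ of Definition~\ref{it:EmbeddingChn} and Lemma~\ref{lemma3} convert (I) into
\[
D_\alpha(\tilde p\|\eta_N)\ge D_\alpha(\tilde p'\|\eta_N)\quad \forall\alpha .
\]
Proposition~\ref{prop1} then gives, for every $\epsilon$, a $\tilde p'_\epsilon$ that is $\epsilon$-close to $\tilde p'$ and trumped by $\tilde p$. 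By Theorem~\ref{thm1}, there is a catalyst $r$ and a bistochastic map $\tilde\Lambda$ with $\tilde\Lambda(\tilde p\otimes r)=\tilde p'_\epsilon\otimes r$; being bistochastic, $\tilde\Lambda$ also preserves $\eta_N\otimes s$ with $s$ uniform. Conjugating with $\Gamma$ and $\Gamma^*$, i.e.\ taking $\Lambda=(\Gamma^*\otimes\mathrm{id})\circ\tilde\Lambda\circ(\Gamma\otimes\mathrm{id})$, gives $\Lambda(q\otimes s)=q'\otimes s$ and $\Lambda(p\otimes r)=p'_\epsilon\otimes r$ with $p'_\epsilon=\Gamma^*(\tilde p'_\epsilon)$. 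Since $\Gamma^*$ is a coarse-graining, it does not increase trace distance, so $\|p'-p'_\epsilon\|\le\epsilon$.

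\textbf{Main obstacle.} The delicate part is the handling of rank deficiency and equality cases. Theorem~\ref{thm1} requires strict inequalities and excludes the case where both vectors contain zeros. We therefore expect to need Lemma~\ref{lemma2} to perturb $\tilde p'$ toward full rank via noise that preserves $\eta_N$, and Lemma~\ref{lemma1} to ensure the channel splits correctly on supports. This is also where the $\alpha=0$ condition (the rank comparison of Eq.~\eqref{eq:Div_0TApp}) must be matched to the support constraints. We also need to confirm that the positivity of the pseudo-additive factor holds uniformly for $\alpha<0$ under the $\sgn{\alpha}$ convention.
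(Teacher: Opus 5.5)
Your route is the same as the paper's. For (II)$\Rightarrow$(I) you use data processing, pseudo-additivity and cancellation of the catalyst factor. For (I)$\Rightarrow$(II) you use rational approximation, the embedding of Lemma~\ref{lemma3}, Proposition~\ref{prop1}, Theorem~\ref{thm1}, and conjugation by the embedding and coarse-graining maps.

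The step that fails as written is the last step of (II)$\Rightarrow$(I), at $\alpha=0$. By Eq.~\eqref{eq:Div_0TApp}, $D_0(x\|q')=1-\sum_{i:x_i\neq0}q'_i$ depends only on the support of $x$. It is therefore not continuous in $x$, whatever the rank of $q'$. Suppose $p'$ is rank-deficient while every $p'_\epsilon$ is full rank, which is exactly what the forward construction produces. Then $D_0(p'_\epsilon\|q')=0$ for every $\epsilon$, whereas $D_0(p'\|q')>0$. So $D_0(p\|q)\ge D_0(p'_\epsilon\|q')$ tells you nothing about $D_0(p\|q)\ge D_0(p'\|q')$. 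The fix has two steps:
\begin{itemize}
\item take $\epsilon\to0$ only for $\alpha>0$, where $D_\alpha(\cdot\|q')$ is continuous on the simplex because $q'$ has full rank;
\item then let $\alpha\to0^+$ in $D_\alpha(p\|q)\ge D_\alpha(p'\|q')$, using that $D_0$ is defined as this right limit.
\end{itemize}
This is what the paper's remark $\lim_{\alpha\to0^+}D_\alpha(p'\|q')=D_0(p'\|q')$ is for. For $\alpha<0$, too, the limit holds only in the extended sense, not because $q'$ is full rank: if $p'$ is rank-deficient, then $D_\alpha(p'_\epsilon\|q')\to+\infty$, which forces $D_\alpha(p\|q)=+\infty$.

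Some smaller points:
\begin{itemize}
\item The $\alpha<0$ check you defer is immediate. For every real $\alpha$, $1+\mathrm{sgn}(\alpha)(\alpha-1)D_\alpha(r\|s)=\sum_i r_i^\alpha s_i^{1-\alpha}$, which is strictly positive for full-rank $r,s$. So Eq.~\eqref{eq:AUsefulRelnApp} is not needed. Your strict positivity is actually sharper than the paper's case table, which shows only non-negativity before dividing.
\item Your conversion map between $D_\alpha$ and $D^{\mathrm{R}}_\alpha$ needs a factor $\mathrm{sgn}(\alpha)$ inside the logarithm when $\alpha<0$; Eq.~\eqref{eq:RenyiDTsallisD2} as printed holds only for $\alpha\ge0$. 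Monotonicity survives the correction.
\item For irrational $q,q'$, the errors cannot simply be ``absorbed into $\epsilon$'', because condition (b) is exact. You need an exact channel taking a rational $\bar q'$ to $q'$, which is the reverse of what Lemma~\ref{lemma2} supplies. You must also show that (I) survives replacing $p$ by $L(p)$; data processing does not give this, since it lowers the left-hand side. The paper's part (B) is no more detailed on this point, so the step is equally incomplete in both.
\end{itemize}
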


\begin{proof}
We first consider the proof along the direction \ref{it:Thm17iApp}$\Ra$\ref{it:Thm17iiApp} so that we assume $D_\mtn{\alpha}$ satisfy the inequality in \ref{it:Thm17iApp}. Then, one should prove the existence of a stochastic mapping $\Lambda$ with properties endowed in \ref{it:Thm17iiApp}. To this aim,

    \begin{enumerate}
        \item[\textbf{(A)}] The case where components of $q$ and $q'$ are all rational:\\
Assume condition \ref{it:Thm17iApp} holds and $q_i, q'_i\in\mathbb Q$, that is, $q_i=\frac{d_i}{N}$, $q'_i=\frac{d'_i}{N}$,  $\forall i =1,\cdots,n$. By Lemma~\ref{lemma3}, we have \eqref{eq:Lemma13-1LastApp} i.e.
\[
D_\mtn{\alpha}(p\|q)=D_\mtn{\alpha}(\tilde{p}\|\eta_N),\qquad\forall\alpha\in\mathbb{R},
\]
with $\tilde{p}=\Gamma(p;q)$ given by \eqref{eq:FineGrainedApp} and $\eta_N=\Gamma(q;q)$  given by \eqref{eq:SelfFineGrainedApp}.
Due to Proposition~\ref{prop1}, for any $\epsilon>0$, there exists $\tilde{p}'_{\epsilon}$ such that $\|\tilde{p}'-\tilde{p}'_{\epsilon}\|\le\epsilon$ and $\tilde{p}'$ can be trumped into $\tilde{p}'_{\epsilon}$, that is,
\begin{equation}\label{eq:TrumpingCnddApp}
f_\mtn{\alpha}(\tilde{p})>f_\mtn{\alpha}(\tilde{p}'_{\epsilon}),\qquad\forall\alpha\in\mathbb{R},    
\end{equation}
where functions $f_\mtn{\alpha}$ are given in \eqref{eq:TrumpingFncsApp}. This means that there exists a catalyst in a state represented by the probability vector $r$ and a bi-stochastic mapping $\Phi$ such that~\cite{Klimesh2004,Turgut2007,Klimesh2007}
\begin{subequations}
\begin{eqnarray}
\label{eq:BiStochastic1MapApp}
    \Phi&:&\tilde{p}\otimes r\mapsto\Phi(\tilde{p}\otimes r)\\
\label{eq:BiStochastic2MapApp}        &:&\tilde{p}\otimes r\mapsto\tilde{p}'_{\epsilon}\otimes r.
\end{eqnarray}
\end{subequations}

As it satisfies \eqref{eq:BiStochastic2MapApp}, whether $r$ is of full rank or not becomes unimportant under a bi-stochastic mapping. Since the maximally mixed state is preserved under $\Phi$, satisfying the requirement \eqref{eq:Lemma16-1App}, it is always possible, through Lemma~\ref{lemma1}, to decompose $\Phi$ into a direct sum of two mappings, one acting only on the support of $r$, and one acting only on the rest part of $r$ which corresponds to components all equal to zero. So, without loss of generality, we can take $r$ of full rank. With the following map
\begin{equation}
    \Lambda=(\Gamma'^*\otimes I)\circ\Phi\circ(\Gamma\otimes I)
\end{equation}
(i) and (iii) in \ref{it:Thm17iiApp} are satisfied: 
\begin{subequations}
\begin{align}
    \Lambda:p\otimes r&\mapsto (\Gamma'^*\otimes I)\circ\Phi\circ(\Gamma\otimes I)(p\otimes r)\\
                      &\mapsto (\Gamma'^*\otimes I)\circ\Phi\Bigl(\Gamma(p;q)\otimes I(r)\Bigr)\\
                      &\mapsto (\Gamma'^*\otimes I)\circ\Phi(\tilde{p}\otimes r\\
                      &\mapsto (\Gamma'^*\otimes I)(\tilde{p}'_{\epsilon}\otimes r)\\
                      &\mapsto \Gamma'^*(\tilde{p}'_{\epsilon};q')\otimes I(r)\\
                      &\mapsto p'_{\epsilon}\otimes r.
\end{align}
\end{subequations}
Here, $I$ is identity, and '$'$' mappings apply to '$'$' distributions, that is,
\begin{subequations}
\begin{eqnarray}
\label{eq:InverseAndEmbeddingP1App}    \Gamma'(p';q')&=&\tilde{p}'\\
\label{eq:InverseAndEmbeddingP2App}    \Gamma'^*(\tilde{p}_{\epsilon}';q')&=&p_{\epsilon}'.
\end{eqnarray}
\end{subequations}
Let us check for (ii) in \ref{it:Thm17iiApp}:
\begin{subequations}
\begin{align}
    \Lambda:q\otimes s&\mapsto (\Gamma'^*\otimes I)\circ\Phi\circ(\Gamma\otimes I)(q\otimes s)\\
     &\mapsto (\Gamma'^*\otimes I)\circ\Phi\Bigl(\Gamma(q;q)\otimes I(s)\Bigr)\\
\label{eq:CatalystSApp}     &\mapsto (\Gamma'^*\otimes I)\circ\Phi\Bigl(\frac{\mathbb{I}}{N}\otimes s\Bigr),
\end{align}
\end{subequations}
where $\frac{\mathbb{I}}{N}=\eta_N$ is the $N$-dimensional uniform distribution. Take $s=\eta_\mtn{S}=\frac{\mathbb{I}}{S}$, a uniform distribution of dimension $S$
\begin{subequations}
\begin{align} 
\Lambda:q\otimes s&\mapsto (\Gamma'^*\otimes I)\circ\Phi\Bigl(\frac{\mathbb{I}}{N}\otimes\frac{\mathbb{I}}{S}\Bigr),\\
                  &\mapsto (\Gamma'^*\otimes I)\Bigl(\frac{\mathbb{I}}{N}\otimes\frac{\mathbb{I}}{S}\Bigr)\\
                  &\mapsto (\Gamma'^*\otimes I)\Bigl(\Gamma'(q';q')\otimes\frac{\mathbb{I}}{S}\Bigr)\\
                  &\mapsto\Biggl(\Gamma'^*\circ\Gamma'\Bigl(q';q'\Bigr)\otimes \frac{\mathbb{I}}{S}\Biggr)\\
                  &\mapsto q'\otimes s,
\end{align}
\end{subequations}
i.e., (ii) in \ref{it:Thm17iiApp} is also satisfied.

\item[\textbf{(B)}] The case where some components of $q$ and $q'$ are irrational:\\
Assume the condition \ref{it:Thm17iApp} holds and that, $\exists i,j\in\{1,\cdots,n\}$ such that $q_i,q_j'\in\mathbb{R}\setminus\mathbb{Q}$. Lemma~\ref{lemma2} lets us define channels $L,L'$ such that    
\begin{subequations}
\begin{empheq}{alignat=4}
\label{eq:ChannelQApp}            L&:q&&\mapsto L(q)&&=\qbar,\qquad\qquad\quad&&\qbar_i\in\mathbb{Q},\\
\label{eq:ChannelQPApp}           L'&:q'&&\mapsto L'(q')&&=\qbar',            &&\qbar_i'\in\mathbb{Q},\qquad\forall i\in\{1,\cdots,n\}.    
\end{empheq}
\end{subequations}

Let $q_n,q'_n$ be the smallest components of $q,q'$, respectively. Then, the statistical distances between the distributions and mapped ones found in Lemma~\ref{lemma2} for $M,M'\in\mathbb{Z}^+$ satisfy
\begin{subequations}
\begin{empheq}{alignat=6}
\label{eq:StatDistances1App}
    \|q  - L(q) \|  &\equiv\|q  -\qbar \|&&<\frac{n}{M}\\
\label{eq:StatDistances2App}
    \|p  - L(p)\|   &\equiv\|p  -\pbar \|&&<\frac{n}{M}\frac{1}{q_n}\\
\label{eq:StatDistances3App}
    \|q' - L'(q')\| &\equiv\|q' -\qbar'\|&&<\frac{n}{M'}\\
\label{eq:StatDistances4App}
    \|p' - L'(p')\| &\equiv\|p' -\pbar'\|&&<\frac{n}{M'}\frac{1}{q'_n}.
\end{empheq}
\end{subequations}
For $q,q'$ of full-rank, $q_n\ne0$ and $q'_n\ne0$. Since $M,M'$ are arbitrary, we can choose them as large as we want. So, for $0<n,q_n,q'_n<\infty$ we have

\begin{subequations}
\begin{empheq}{alignat=2}
\label{eq:MLimits1App}
        \lim_{M\ra\infty}D_\mtn{\alpha}(p\|q) &\ra D_\mtn{\alpha}(\pbar\|\qbar)\\
\label{eq:MLimits2App}
        \lim_{M'\ra\infty}D_\mtn{\alpha}(p'\|q') &\ra D_\mtn{\alpha}(\pbar'\|\qbar').
\end{empheq}
\end{subequations}
\end{enumerate}

Therefore, the inequality in condition \ref{it:Thm17iApp} is valid under mappings $L$ and $L'$, 
\begin{equation}\label{eq:it:Thm17iModifiedApp}
    D_\mtn{\alpha}(\pbar\|\qbar)\ge D_\mtn{\alpha}(\pbar'\|\qbar'),
\end{equation}
so that the direction \ref{it:Thm17iApp}$\Rightarrow$\ref{it:Thm17iiApp} is proved.

We now consider the reverse direction and aim to show the existence of a stochastic mapping $\Lambda$ with properties \ref{it:Thm17iiApp} necessitates that $D_\mtn{\alpha}$ satisfy the inequality in \ref{it:Thm17iApp}.

In order to obtain a maximally mixed distribution preserved under a bi-stochastic mapping, we chose $s$ in \eqref{eq:CatalystSApp} as an $S$-dimensional uniform probability distribution. Since it is arbitrary, choose $S$ so that the support of $s$ includes the support of $r$ in \ref{it:Thm17iiApp}, that is, $s\equiv\eta_\mtn{S}$ is the uniform distribution onto the support of $r$. Under these conditions, due to the monotonicity of the divergences, $D_\mtn{\alpha}$ can be written as the following inequality;
\begin{subequations}
\begin{empheq}{alignat=2}
\label{eq:MonotonicityD0App}       D_\mtn{\alpha}\Bigl(p\otimes r\|q\otimes s\Bigr)&\ge D_\mtn{\alpha}\Bigl(\Lambda(p\otimes r)\|\Lambda(q\otimes s)\Bigr)\\
\label{eq:MonotonicityDApp}        D_\mtn{\alpha}\Bigl(p\otimes r\|q\otimes s\Bigr)&\ge D_\mtn{\alpha}\Bigl(p'_{\epsilon}\otimes r\|q'\otimes s\Bigr), \quad\qquad\forall\alpha\in\mathbb{R}.
\end{empheq}
\end{subequations}

\noindent Moreover, since the non-additive divergence $D_\mtn{\alpha}$ is pseudo-additive~\cite{FuruichiYanagiKuriyama2004}, \eqref{eq:MonotonicityDApp} can further be written as

\begin{subequations}
\begin{empheq}{alignat=2}
\label{eq:MonotonicityDT1App}
        D_\mtn{\alpha}\Bigl(p\otimes r\|q\otimes s\Bigr)&\ge 
        D_\mtn{\alpha}\Bigl(p'_{\epsilon}\otimes r\|q'\otimes s\Bigr)\\
\label{eq:MonotonicityDT2App}        D_\mtn{\alpha}\Bigl(p\|q\Bigr)+D_\mtn{\alpha}\Bigl(r\|s\Bigr)&+\sgn{\alpha}(\alpha-1)D_\mtn{\alpha}\Bigl(p\|q\Bigr)D_\mtn{\alpha}\Bigl(r\|s\Bigr)\notag\\
&\ge \\ 
    D_\mtn{\alpha}\Bigl(p'_{\epsilon}\|q'\Bigr)+D_\mtn{\alpha}\Bigl(r\|s\Bigr)&+\sgn{\alpha}(\alpha-1)D_\mtn{\alpha}\Bigl(p'_{\epsilon}\|q'\Bigr)D_\mtn{\alpha}\Bigl(r\|s\Bigr).\notag
\end{empheq}
\end{subequations}
The finiteness of divergences whenever the second argument (here, $q,q'$ and $s$) is of full rank allows us to write
\begin{subequations}
\begin{empheq}{alignat=4}
    &\qquad\qquad \Biggl\{1+\sgn{\alpha}(\alpha-1) &&D_\mtn{\alpha}\Bigl(r\|s\Bigr)\Biggr\}
    &&\times\quad\Biggl[D_\mtn{\alpha}\Bigl(p\|q\Bigr)
    &&-D_\mtn{\alpha}\Bigl(p'_{\epsilon}\|q'\Bigr)\Biggr]\ge 0 \label{eq:MonotonicityDT3_0App}\\
    \alpha\in(-\infty,0)&: &&\ge0 &&\Ra &&\ge0\label{eq:MonotonicityDT3_1App}\\
    \alpha\in[0,1)&: &&\in[0,(1-\alpha)^{-1}] &&\Ra &&\ge0\label{eq:MonotonicityDT3_2App}\\
    \alpha=1&: &&=1 &&\Ra &&\ge0\label{eq:MonotonicityDT3_3App}\\
    \alpha\in(1,+\infty)&: &&\ge1 &&\Ra &&\ge0\label{eq:MonotonicityDT3_4App}
\end{empheq}
\end{subequations}
Therefore,
    \begin{equation}\label{eq:MonotonicityDT3App}
        D_\mtn{\alpha}\Bigl(p\|q\Bigr)\ge D_\mtn{\alpha}\Bigl(p'_{\epsilon}\|q'\Bigr),\qquad\forall\alpha\in\mathbb{R}.
    \end{equation}
    
\noindent Note that \eqref{eq:MonotonicityDT3_2App} can be seen to hold, since
    \begin{empheq}{alignat=4}
                & && &&\sum_{i=1}^np_i^{\alpha}q_i^{1-\alpha}&&\ge0\notag\\
\alpha\in[0,1)\Ra\quad& &&\quad\frac{1}{\alpha-1}&&\sum_{i=1}^np_i^{\alpha}q_i^{1-\alpha}&&\le0\notag\\
& &&\quad\frac{1}{\alpha-1}&&\Bigl(\sum_{i=1}^np_i^{\alpha}q_i^{1-\alpha}-1\Bigr)&&\le\Bigl(\frac{1}{1-\alpha}\Bigr)\notag\\
\label{eq:AnIntervalForDTAppD}D_\mtn{\alpha}(p\|q)\ge0\Ra\quad&0\:&&\le&&D_\mtn{\alpha}(p\|q)&&\le\Bigl(\frac{1}{1-\alpha}\Bigr).
    \end{empheq}

\noindent Note also that $D_\mtn{\alpha}$ is continuous with respect to $p'$ and $q'$ for $\alpha>0$. For $\alpha<0$, if $p'$ is of full rank, then $D_\mtn{\alpha}$ is continuous with respect to $p'$ and $q'$. If $p'$ is not of full rank, then $D_\mtn{\alpha}\ra\infty$ by preserving the previously listed inequalities and \eqref{eq:MonotonicityDT3App} in particular. For $\alpha=0$, we have $\lim_{\alpha\ra0^+}D_\mtn{\alpha}(p'\|q')=D_0(p'\|q')$. 

Therefore,
\begin{equation}\label{eq:Thm17EpsilonZeroLimitApp}
    \lim_{p'_{\epsilon}\ra p'}   D_\mtn{\alpha}(p'_{\epsilon}\|q')=D_\mtn{\alpha}(p'\|q'),\qquad\forall\alpha\in\mathbb{R},
\end{equation}
and hence 
\begin{equation}\label{eq:MonotonicityDRTApp}
        D_\mtn{\alpha}(p\|q)\ge D_\mtn{\alpha}(p'\|q'),\qquad\forall\alpha\in\mathbb{R}.
\end{equation}
\end{proof}
\textbf{Note:}\label{it:NoteForTheorem18App} We need this note from~\cite{BrandaoEtAl2015} and give it without proof.
    
\textit{Consider a system that has undergone a process under catalytic thermal operations. If the input state of the system is block-diagonal in the energy eigenbasis, then the diagonal elements of the output state of the system depend only on the diagonal elements of the catalyst, not on the non-diagonal elements of the state of the catalyst. Each block has the same energy and has a degeneracy of the side size of the block.
}

\renewcommand{\theequation}{E\arabic{equation}}
\setcounter{equation}{0}  
\appsection{The Second Laws With Non-additive Divergence}  \label{appE}

We now provide the condition for transformations between the block-diagonal states of a physical system through a generalized non-additive free energy expression $F_\mtn{\alpha}$. Apparently, this free energy expression includes the non-additive divergence.

\begin{theorem}\label{thm3}  
A system with Hamiltonian $H_\mtn{S}$ that is in contact with a heat reservoir at inverse temperature $\beta$ can be transformed with arbitrary accuracy from a state $\rho^i_\mtn{S}$ block-diagonal in energy eigenbasis to another state $\rho^f_\mtn{S}$, also block-diagonal in energy eigenbasis, under catalytic thermal operations if and only if the corresponding free energies satisfy the inequality

\begin{equation}\label{eq:Thm18_1App}
F_\mtn{\alpha}(\rho_\mtn{S}^i,H_\mtn{S})\geq F_\mtn{\alpha}(\rho^f_\mtn{S},H_\mtn{S}),\qquad\forall\alpha\in\mathbb{R},
\end{equation}

\noindent where $k_B$ is the Boltzmann constant. The generalized non-additive free energy of order $\alpha$ at temperature $T$ reads

\begin{equation}\label{eq:Thm18FreeEnergyDefApp}
F_\mtn{\alpha}(\rho_\mtn{S},H_\mtn{S})\coloneqq  -k_BT\big[\ln Z-D_\mtn{\alpha}(\rho_\mtn{S}\|\rho_\mtn{S}^\mtn{\beta})\big]  
\end{equation} 

\noindent with $\rho_\mtn{S}^\mtn{\beta}=e^{-\beta H_\mtn{S}}/Z$ is the thermal state for Hamiltonian $H_\mtn{S}$ with eigenvalue equation  
$H_\mtn{S}|\varepsilon_{i,g}\rangle=\varepsilon_{i}|\varepsilon_{i,g}\rangle$ for a $g$-fold degenerate state $|\varepsilon_{i,g}\rangle$ with energy $\varepsilon_{i}$, and $Z=\sum_{i,g}e^{-\beta\varepsilon_i}$ is the partition function. The condition \eqref{eq:Thm18_1App} can be more explicitly written as
\begin{equation}\label{eq:Thm18IntermsDApp}
-k_BT\big[\ln Z-D_\mtn{\alpha}(\rho^i_\mtn{S}\|\rho_\mtn{S}^\mtn{\beta})\big] \geq -k_BT\big[\ln Z-D_\mtn{\alpha}(\rho^f_\mtn{S}\|\rho_\mtn{S}^\mtn{\beta})\big].    
\end{equation}

\end{theorem}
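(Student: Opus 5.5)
The plan is to reduce Theorem~\ref{thm3} to Theorem~\ref{thm2}, in the same way that the R\'enyi version is derived in~\cite{BrandaoEtAl2015}. Three ingredients do the work: the Note at the end of Appendix~\ref{appD}, the correspondence between catalytic thermal operations and catalytic $d$-majorization, and the fact that $F_\alpha$ is an affine, strictly increasing function of $D_\alpha$ with the same additive constant $-k_BT\ln Z$ on both sides.

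\emph{Step 1.} First I reduce to the classical case. Both $\rho^i_S$ and $\rho^f_S$ are block-diagonal in the energy eigenbasis. By the Note following Theorem~\ref{thm2}, the diagonal of the output depends only on the diagonal of the catalyst. So I may work with the eigenvalue vectors $p,p'$ of $\rho^i_S,\rho^f_S$ and with the Gibbs vector $q=q'=\gamma_S$, which has full rank because $\beta<\infty$. Degenerate levels are handled by listing each eigenvector separately, so that $q_{i,g}=e^{-\beta\varepsilon_i}/Z$.

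\emph{Step 2.} Next I identify catalytic thermal operations with catalytic Gibbs-preserving stochastic maps. Suppose the transition is possible. Then a thermal operation on $S\otimes M$ with a catalyst $\sigma_M$ (block-diagonal without loss of generality) preserves the joint Gibbs state $\gamma_S\otimes\gamma_M$. Its diagonal action is a stochastic map $\Lambda$ with $\Lambda(p\otimes r)=p'_\epsilon\otimes r$ and $\Lambda(\gamma_S\otimes\gamma_M)=\gamma_S\otimes\gamma_M$.

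This is condition \ref{it:Thm17iiApp} of Theorem~\ref{thm2}, except that the catalyst's reference is $s=\gamma_M$ rather than uniform. The reverse direction of Theorem~\ref{thm2} uses only three things: data processing \eqref{eq:DataPreocessApp}, pseudo-additivity \eqref{eq:non_add_relation}, and the sign of the factor $1+\sgn{\alpha}(\alpha-1)D_\alpha(r\|s)$. None of these depends on $s$ being uniform, so it gives $D_\alpha(p\|\gamma_S)\ge D_\alpha(p'_\epsilon\|\gamma_S)$ for all $\alpha$. Letting $\epsilon\to 0$ and using the continuity argument \eqref{eq:Thm17EpsilonZeroLimitApp}, I obtain $D_\alpha(p\|\gamma_S)\ge D_\alpha(p'\|\gamma_S)$.

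Finally, I multiply by $k_BT>0$ and subtract $k_BT\ln Z$. This yields \eqref{eq:Thm18IntermsDApp}, and hence \eqref{eq:Thm18_1App}.

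\emph{Step 3.} Conversely, suppose \eqref{eq:Thm18_1App} holds for all $\alpha$. The same affine manipulation gives condition \ref{it:Thm17iApp} with $q=q'=\gamma_S$. Theorem~\ref{thm2} then supplies full-rank $r,s$ and a stochastic $\Lambda$ with $\Lambda(p\otimes r)=p'_\epsilon\otimes r$ and $\Lambda(\gamma_S\otimes s)=\gamma_S\otimes s$.

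I then realise this as a catalytic thermal operation. Give the catalyst the trivial Hamiltonian, which is possible because the construction in Theorem~\ref{thm2} uses $s$ uniform; its Gibbs state is then $s$. A stochastic map that preserves the Gibbs state of a system with block-diagonal input is implementable by thermal operations, possibly up to arbitrary accuracy, as in~\cite{BrandaoEtAl2015,HorodeckiOppenheim2013}. Since $\Lambda$ preserves $\gamma_S\otimes s$, it is such a Gibbs-preserving map and can be implemented in this way, which gives the transition with arbitrary accuracy.

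\emph{Main obstacle.} I expect the main obstacle to be Step 3's passage from a Gibbs-preserving stochastic map to an actual (catalytic) thermal operation, especially when $\gamma_S$ has irrational entries. For that case I would first invoke the rational approximation of Lemma~\ref{lemma2}, as in part (B) of the proof of Theorem~\ref{thm2}. This approximation is what forces the ``arbitrary accuracy'' qualifier.

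A secondary point is the range of $\alpha$. The paper states the result for $\alpha\ge0$, but \eqref{eq:Thm18_1App} is written for all $\alpha\in\mathbb{R}$. I would follow Theorem~\ref{thm2} and treat negative $\alpha$ by the duality \eqref{eq:AUsefulRelnApp}, noting that for full-rank $p'_\epsilon$ the negative orders are finite and continuous.
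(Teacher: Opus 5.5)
Your proposal is correct and is essentially the paper's proof. Both directions are reduced to Theorem~\ref{thm2} with $q=q'=\rho^\beta_S$, using the Note for block-diagonal states and the fact that $F_\alpha$ is an increasing affine function of $D_\alpha(\cdot\|\rho^\beta_S)$ with the same constant $-k_BT\ln Z$ on both sides. Your two remarks sharpen steps the paper passes over: the converse half of Theorem~\ref{thm2} never uses uniformity of $s$, so the catalyst's actual Gibbs state $\gamma_M$ can serve as reference where the paper simply calls the thermal states maximally mixed; and the Gibbs-preserving map must still be realized as a thermal operation. One addition for $\alpha<0$: condition (II) of Theorem~\ref{thm2} requires a full-rank catalyst $r$, which a physical catalyst need not have, and you can arrange this by restricting $\Lambda$ to the support of $r$ via Lemma~\ref{lemma1}.
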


\begin{proof}
\begin{enumerate} 
        \item[\textbf{(A)}]  Assuming that
\begin{equation}\label{eq:Thm18ProofA1App}
    D_\mtn{\alpha}(\rho^i_\mtn{S}\|\rho_\mtn{S}^\mtn{\beta})\geq D_\mtn{\alpha}(\rho^f_\mtn{S}\|\rho_\mtn{S}^\mtn{\beta}),
\end{equation}
so that, by Theorem~\ref{thm2}, there exists a channel, $\Lambda$, and a state represented by a uniform distribution, $\rho_\mtn{S}^\mtn{\beta}$, such that
\begin{equation}\label{eq:Thm18ProofA2App}
    \Lambda:(\rho^i_\mtn{S}\otimes\rho^i_\mtn{C})\mapsto(\rho^{out}_\mtn{S}\otimes\rho^i_\mtn{C})
\end{equation}
with
\begin{equation}\label{eq:Thm18ProofA3App}
    \|\rho^f_\mtn{S}-\rho^{out}_\mtn{S}\|\le\epsilon
\end{equation}
for arbitrary $\epsilon>0$, and
\begin{equation}\label{eq:Thm18ProofA4App}
    \Lambda:(\rho^\mtn{\beta}_\mtn{S}\otimes\rho^\mtn{\beta}_{\text{ancilla}}\otimes\rho^\mtn{\beta}_\mtn{C})\mapsto(\rho^\mtn{\beta}_\mtn{S}\otimes\rho^\mtn{\beta}_{\text{ancilla}}\otimes\rho^\mtn{\beta}_\mtn{C}).
\end{equation}
Here, $\rho_\mtn{C}^i$ and $\rho^\mtn{\beta}_\mtn{C}$ are the initial state and the maximally mixed state of the catalyst, respectively, and $\rho^\mtn{\beta}_{\text{ancilla}}$ is the Gibbs state of the ancilla. For $H_{\text{ancilla}}$ and $H_\mtn{C}$ being the Hamiltonian of the ancillary system and the catalyst, respectively, if
\begin{equation}\label{eq:Thm18ProofA5App}
    [\rho^i_\mtn{S}\otimes\rho^i_{\text{ancilla}},H_\mtn{S}+H_{\text{ancilla}}]=0
\end{equation}
then, it would be enough for the catalyst to consider only states $\rho^i_\mtn{C}$ that satisfy
\begin{equation}\label{eq:Thm18ProofA6App}
    [\rho^i_\mtn{S}\otimes\rho^i_{\text{ancilla}}\otimes\rho^i_\mtn{C},H_\mtn{S}+H_{\text{ancilla}}+H_\mtn{C}]=0.
\end{equation}

It is worth noting that the possibility of an ancilla ensures that the uniform distribution $\rho^\mtn{\beta}_{\text{ancilla}}\otimes\rho^\mtn{\beta}_\mtn{C}$ preserved under the mapping $\Lambda$ \eqref{eq:Thm18ProofA4App} is of any dimension.

\item[\textbf{(B)}] Assume the existence of a channel $\Lambda$ that catalytically transforms $\rho_\mtn{S}^i$ to $\rho_\mtn{S}^{out}$ under the catalysis of $C$ with Hamiltonian $H_\mtn{C}$ such that
\begin{subequations}
\begin{empheq}{alignat=3}
\label{eq:Thm18ProofB1App}    &\Lambda:(\rho^i_\mtn{S}\otimes\rho^i_\mtn{C})\mapsto(\rho^{out}_\mtn{S}\otimes\rho^i_\mtn{C}),\\
\label{eq:Thm18ProofB2App}    &\Lambda:(\rho^\mtn{\beta}_\mtn{S}\otimes\rho^\mtn{\beta}_\mtn{C})\mapsto(\rho^\mtn{\beta}_\mtn{S}\otimes\rho^\mtn{\beta}_\mtn{C}),\\
\label{eq:Thm18ProofB3App}    &\|\rho^f_\mtn{S}-\rho^{out}_\mtn{S}\|\le\epsilon,
\end{empheq}
\end{subequations}
for arbitrary $\epsilon>0$. Here, $\rho^\mtn{\beta}_\mtn{S}$ and $\rho^\mtn{\beta}_\mtn{C}$ are the thermal states of the system $S$ and the catalyst $C$, respectively, which are the maximally mixed states at temperature $T=1/(\beta k_B)$. Then, with Theorem~\ref{thm2}, we have the following:

\begin{equation}\label{eq:Thm18ProofB4App}
    D_\mtn{\alpha}(\rho^i_\mtn{S}\|\rho_\mtn{S}^\mtn{\beta})\geq D_\mtn{\alpha}(\rho^f_\mtn{S}\|\rho_\mtn{S}^\mtn{\beta}),
\end{equation}
for the states $\rho_\mtn{S}^i$, $\rho_\mtn{S}^f$ that always commute with the system's Hamiltonian.
\end{enumerate}
\end{proof}

\renewcommand{\theequation}{F\arabic{equation}}
\setcounter{equation}{0}  
\appsection{Work Bit}  \label{appF}

In this section, we revisit the definition of the work distance in terms of non-additivity, since it was originally given in terms of the additive R\'enyi divergence~\cite{BrandaoEtAl2015}. To this aim, one first needs to know how the conditions for transitions between states, in terms of non-additive measures, are affected by expanding the system. The next theorem answers this question.

\begin{theorem}\label{thm4}
Consider a system $S$ with Hamiltonian $H_\mtn{S}$ that is in contact with a heat reservoir at inverse temperature $\beta$. Let $S$ be a sub-system of a larger system $S+C$ with a time-independent Hamiltonian $H_\mtn{S}+H_\mtn{C}$ so that the total state can be written as $\rho_{SC}=\rho_\mtn{S}\otimes\rho_\mtn{C}$ for all times. Then the system $S$ can be transformed with arbitrary accuracy from a state $\rho^i_\mtn{S}$ block-diagonal in the energy eigenbasis to another state $\rho^f_\mtn{S}$, also block-diagonal in the energy eigenbasis, under catalytic thermal operations if and only if the corresponding free energies satisfy the inequality
\begin{equation}\label{eq:QSecondLaws}
    F_\mtn{\alpha}(\rho^i_\mtn{S}\otimes\rho^i_\mtn{C},H_\mtn{S}+H_\mtn{C})\ge F_\mtn{\alpha}(\rho^f_\mtn{S}\otimes\rho^i_\mtn{C},H_\mtn{S}+H_\mtn{C}),\qquad\forall\alpha\in(-\infty,+\infty). 
\end{equation}
Here, $\rho_\mtn{C}$'s are the states of the catalyst $C$ with Hamiltonian $H_\mtn{C}$. 
\end{theorem}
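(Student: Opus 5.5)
The plan is to reduce Theorem~\ref{thm4} to Theorem~\ref{thm3} by using the pseudo-additive composition law, Eq.~\eqref{eq:non_add_relation}, for the divergence of a product state relative to the product thermal state. Since $H_\mtn{S}+H_\mtn{C}$ is non-interacting, the joint Gibbs state factorizes, $\rho^\mtn{\beta}_\mtn{SC}=\rho^\mtn{\beta}_\mtn{S}\otimes\rho^\mtn{\beta}_\mtn{C}$, and $Z_\mtn{SC}=Z_\mtn{S}Z_\mtn{C}$. Hence, by Eq.~\eqref{eq:Thm18FreeEnergyDefApp},
\begin{equation*}
F_\mtn{\alpha}(\rho_\mtn{S}\otimes\rho^i_\mtn{C},H_\mtn{S}+H_\mtn{C})=-k_BT\ln(Z_\mtn{S}Z_\mtn{C})+k_BT\,D_\mtn{\alpha}(\rho_\mtn{S}\otimes\rho^i_\mtn{C}\|\rho^\mtn{\beta}_\mtn{S}\otimes\rho^\mtn{\beta}_\mtn{C}).
\end{equation*}
Applying Eq.~\eqref{eq:non_add_relation} to the divergence term and subtracting the same expression evaluated at $\rho^f_\mtn{S}$, the partition-function terms and the pure catalyst term $D_\mtn{\alpha}(\rho^i_\mtn{C}\|\rho^\mtn{\beta}_\mtn{C})$ cancel. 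This leaves
\begin{equation*}
\Delta F^{SC}_\mtn{\alpha}=k_BT\Bigl[1+\sgn{\alpha}(\alpha-1)D_\mtn{\alpha}(\rho^i_\mtn{C}\|\rho^\mtn{\beta}_\mtn{C})\Bigr]\Bigl[D_\mtn{\alpha}(\rho^i_\mtn{S}\|\rho^\mtn{\beta}_\mtn{S})-D_\mtn{\alpha}(\rho^f_\mtn{S}\|\rho^\mtn{\beta}_\mtn{S})\Bigr],
\end{equation*}
which is exactly the factorized structure already exploited in Eq.~\eqref{eq:MonotonicityDT3_0App}.

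The key step is to show that the catalyst prefactor $c_\mtn{\alpha}\coloneqq 1+\sgn{\alpha}(\alpha-1)D_\mtn{\alpha}(\rho^i_\mtn{C}\|\rho^\mtn{\beta}_\mtn{C})$ is strictly positive for every $\alpha$, so that dividing by it preserves the inequality direction. I would take the case analysis from Appendix~\ref{appD}.
\begin{itemize}
\item For $\alpha>1$, $c_\mtn{\alpha}\ge1$ by non-negativity, Eq.~\eqref{eq:PositiveDApp}.
\item For $\alpha=1$, $c_\mtn{\alpha}=1$.
\item For $\alpha<0$, $\sgn{\alpha}(\alpha-1)=1-\alpha>0$, so again $c_\mtn{\alpha}\ge1$.
\item For $0\le\alpha<1$, $c_\mtn{\alpha}=\sum_i p_i^\alpha q_i^{1-\alpha}$, which is $\ge0$ by Eq.~\eqref{eq:AnIntervalForDTAppD}.
\end{itemize}
In the last case strict positivity requires that the supports of $\rho^i_\mtn{C}$ and $\rho^\mtn{\beta}_\mtn{C}$ overlap. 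This holds because the Gibbs state has full rank, so the sum $\sum_i p_i^\alpha q_i^{1-\alpha}$ is positive for $\alpha<1$. I expect this to be the main, though mild, obstacle. Relatedly, for $\alpha<0$ one must ensure all divergences are finite, which needs a full-rank catalyst; if they are infinite, the statement should be read as the limiting/extended-real inequality, exactly as in the continuity remarks after Eq.~\eqref{eq:MonotonicityDT3App}.

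Given $c_\mtn{\alpha}>0$, the joint condition Eq.~\eqref{eq:QSecondLaws} for a given $\alpha$ is equivalent to $D_\mtn{\alpha}(\rho^i_\mtn{S}\|\rho^\mtn{\beta}_\mtn{S})\ge D_\mtn{\alpha}(\rho^f_\mtn{S}\|\rho^\mtn{\beta}_\mtn{S})$. That is, it is equivalent to $F_\mtn{\alpha}(\rho^i_\mtn{S},H_\mtn{S})\ge F_\mtn{\alpha}(\rho^f_\mtn{S},H_\mtn{S})$, since the $-k_BT\ln Z_\mtn{S}$ offsets agree on both sides. Taking this for all $\alpha\in\mathbb{R}$ and invoking Theorem~\ref{thm3} gives both directions of the equivalence. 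The block-diagonality hypotheses carry over because $\rho^i_\mtn{C}$ may, by the Note following Theorem~\ref{thm2}, be taken diagonal in the energy eigenbasis of $H_\mtn{C}$, so that the product states commute with $H_\mtn{S}+H_\mtn{C}$. Finally, I would record the limiting orders $\alpha=0,\pm\infty$ via the limits of Appendix~\ref{appA}, noting that at $\alpha\to\infty$ the prefactor may diverge but remains positive, so the sign argument persists.
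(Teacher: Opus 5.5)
Your proposal is correct and follows the paper's proof essentially step for step. Both rewrite the joint condition as divergences relative to $\rho^\beta_S\otimes\rho^\beta_C$, use pseudo-additivity to factor out the catalyst prefactor $1+\mathrm{sgn}(\alpha)(\alpha-1)D_\alpha(\rho^i_C\|\rho^\beta_C)$, take its sign from the case analysis in the proof of Theorem~\ref{thm2}, and reduce to Theorem~\ref{thm3}. Two of your points sharpen the paper's argument: you show that for $0\le\alpha<1$ the prefactor equals $\sum_i p_i^\alpha q_i^{1-\alpha}$ and is therefore \emph{strictly} positive, which is what dividing actually requires, whereas the paper only asserts non-negativity; and you note that finiteness for $\alpha<0$ needs a full-rank catalyst, not merely a full-rank Gibbs state as the paper claims.
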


\begin{proof}
In terms of divergences, \eqref{eq:QSecondLaws} reads
\begin{equation}\label{eq:QSecondLawsDivs}
    D_\mtn{\alpha}(\rho^i_\mtn{S}\otimes\rho^i_\mtn{C}\|\rho^\mtn{\beta}_\mtn{S}\otimes\rho^\mtn{\beta}_\mtn{C}) \ge
    D_\mtn{\alpha}(\rho^f_\mtn{S}\otimes\rho^i_\mtn{C}\|\rho^\mtn{\beta}_\mtn{S}\otimes\rho^\mtn{\beta}_\mtn{C}),\qquad\forall\alpha\in(-\infty,+\infty). 
\end{equation}
After applying pseudo-additivity of $D_\mtn{\alpha}$~\cite{FuruichiYanagiKuriyama2004} and rearranging, we have
\begin{equation}\label{eq:Thm4Proof1App}
    \Biggl\{1+\sgn{\alpha}(\alpha-1) D_\mtn{\alpha}\Bigl(\rho^i_\mtn{C}\|\rho^\mtn{\beta}_\mtn{C}\Bigr)\Biggr\}\times
    \Biggl[D_\mtn{\alpha}\Bigl(\rho^i_\mtn{S}\|\rho^\mtn{\beta}_\mtn{S}\Bigr)-D_\mtn{\alpha}\Bigl(\rho^f_\mtn{S}\|\rho^\mtn{\beta}_\mtn{S}\Bigr)\Biggr]\ge 0,\qquad\forall\alpha\in(-\infty,+\infty). 
\end{equation}
Since $\rho^\mtn{\beta}_\mtn{C}$ is of full rank, $D_\mtn{\alpha}(\rho^i_\mtn{C}\|\rho^\mtn{\beta}_\mtn{C})$ is finite, and we know from the proof of Theorem~\ref{thm2} that for all $\alpha\in\mathbb{R}$, the term in $\{\cdot\}$ is non-negative (see Eqs. (\ref{eq:MonotonicityDT3_1App}) - (\ref{eq:MonotonicityDT3_4App}) in particular). Therefore,
\begin{equation}\label{eq:Thm4Proof2App}
    D_\mtn{\alpha}\Bigl(\rho^i_\mtn{S}\|\rho^\mtn{\beta}_\mtn{S}\Bigr)-D_\mtn{\alpha}\Bigl(\rho^f_\mtn{S}\|\rho^\mtn{\beta}_\mtn{S}\Bigr)\ge 0,\qquad\forall\alpha\in(-\infty,+\infty). 
\end{equation}
In terms of free energies, \eqref{eq:Thm4Proof2App} is equivalent to \eqref{eq:Thm18_1App}. That is, despite adding a catalyst to the system, the quantum second laws of thermodynamics continue to hold in terms of non-additive divergences.
\end{proof}

\begin{theorem}\label{thm5}  
A system with Hamiltonian $H_\mtn{S}$ that is in contact with a heat reservoir at inverse temperature $\beta$ can be transformed with arbitrary accuracy from a state $\rho^i_\mtn{S}$ block-diagonal in energy eigenbasis to another state $\rho^f_\mtn{S}$, also block-diagonal in energy eigenbasis, under catalytic thermal operations assisted by a qubit $Q$ in a pure energy eigenstate $\rho_\mtn{Q}$ with trivial Hamiltonian $H_\mtn{Q}$ if and only if the corresponding free energies satisfy the inequality

\begin{equation}\label{eq:QSecondLawsQubit}
    F_\mtn{\alpha}(\rho^i_\mtn{S}\otimes\rho^i_\mtn{C}\otimes\rho^i_\mtn{Q},H_\mtn{S}+H_\mtn{C}+H_\mtn{Q})\ge F_\mtn{\alpha}(\rho^f_\mtn{S}\otimes\rho^i_\mtn{C}\otimes\rho^f_\mtn{Q},H_\mtn{S}+H_\mtn{C}+H_\mtn{Q}),\qquad\forall\alpha\in[0,+\infty). 
\end{equation}
\end{theorem}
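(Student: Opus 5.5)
The plan is to reduce Theorem~\ref{thm5} to Theorem~\ref{thm4} together with Theorem~\ref{thm3}, following the same pattern. The qubit $Q$ has a trivial Hamiltonian, so its thermal state is $\rho^\beta_Q=\mathbb{I}/2$. The composite system is $S+C+Q$ with Hamiltonian $H_S+H_C+H_Q$, and the total state stays a product throughout.

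First I would write both sides of \eqref{eq:QSecondLawsQubit} in terms of $D_\alpha$. The $-k_BT\ln Z$ terms are identical on the two sides and cancel, so the condition becomes
\begin{equation*}
D_\alpha(\rho^i_S\otimes\rho^i_C\otimes\rho^i_Q\,\|\,\rho^\beta_S\otimes\rho^\beta_C\otimes\tfrac{\mathbb{I}}{2})\ \ge\ D_\alpha(\rho^f_S\otimes\rho^i_C\otimes\rho^f_Q\,\|\,\rho^\beta_S\otimes\rho^\beta_C\otimes\tfrac{\mathbb{I}}{2}).
\end{equation*}
Next I would apply the pseudo-additivity \eqref{eq:non_add_relation} twice, once splitting off $C$ and once splitting off $Q$. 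Writing $D^X_{i,f}$ for the relevant marginal divergences, the composite divergence becomes
\begin{equation*}
\prod_{X\in\{S,C,Q\}}\bigl(1+\sgn{\alpha}(\alpha-1)D^X\bigr),
\end{equation*}
up to an affine transformation that is the same on both sides. In this multiplicative form the catalyst factor $1+(\alpha-1)D_\alpha(\rho^i_C\|\rho^\beta_C)$ appears on both sides. By the bounds \eqref{eq:MonotonicityDT3_1App}--\eqref{eq:MonotonicityDT3_4App} this factor is nonnegative, and for full-rank thermal reference states it is strictly positive. The catalyst factor therefore cancels, exactly as in the proof of Theorem~\ref{thm4}.

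What remains compares $S\otimes Q$ before and after. The work qubit is in a pure energy eigenstate, so $\rho^{i,f}_Q\in\{|0\rangle\langle0|,|1\rangle\langle1|\}$. For $\alpha\ge0$ a direct computation gives $\sum_j (p_j/q_j)^\alpha q_j=2^{\alpha-1}$ for either pure state. Both sides then carry the same $Q$-factor $2^{\alpha-1}$, which is positive. This is exactly the reason the range is restricted to $\alpha\in[0,\infty)$: for $\alpha<0$ a pure state has zero entries and $D_\alpha$ diverges. After cancelling the $Q$-factor the condition reduces to $D_\alpha(\rho^i_S\|\rho^\beta_S)\ge D_\alpha(\rho^f_S\|\rho^\beta_S)$ for $\alpha\ge0$.

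Theorem~\ref{thm3} (via Theorem~\ref{thm2}) then gives the operational equivalence: the joint transition $\rho^i_S\otimes\rho^i_Q\mapsto\rho^f_S\otimes\rho^f_Q$ is possible under catalytic thermal operations with arbitrary accuracy. For the converse direction, monotonicity \eqref{eq:DataPreocessApp} under the channel, combined with the same factorization, gives the inequality directly.

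The main obstacle is the treatment of $\alpha<0$ in Theorem~\ref{thm3} versus $\alpha\ge0$ here. The pure work-bit state is not full rank, so the negative-$\alpha$ conditions become trivial or ill-defined. I would handle this as in Ref.~\cite{BrandaoEtAl2015}: mix the work bit with a small amount of noise $\epsilon$ (allowed by Lemma~\ref{lemma2}), and argue that as $\epsilon\to0$ the $\alpha<0$ constraints carry no additional restriction on the joint transition. The point is that the divergence of $D_\alpha$ for the initial pure state makes those inequalities automatically satisfied, while continuity for $\alpha\ge0$ (cf.\ \eqref{eq:Thm17EpsilonZeroLimitApp}) preserves the remaining conditions.
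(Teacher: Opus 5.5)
Your argument is correct and uses the same core machinery as the paper. Pseudo-additivity strips off the catalyst factor $1+\mathrm{sgn}(\alpha)(\alpha-1)D_\alpha(\rho^i_C\|\rho^\beta_C)=\sum_j p_j^\alpha q_j^{1-\alpha}>0$. Necessity comes from data processing plus the $\epsilon\to 0$ limit, sufficiency from Theorem~\ref{thm2}, and the pure work bit is why $\alpha<0$ is excluded. It differs from the paper in two places.

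\textbf{First difference: you also cancel the $Q$ factor.} The paper keeps $S\otimes Q$ together and stops at $D_\alpha(\rho^i_S\otimes\rho^i_Q\|\cdot)\ge D_\alpha(\rho^f_S\otimes\rho^f_Q\|\cdot)$ for $\alpha\ge 0$. You additionally cancel the common $Q$ factor $2^{\alpha-1}$. This is legitimate because $H_Q$ is trivial, and it shows that the condition is just the system-level one for $\alpha\ge 0$. The paper's form is what survives when $H_Q$ is nontrivial, as in Appendix~\ref{appG}, where the two $Q$ factors differ and produce the $\beta\hbar\omega$ shift.

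\textbf{Second difference: you handle the $\alpha<0$ mismatch.} For sufficiency, the paper's part (A) simply assumes the joint inequality for all $\alpha\in\mathbb{R}$, which is more than the theorem's hypothesis. Your noise argument is what actually bridges $[0,\infty)$ to $\mathbb{R}$, so on this point your route is more complete than the paper's.

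That step needs three corrections to be sound.
\begin{itemize}
\item You cannot invoke Theorem~\ref{thm3} on $S$ alone, since it requires the system inequalities for $\alpha<0$, which you no longer have. Instead, re-multiply by the equal, strictly positive $C$ and $Q$ factors and apply Theorem~\ref{thm2} to the joint pair $S\otimes Q$.
\item The noise must go on the target, not the source. Replace $\rho^f_Q$ by $(1-\delta)\rho^f_Q+\delta\,\mathbb{I}/2$, and replace $\rho^f_S$ by a mixture with $\rho^\beta_S$ if it is rank-deficient, while keeping $\rho^i_Q$ pure. Then for $\alpha<0$ the left side is $+\infty$ and the right side is finite. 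Noising the initial work bit instead would make these inequalities fail.
\item The $\alpha\ge 0$ inequalities for the noisy target should be derived from data processing under this Gibbs-preserving mixing map, which gives them exactly. Continuity only gives approximate inequalities, which is not enough when one of them is tight.
\end{itemize}

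Two smaller points. At $\alpha=1$ your product form degenerates because $\mathrm{sgn}(\alpha)(\alpha-1)=0$; handle that case by ordinary additivity of $D_1$, under which the $C$ and $Q$ terms cancel directly. In the necessity direction, obtain $\alpha=0$ from the $\alpha>0$ inequalities by letting $\alpha\to 0^+$, since $D_0$ is not continuous in the state.
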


\begin{proof}
\begin{enumerate} 
        \item[\textbf{(A)}]  Assuming that
\begin{equation}\label{eq:Thm20ProofA1App}
    D_\mtn{\alpha}(\rho^i_\mtn{S}\otimes\rho^i_\mtn{Q}\|\rho_\mtn{S}^\mtn{\beta}\otimes\rho_\mtn{Q}^\mtn{\beta})\geq D_\mtn{\alpha}(\rho^f_\mtn{S}\otimes\rho^f_\mtn{Q}\|\rho_\mtn{S}^\mtn{\beta}\otimes\rho_\mtn{Q}^\mtn{\beta}), \qquad\forall\alpha\in\mathbb{R}
\end{equation}
so that, by Theorem~\ref{thm2}, there exists a channel, $\Lambda$, and a state represented by a maximally mixed state, $\rho_\mtn{S}^\mtn{\beta}\otimes\rho_\mtn{Q}^\mtn{\beta}$, such that
\begin{equation}\label{eq:Thm20ProofA2App}
    \Lambda:(\rho^i_\mtn{S}\otimes\rho^i_\mtn{C}\otimes\rho^i_\mtn{Q})\mapsto(\rho^{out}_\mtn{S}\otimes\rho^i_\mtn{C}\otimes\rho^{out}_\mtn{Q})
\end{equation}
with
\begin{equation}\label{eq:Thm20ProofA3_1App}   
    \|\rho^f_\mtn{S}\otimes\rho^f_\mtn{Q}-\rho^{out}_\mtn{S}\otimes\rho^{out}_\mtn{Q}\|\le\epsilon    
\end{equation}
for arbitrary $\epsilon>0$, and
\begin{equation}\label{eq:Thm20ProofA4App}
    \Lambda:(\rho^\mtn{\beta}_\mtn{S}\otimes\rho^\mtn{\beta}_{\text{ancilla}}\otimes\rho^\mtn{\beta}_\mtn{C}\otimes\rho^\mtn{\beta}_\mtn{Q})\mapsto(\rho^\mtn{\beta}_\mtn{S}\otimes\rho^\mtn{\beta}_{\text{ancilla}}\otimes\rho^\mtn{\beta}_\mtn{C}\otimes\rho^\mtn{\beta}_\mtn{Q}).
\end{equation}
Here, $\rho_\mtn{C}^i$ and $\rho^\mtn{\beta}_\mtn{C}$ are the initial and the maximally mixed states of the catalyst, respectively, and $\rho^\mtn{\beta}_{\text{ancilla}}$ is the Gibbs state of the ancilla. $H_{\text{ancilla}}$ and $H_\mtn{C}$ being the Hamiltonian of the ancillary system and the catalyst, respectively, if
\begin{equation}\label{eq:Thm20ProofA5App}
    [\rho^i_\mtn{S}\otimes\rho^i_{\text{ancilla}}\otimes\rho^i_\mtn{Q},H_\mtn{S}+H_{\text{ancilla}}+H_\mtn{Q}]=0
\end{equation}
then, it would be enough for the catalyst to consider only states $\rho^i_\mtn{C}$ that satisfy
\begin{equation}\label{eq:Thm20ProofA6App}
    [\rho^i_\mtn{S}\otimes\rho^i_{\text{ancilla}}\otimes\rho^i_\mtn{C}\otimes\rho^i_\mtn{Q},H_\mtn{S}+H_{\text{ancilla}}+H_\mtn{C}+H_\mtn{Q}]=0.
\end{equation}

Again, it is worth noting that the possibility of an ancilla ensures that the uniform distribution $\rho^\mtn{\beta}_{\text{ancilla}}\otimes\rho^\mtn{\beta}_\mtn{C}$ preserved under the mapping $\Lambda$ \eqref{eq:Thm20ProofA4App} is of any dimension.

\item[\textbf{(B)}] Assume now the existence of a channel $\Lambda$ that catalytically transforms $\rho_\mtn{S}^i\otimes\rho^i_\mtn{Q}$ to $\rho_\mtn{S}^{out}\otimes\rho^{out}_\mtn{Q}$ under the catalysis of $C$ with Hamiltonian $H_\mtn{C}$ such that
\begin{subequations}
\begin{empheq}{alignat=3}
\label{eq:Thm20ProofB1App}    &\Lambda:(\rho^i_\mtn{S}\otimes\rho^i_\mtn{C}\otimes\rho^i_\mtn{Q})\mapsto(\rho^{out}_\mtn{S}\otimes\rho^i_\mtn{C}\otimes\rho^{out}_\mtn{Q}),\\
\label{eq:Thm20ProofB2App}    &\Lambda:(\rho^\mtn{\beta}_\mtn{S}\otimes\rho^\mtn{\beta}_\mtn{C}\otimes\rho^\mtn{\beta}_\mtn{Q})\mapsto(\rho^\mtn{\beta}_\mtn{S}\otimes\rho^\mtn{\beta}_\mtn{C}\otimes\rho^\mtn{\beta}_\mtn{Q}),\\
\label{eq:Thm20ProofB3App}    &\|\rho^f_\mtn{S}\otimes\rho^f_\mtn{Q}-\rho^{out}_\mtn{S}\otimes\rho^{out}_\mtn{Q}\|\le\epsilon,
\end{empheq}
\end{subequations}
for arbitrary $\epsilon>0$. Here, $\rho^\mtn{\beta}_\mtn{S}$, $\rho^\mtn{\beta}_\mtn{C}$ and $\rho^\mtn{\beta}_\mtn{Q}$ are the thermal states of the system $S$, the catalyst $C$ and the qubit $Q$, respectively, which are the maximally mixed states at temperature $T$. Theorem~\ref{thm2} allows us to write
\begin{subequations}
\begin{eqnarray}
\label{eq:Thm20ProofB4_1App}
            D_\mtn{\alpha}(\rho^i_\mtn{S}\otimes\rho^i_\mtn{C}\otimes\rho^i_\mtn{Q}\|\rho_\mtn{S}^\mtn{\beta}\otimes\rho_\mtn{C}^\mtn{\beta}\otimes\rho_\mtn{Q}^\mtn{\beta})
    &\geq&  D_\mtn{\alpha}(\Lambda(\rho^i_\mtn{S}\otimes\rho^i_\mtn{C}\otimes\rho^i_\mtn{Q})\|\Lambda(\rho_\mtn{S}^\mtn{\beta}\otimes\rho_\mtn{C}^\mtn{\beta}\otimes\rho_\mtn{Q}^\mtn{\beta}))\\
\label{eq:Thm20ProofB4_2App}
    &=&     D_\mtn{\alpha}(\rho^{out}_\mtn{S}\otimes\rho^i_\mtn{C}\otimes\rho^{out}_\mtn{Q}\|\rho_\mtn{S}^\mtn{\beta}\otimes\rho_\mtn{C}^\mtn{\beta}\otimes\rho_\mtn{Q}^\mtn{\beta})\qquad\forall\alpha\in\mathbb{R}.
\end{eqnarray}
\end{subequations}

Using pseudo-additivity of divergence~\cite{FuruichiYanagiKuriyama2004} this inequality becomes, $\forall\alpha\in\mathbb{R}$,
\begin{empheq}{alignat=4}
    &\, \Biggl\{1+\sgn{\alpha}(\alpha-1) &&D_\mtn{\alpha}\Bigl(\rho^i_\mtn{C}\|\rho^\mtn{\beta}_\mtn{C}\Bigr)\Biggr\}
    &&\times\Biggl[D_\mtn{\alpha}\Bigl(\rho^i_\mtn{S}\otimes\rho^i_\mtn{Q}\|\rho^\mtn{\beta}_\mtn{S}\otimes\rho^\mtn{\beta}_\mtn{Q}\Bigr)
    &&-D_\mtn{\alpha}\Bigl(\rho^{out}_\mtn{S}\otimes\rho^{out}_\mtn{Q}\|\rho^\mtn{\beta}_\mtn{S}\otimes\rho^\mtn{\beta}_\mtn{Q})\Biggr]\ge 0. \label{eq:MonotonicityDT3_00App}\\
    \alpha\in[0,+\infty)&: &&\ge0 &&\Ra &&\ge0\nonumber
\end{empheq}
Since $\rho^i_\mtn{Q}$ is not of full-rank, for $\alpha\in(-\infty,0)$ the first term in the square brackets $[\cdot]$ in \eqref{eq:MonotonicityDT3_00App} goes to infinity and this makes it impossible to compare it with the second term.
Therefore,
    \begin{equation}\label{eq:MonotonicityDT33App}
        D_\mtn{\alpha}\Bigl(\rho^i_\mtn{S}\otimes\rho^i_\mtn{Q}\|\rho^\mtn{\beta}_\mtn{S}\otimes\rho^\mtn{\beta}_\mtn{Q}\Bigr)\ge D_\mtn{\alpha}\Bigl(\rho^{out}_\mtn{S}\otimes\rho^{out}_\mtn{Q}\|\rho^\mtn{\beta}_\mtn{S}\otimes\rho^\mtn{\beta}_\mtn{Q}\Bigr),\qquad\forall\alpha\in[0,+\infty).
    \end{equation}

\noindent Hence, 
\begin{equation}\label{eq:Thm20EpsilonZeroLimitApp}
    \lim_{\rho^{out}_\mtn{S}\otimes\rho^{out}_\mtn{Q}\ra \rho^f_\mtn{S}\otimes\rho^f_\mtn{Q}}   D_\mtn{\alpha}\Bigl(\rho^{out}_\mtn{S}\otimes\rho^{out}_\mtn{Q}\|\rho^\mtn{\beta}_\mtn{S}\otimes\rho^\mtn{\beta}_\mtn{Q}\Bigr)=D_\mtn{\alpha}\Bigl(\rho^f_\mtn{S}\otimes\rho^f_\mtn{Q}\|\rho^\mtn{\beta}_\mtn{S}\otimes\rho^\mtn{\beta}_\mtn{Q}\Bigr),\qquad\forall\alpha\in[0,+\infty),
\end{equation}

\noindent so that
\begin{equation}\label{eq:Thm20MonotonicityDRTApp}
        D_\mtn{\alpha}\Bigl(\rho^i_\mtn{S}\otimes\rho^i_\mtn{Q}\|\rho^\mtn{\beta}_\mtn{S}\otimes\rho^\mtn{\beta}_\mtn{Q}\Bigr)\ge D_\mtn{\alpha}\Bigl(\rho^f_\mtn{S}\otimes\rho^f_\mtn{Q}\|\rho^\mtn{\beta}_\mtn{S}\otimes\rho^\mtn{\beta}_\mtn{Q}\Bigr),\qquad\forall\alpha\in[0,+\infty).
\end{equation}

\end{enumerate}

In this case, the energy gained by the qubit would define the work extracted from the system, while the energy lost from the qubit would define the work given to the system for the transformation to take place, respectively. So, the qubit is called \textit{the work bit}~\cite{HorodeckiOppenheim2013,BrandaoEtAl2015}.
\end{proof}

\renewcommand{\theequation}{G\arabic{equation}}
\setcounter{equation}{0}  
\appsection{Approximate Catalysis and Work Distance}  \label{appG}

Now we will consider the cases where in the process the output state of the catalyst differs from its input state
\begin{equation}\label{eq:RhoSRhoC}
    \rho^i_\mtn{S}\otimes\rho^i_\mtn{C}\longmapsto\rho^f_\mtn{S}\otimes\rho^f_\mtn{C}.
\end{equation}
If $\rho^f_\mtn{C}$ differs from $\rho^i_\mtn{C}$ by a sufficiently small amount so that with the assistance of the qubit, $Q$, the transition 
\begin{equation}\label{eq:CatalystTransition}
    \rho^f_\mtn{S}\otimes\rho^f_\mtn{C}\otimes\rho^i_\mtn{Q}\longmapsto\rho^f_\mtn{S}\otimes\rho^i_\mtn{C}\otimes\rho^f_\mtn{Q}
\end{equation}
is possible, then it would be possible to recover the second laws for all $\alpha\in[0,+\infty)$ expressed in \eqref{eq:QSecondLawsQubit}. In transition \eqref{eq:CatalystTransition}, let $\rho^i_\mtn{Q}=|0\rangle\langle0|$ with energy $E_\mtn{Q}^0=0$, and $\rho^f_\mtn{Q}=|\omega\rangle\langle\omega|$ with energy $E_\mtn{Q}^\omega=\hbar\omega$ where $\hbar$ is the reduced Planck constant. The energy difference $\Delta E_\mtn{Q}=\hbar\omega$ is positive if the work bit is initially in the ground state and negative if not.
The condition for this transition to occur is that the corresponding free energy would not increase, namely,
\begin{equation}\label{eq:CatalystQubitCond_F}
    F_\mtn{\alpha}(\rho^f_\mtn{C}\otimes\rho^i_\mtn{Q},H_\mtn{C}+H_\mtn{Q})\ge F_\mtn{\alpha}(\rho^i_\mtn{C}\otimes\rho^f_\mtn{Q},H_\mtn{C}+H_\mtn{Q}),\qquad\forall\alpha\in[0,+\infty). 
\end{equation}
For the desired transition of the system $S$
\begin{equation}\label{eq:DesiredTransition}    \rho^i_\mtn{S}\otimes\rho^i_\mtn{C}\otimes\rho^i_\mtn{Q}\longmapsto\rho^f_\mtn{S}\otimes\rho^i_\mtn{C}\otimes\rho^f_\mtn{Q},
\end{equation}
the condition can be expressed as
\begin{equation}\label{eq:SystemQubitCond_F}
    F_\mtn{\alpha}(\rho^i_\mtn{S}\otimes\rho^i_\mtn{C}\otimes\rho^i_\mtn{Q},H_\mtn{S}+H_\mtn{C}+H_\mtn{Q})\ge 
    F_\mtn{\alpha}(\rho^f_\mtn{S}\otimes\rho^i_\mtn{C}\otimes\rho^f_\mtn{Q},H_\mtn{S}+H_\mtn{C}+H_\mtn{Q}),\qquad\forall\alpha\in[0,+\infty). 
\end{equation}
The two conditions for the transition \eqref{eq:DesiredTransition} to occur can be expressed in terms of non-additive divergences: for \eqref{eq:CatalystQubitCond_F}, 
\begin{equation}\label{eq:CatalystQubitCond_D}
    D_\mtn{\alpha}(\rho^f_\mtn{C}\otimes\rho^i_\mtn{Q}\|\rho^\mtn{\beta}_\mtn{C}\otimes\rho^\mtn{\beta}_\mtn{Q})\ge 
    D_\mtn{\alpha}(\rho^i_\mtn{C}\otimes\rho^f_\mtn{Q}\|\rho^\mtn{\beta}_\mtn{C}\otimes\rho^\mtn{\beta}_\mtn{Q}),\qquad\forall\alpha\in[0,+\infty), 
\end{equation}
and for \eqref{eq:SystemQubitCond_F},
\begin{equation}\label{eq:SystemtQubitCond_D}
    D_\mtn{\alpha}(\rho^i_\mtn{S}\otimes\rho^i_\mtn{C}\otimes\rho^i_\mtn{Q}\|\rho^\mtn{\beta}_\mtn{S}\otimes\rho^\mtn{\beta}_\mtn{C}\otimes\rho^\mtn{\beta}_\mtn{Q})\ge
    D_\mtn{\alpha}(\rho^f_\mtn{S}\otimes\rho^i_\mtn{C}\otimes\rho^f_\mtn{Q}\|\rho^\mtn{\beta}_\mtn{S}\otimes\rho^\mtn{\beta}_\mtn{C}\otimes\rho^\mtn{\beta}_\mtn{Q}),\qquad\forall\alpha\in[0,+\infty),
\end{equation}
where $\rho^\mtn{\beta}$ are thermal states. Inequality \eqref{eq:CatalystQubitCond_D} is the condition for the recovery of the quantum second laws for inexact catalysis as expressed in \eqref{eq:SystemtQubitCond_D}. Using pseudo-additivity~\cite{FuruichiYanagiKuriyama2004} of $D_\mtn{\alpha}$, we can simplify \eqref{eq:SystemtQubitCond_D} as, $\forall\alpha\in[0,+\infty)$
\begin{empheq}{alignat=4}
    &\, \Biggl\{1+(\alpha-1) &&D_\mtn{\alpha}\Bigl(\rho^i_\mtn{C}\|\rho^\mtn{\beta}_\mtn{C}\Bigr)\Biggr\}
    &&\times\Biggl[D_\mtn{\alpha}\Bigl(\rho^i_\mtn{S}\otimes\rho^i_\mtn{Q}\|\rho^\mtn{\beta}_\mtn{S}\otimes\rho^\mtn{\beta}_\mtn{Q}\Bigr)
    &&-D_\mtn{\alpha}\Bigl(\rho^f_\mtn{S}\otimes\rho^f_\mtn{Q}\|\rho^\mtn{\beta}_\mtn{S}\otimes\rho^\mtn{\beta}_\mtn{Q})\Biggr]\ge 0. \label{eq:MonotonicityDT30_00App}\\
    \alpha\in[0,+\infty)&: &&\ge0 &&\Ra &&\ge0\nonumber
\end{empheq}
The expression in square brackets can be written for $\alpha\ge0$ as 
\begin{equation}\label{eq:ProofREqT1}
D_\mtn{\alpha}\Bigl(\rho^i_\mtn{S}\|\rho^\mtn{\beta}_\mtn{S}\Bigr)\Bigl[1+(\alpha-1)D_\mtn{\alpha}\Bigl(\rho^i_\mtn{Q}\|\rho^\mtn{\beta}_\mtn{Q}\Bigr)\Bigr]+ D_\mtn{\alpha}\Bigl(\rho^i_\mtn{Q}\|\rho^\mtn{\beta}_\mtn{Q}\Bigr) \ge
    D_\mtn{\alpha}\Bigl(\rho^f_\mtn{S}\|\rho^\mtn{\beta}_\mtn{S}\Bigr)\Bigl[1+(\alpha-1)D_\mtn{\alpha}\Bigl(\rho^f_\mtn{Q}\|\rho^\mtn{\beta}_\mtn{Q}\Bigr)\Bigr]+ D_\mtn{\alpha}\Bigl(\rho^f_\mtn{Q}\|\rho^\mtn{\beta}_\mtn{Q}\Bigr).
\end{equation}
Substituting qubit states
\begin{equation}\label{eq:ProofREqT2}
D_\mtn{\alpha}\Bigl(\rho^i_\mtn{S}\|\rho^\mtn{\beta}_\mtn{S}\Bigr)\Bigl[1+e^{-\beta\hbar\omega}\Bigr]^{\alpha-1}+ \frac{1}{\alpha-1}\left(\Bigl[1+e^{-\beta\hbar\omega}\Bigr]^{\alpha-1}-1\right)\ge
D_\mtn{\alpha}\Bigl(\rho^f_\mtn{S}\|\rho^\mtn{\beta}_\mtn{S}\Bigr)
\Bigl[e^{\beta\hbar\omega}+1\Bigr]^{\alpha-1}+\frac{1}{\alpha-1}\left(\Bigl[e^{\beta\hbar\omega}+1\Bigr]^{\alpha-1}-1\right),
\end{equation}
and rearranging gives
\begin{equation}\label{eq:ProofREqT3}
    \frac{(\alpha-1)D_\mtn{\alpha}\Bigl(\rho^i_\mtn{S}\|\rho^\mtn{\beta}_\mtn{S}\Bigr)+1}{(\alpha-1)D_\mtn{\alpha}\Bigl(\rho^f_\mtn{S}\|\rho^\mtn{\beta}_\mtn{S}\Bigr)+1}\ge e^{\beta\hbar\omega(\alpha-1)}.
\end{equation}
Using \eqref{eq:RenyiDTsallisD2}, we can write \eqref{eq:ProofREqT3} as 
\begin{equation}\label{eq:ProofREqT4}
    D^\mtn{R}_\mtn{\alpha}\Bigl(\rho^i_\mtn{S}\|\rho^\mtn{\beta}_\mtn{S}\Bigr)-D^\mtn{R}_\mtn{\alpha}\Bigl(\rho^f_\mtn{S}\|\rho^\mtn{\beta}_\mtn{S}\Bigr)\ge \beta\hbar\omega\equiv\beta\Delta E_\mtn{Q},\qquad\forall\alpha\in[0,+\infty).
\end{equation}
for condition \eqref{eq:SystemtQubitCond_D}. It is then straightforward to obtain 
\begin{equation}\label{eq:ProofREqT5}
    D^\mtn{R}_\mtn{\alpha}\Bigl(\rho^f_\mtn{C}\|\rho^\mtn{\beta}_\mtn{C}\Bigr)-D^\mtn{R}_\mtn{\alpha}\Bigl(\rho^i_\mtn{C}\|\rho^\mtn{\beta}_\mtn{C}\Bigr)\ge \beta\hbar\omega\equiv\beta\Delta E_\mtn{Q},\qquad\forall\alpha\in[0,+\infty).
\end{equation}
for condition \eqref{eq:CatalystQubitCond_D}.

This is rather surprising, namely, the conditions for the transitions \eqref{eq:RhoSRhoC} and \eqref{eq:CatalystTransition} to occur, is the same whether one adopts the additive Rényi $D_\mtn{\alpha}^\mtn{R}$ or the non-additive $D_\mtn{\alpha}$ divergences.    
That is, 
\begin{equation}\label{eq:WBitEnergyDiff0}
\Bigl[D^\mtn{R}_\mtn{\alpha}(\rho^{input}\|\rho^\mtn{\beta})-D^\mtn{R}_\mtn{\alpha}(\rho^{output}\|\rho^\mtn{\beta})\Bigr]\ge\beta\Delta E_\mtn{Q},\qquad\forall\alpha\in[0,+\infty).
\end{equation}

Note that in inequality \eqref{eq:CatalystQubitCond_D}, we have $\rho^{input}=\rho^f_\mtn{C}$ and $\rho^{output}=\rho^i_\mtn{C}$, while $\rho^{input}=\rho^i_\mtn{S}$ and $\rho^{output}=\rho^f_\mtn{S}$ in inequality \eqref{eq:SystemtQubitCond_D}. \\

Now, consider \eqref{eq:WBitEnergyDiff0} as
\begin{equation}
\label{eq:CatalystQubitDR2}    
    D^\mtn{R}_\mtn{\alpha}(\rho^f_\mtn{C}\|\rho^\mtn{\beta}_\mtn{C})\ge
    D^\mtn{R}_\mtn{\alpha}(\rho^i_\mtn{C}\|\rho^\mtn{\beta}_\mtn{C})+\beta\Delta E_\mtn{Q},\qquad\forall\alpha\in[0,+\infty).
\end{equation}

$\bullet$ If $\Delta E_\mtn{Q}>0$, then energy is transferred from the system to the qubit, and
\begin{equation}\label{eq:WBitEnergyDiff1}
    \Bigl[D^\mtn{R}_\mtn{\alpha}(\rho^{input}\|\rho^\mtn{\beta})-D^\mtn{R}_\mtn{\alpha}(\rho^{output}\|\rho^\mtn{\beta})\Bigr]\ge\beta\Delta E_\mtn{Q}>0,\qquad\forall\alpha\in[0,+\infty).
\end{equation}

Then, the maximum amount of energy extractable from a system in contact with a heat reservoir at temperature $T$ in the process $\rho^i_\mtn{S}\otimes|0\rangle\langle0|\longmapsto\rho^f_\mtn{S}\otimes|\omega\rangle\langle\omega|$ is given by 

\begin{equation}\label{eq:WBitEnergyDiffP}
    \inf_{\alpha\ge0}[D^\mtn{R}_\mtn{\alpha}(\rho^{input}\|\rho^\mtn{\beta})-D^\mtn{R}_\mtn{\alpha}(\rho^{output}\|\rho^\mtn{\beta})]\equiv W_{\text{extract}}(\rho_\mtn{S}^{input}\mapsto\rho_\mtn{S}^{output}).
\end{equation}

If we take $\rho_\mtn{S}^f=\rho_\mtn{S}^\mtn{\beta}$, that is, $\rho_\mtn{S}^{output}=\rho_\mtn{S}^\mtn{\beta}$, then $W_{\text{extract}}$ would yield \textit{the work distance} from $\rho^i_\mtn{S}$ to $\rho_\mtn{S}^\mtn{\beta}$:

\begin{equation}\label{eq:WorkDistOut0}
\rho_\mtn{S}^i\otimes|0\rangle\langle0|\underrightarrow{\quad W_{\text{extract}}\quad}\rho_\mtn{S}^\mtn{\beta}\otimes|\omega\rangle\langle\omega|
\end{equation}

\begin{equation}\label{eq:WorkDistOut}
    W_{\text{extract}}(\rho_\mtn{S}^i\mapsto\rho_\mtn{S}^\mtn{\beta})=\mathcal{D}_{work}(\rho_\mtn{S}^i\succ\rho_\mtn{S}^\mtn{\beta}).
\end{equation}

$\bullet$ If $\Delta E_\mtn{Q}<0$, then energy is transferred from qubit to the system, and 
\begin{equation}\label{eq:WBitEnergyDiff2}
    0<-\Bigl[D^\mtn{R}_\mtn{\alpha}(\rho^{input}\|\rho^\mtn{\beta})-D^\mtn{R}_\mtn{\alpha}(\rho^{output}\|\rho^\mtn{\beta})\Bigr]\le-\beta\Delta E_\mtn{Q},\qquad\forall\alpha\in[0,+\infty).
\end{equation}

Then, the minimum amount of energy to form a state $\rho^f_\mtn{S}$ of a system in contact with a heat reservoir at temperature $T$ in the process $\rho_\mtn{S}^i\otimes|0\rangle\langle0|\longmapsto\rho_\mtn{S}^f\otimes|\omega\rangle\langle\omega|$ reads

\begin{equation}\label{eq:WBitEnergyDiffN}
    \sup_{\alpha\ge0}[D^\mtn{R}_\mtn{\alpha}(\rho^{output}\|\rho^\mtn{\beta})-D^\mtn{R}_\mtn{\alpha}(\rho^{input}\|\rho^\mtn{\beta})]\equiv W_{\text{cost}}(\rho_\mtn{S}^{input}\mapsto\rho_\mtn{S}^{output}).
\end{equation}

If we take $\rho^i_\mtn{S}=\rho_\mtn{S}^\mtn{\beta}$, that is, $\rho_\mtn{S}^{input}=\rho_\mtn{S}^\mtn{\beta}$, then $W_{\text{cost}}$ would yield the work distance from $\rho_\mtn{S}^\mtn{\beta}$ to $\rho_\mtn{S}^f$:

\begin{equation}\label{eq:WorkDistIn0}
\rho_\mtn{S}^\mtn{\beta}\otimes|\omega\rangle\langle\omega|\underrightarrow{\quad W_{\text{cost}}\quad}\rho_\mtn{S}^f\otimes|0\rangle\langle0|
\end{equation}

\begin{equation}\label{eq:WorkDistIn}
    W_{\text{cost}}(\rho_\mtn{S}^\mtn{\beta}\mapsto\rho_\mtn{S}^f)=-\mathcal{D}_{\text{work}}(\rho_\mtn{S}^\mtn{\beta}\succ\rho_\mtn{S}^f).
\end{equation}

Considering the inequality \eqref{eq:CatalystQubitDR2}, we see that for $\Delta E_\mtn{Q}>0$, $C$ transfers the excess energy to $Q$, and for $\Delta E_\mtn{Q}<0$, $C$ takes from $Q$ the energy that it has lost. That is, the transition \eqref{eq:CatalystTransition} occurs, and the $2^{nd}$ laws of quantum thermodynamics would be recovered for the transitions with inexact catalysis.

\renewcommand{\theequation}{H\arabic{equation}}
\setcounter{equation}{0}  
\appsection{Relation to order-theoretic characterizations of catalytic transformations}  \label{appCC}

The purpose of this appendix is to place the non-additive second-law inequalities developed in this work within the general order-theoretic framework of catalytic transformations, as formalized in the resource-theoretic literature.

\begin{corollary}
Let $m,n\in\mathbb{N}$, $p,q\in\mathcal{I}_n$ and $p',q'\in\mathcal{I}_m$ with the assumption that at least one of $p$ or $q$ lies in the interior of $\mathcal{I}_n$\footnote{
This technical assumption follows the order-theoretic framework for catalytic convertibility developed by Gour; see Ref.~\cite{Gour2021}, in particular the characterization underlying Theorem~21.} (i.e.\ has full support). Then the following statements are equivalent:
\begin{itemize}
    \item[(a)] 
    $(p,q)$ catalytically relatively majorizes $(p',q')$, denoted by 
    \begin{equation}\label{eq:CatMajor}
      (p,q) \succeq_c (p',q')
    \end{equation}
    
    \item[(b)]
    $
    D_\mtn{\alpha}(p\|q)\ge D_\mtn{\alpha}(p'\|q')
    $
    and
    $
    D_\mtn{\alpha}(q\|p)\ge D_\mtn{\alpha}(q'\|p')
    $
    for every $\alpha\ge \tfrac12$.
\end{itemize}
\end{corollary}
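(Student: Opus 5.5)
The plan is to transfer Gour's characterization (Ref.~\cite{Gour2021}, Theorem~21) from R\'enyi divergences to the non-additive divergences $D_\alpha$. Gour's result, under the stated support assumption, says that $(p,q)\succeq_c(p',q')$ holds if and only if
$D^\mtn{R}_\alpha(p\|q)\ge D^\mtn{R}_\alpha(p'\|q')$ and $D^\mtn{R}_\alpha(q\|p)\ge D^\mtn{R}_\alpha(q'\|p')$ for all $\alpha\ge\tfrac12$. This is an order-theoretic statement quoted from the literature, and we take it as given. Statement (a) is then equivalent to the R\'enyi inequalities, so it suffices to show that, for each fixed $\alpha\ge\tfrac12$, the R\'enyi inequality and the non-additive inequality are equivalent. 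This has to be checked separately for the pair $(p\|q)$ versus $(p'\|q')$ and for the reversed pair $(q\|p)$ versus $(q'\|p')$.

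The key step uses relation \eqref{eq:RenyiDTsallisD2}. For $\alpha\ge\tfrac12$ we have $\sgn\alpha=+1$, and for $\alpha\neq1$,
\begin{equation*}
D_\alpha(x\|y)=\frac{e^{(\alpha-1)D^\mtn{R}_\alpha(x\|y)}-1}{\alpha-1}=:g_\alpha\bigl(D^\mtn{R}_\alpha(x\|y)\bigr).
\end{equation*}
The function $t\mapsto g_\alpha(t)$ has derivative $e^{(\alpha-1)t}>0$, so it is strictly increasing on $\mathbb{R}$ for both $\alpha>1$ and $\tfrac12\le\alpha<1$. The same $g_\alpha$ applies to both sides of each inequality. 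Hence $D^\mtn{R}_\alpha(p\|q)\ge D^\mtn{R}_\alpha(p'\|q')$ holds if and only if $D_\alpha(p\|q)\ge D_\alpha(p'\|q')$, and likewise for the reversed arguments. At $\alpha=1$ both divergences reduce to the Kullback--Leibler divergence by \eqref{eq:Div_1App}, so the equivalence is trivial there.

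The remaining work, and the main obstacle, is the degenerate values: cases where a divergence is $+\infty$, or where $\sum_i x_i^\alpha y_i^{1-\alpha}=0$.

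1. For $\alpha\ge\tfrac12$ the sum $\sum_i x_i^\alpha y_i^{1-\alpha}$ can vanish only if $x$ and $y$ have disjoint supports. For $\alpha<1$ the support assumption rules this out. For $\alpha>1$ the sum is either positive or the divergence is infinite.
2. For $\alpha>1$, divergences may be $+\infty$ when a support inclusion fails, for example in $D_\alpha(q\|p)$ when only $q$ has full support. Extending $g_\alpha$ by $g_\alpha(+\infty)=+\infty$ keeps it strictly increasing on $(-\infty,+\infty]$, so the equivalence still goes through. The comparison of $\infty$ with $\infty$ holds on both sides simultaneously.
3. For $\tfrac12\le\alpha<1$, $g_\alpha$ maps $[0,\infty)$ onto $[0,(1-\alpha)^{-1})$. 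Any value $D^\mtn{R}_\alpha=+\infty$ corresponds to $D_\alpha=(1-\alpha)^{-1}$, and the map remains order-preserving, consistent with the bound \eqref{eq:AnIntervalForDTAppD}.

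Once these boundary cases are checked, the two families of inequalities in (b) coincide exactly with Gour's R\'enyi conditions, which completes the equivalence (a)$\Leftrightarrow$(b).
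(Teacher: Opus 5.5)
Your proposal matches the paper's proof: both take Gour's Theorem~21 as given and transfer the R\'enyi conditions pointwise in $\alpha$ through the strictly increasing map $t\mapsto \bigl(e^{(\alpha-1)t}-1\bigr)/(\alpha-1)$. Your explicit treatment of $\alpha=1$ and of infinite R\'enyi values goes beyond what the paper writes, which is a useful addition. One correction to your point~1: the support hypothesis rules out disjoint supports only for $(p,q)$, not for $(p',q')$. That remaining case is exactly the one your point~3 covers via $D_\alpha=(1-\alpha)^{-1}$, so the argument is still complete.
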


\begin{proof}
By Theorem~21 of Ref.~\cite{Gour2021}, relative majorization satisfies
\begin{equation}
\label{eq:GourRenyiChar}
(p,q) \succeq_c (p',q')
\quad\Longleftrightarrow\quad
\begin{cases}
D_\mtn{\alpha}^{\mtn{R}}(p\|q) \ge D_\mtn{\alpha}^{\mtn{R}}(p'\|q'),\\[4pt]
D_\mtn{\alpha}^{\mtn{R}}(q\|p) \ge D_\mtn{\alpha}^{\mtn{R}}(q'\|p'),
\end{cases}
\qquad \text{for every } \alpha \ge \tfrac12,
\end{equation}
where $D_\mtn{\alpha}^{\mtn{R}}$ denotes the R\'enyi divergence of order~$\alpha$.\\
For each fixed $\alpha\neq 1$, non-additive and the R\'enyi divergences are related by
Eqs.~\eqref{eq:RenyiDTsallisD1} and~\eqref{eq:RenyiDTsallisD2}, namely
\begin{equation}
\label{eq:TsallisRenyiMonotone}
D_\mtn{\alpha}(p\|q)
=
\frac{e^{(\alpha-1)D_\mtn{\alpha}^{\mtn{R}}(p\|q)} - 1}{\alpha - 1},
\end{equation}
which is a strictly increasing function of $D_\mtn{\alpha}^{\mtn{R}}(p\|q)$ for $\alpha\neq 1$.
Hence, for any two pairs $(p,q)$ and $(p',q')$,
\begin{equation}
\label{eq:RenyiTsallisOrderEquiv}
D_\mtn{\alpha}^{\mtn{R}}(p\|q) \ge D_\mtn{\alpha}^{\mtn{R}}(p'\|q')
\quad\Longleftrightarrow\quad
D_\mtn{\alpha}(p\|q) \ge D_\mtn{\alpha}(p'\|q').
\end{equation}
Applying this equivalence pointwise in $\alpha$ to the R\'enyi characterization
in Eq.~\eqref{eq:GourRenyiChar} yields
\begin{equation}
\label{eq:TsallisChar}
(p,q) \succeq_c (p',q')
\quad\Longleftrightarrow\quad
\begin{cases}
D_\mtn{\alpha}(p\|q) \ge D_\mtn{\alpha}(p'\|q'),\\[4pt]
D_\mtn{\alpha}(q\|p) \ge D_\mtn{\alpha}(q'\|p'),
\end{cases}
\qquad \text{for every } \alpha\ge \tfrac12,
\end{equation}
which proves the claim.
\end{proof}

It is worth noting that Refs.~\cite{Farooq2024,Verhagen2025} provide a strictly stronger derivation of Theorem~21 in \cite{Gour2021}, obtained within a general matrix majorization framework. In that setting, relative majorization emerges as a special case of a broader operator ordering, and the characterization holds without auxiliary assumptions or regularization procedures on $p'$ and $q'$. While this matrix-based approach offers a deeper structural understanding of the ordering, its technical scope and level of abstraction go beyond the needs of the present work. For our purposes, the R\'enyi-based formulation of \cite{Gour2021} provides a natural and sufficiently direct route to the non-additive characterization established above.

\renewcommand{\theequation}{I\arabic{equation}}
\setcounter{equation}{0}  
\appsection{Pseudo-additivity and finite-size catalytic constraints}  \label{app:pseudo-additivity-finite-size}

\subsection{Exact uncorrelated catalysis} \label{app:exact-uncorr-cat}

For uncorrelated catalytic transformations of the form
\begin{equation}
  \rho_\mtn{S} \otimes \sigma_\mtn{M} \;\longmapsto\; \rho'_\mtn{S} \otimes \sigma_\mtn{M} ,
\end{equation}
the change in total non-additive free energy can be written as
\begin{equation}
  \Delta F_\mtn{\alpha}
  =
  \Delta F_\mtn{\alpha}^{\mtn{(S)}}
  \Bigl[
    1 + \mathrm{sgn}(\alpha)(\alpha-1)D_\mtn{\alpha}(\sigma_\mtn{M}\|\gamma_\mtn{M})
  \Bigr],
\end{equation}
where $\Delta F_\mtn{\alpha}^{\mtn{(S)}}=F_\mtn{\alpha}(\rho_\mtn{S}')-F_\mtn{\alpha}(\rho_\mtn{S})$ is the free-energy change of the system alone. Unlike additive formulations, the catalyst contribution does not cancel identically even in the uncorrelated case. Instead, the catalyst's distance from equilibrium enters multiplicatively, making its thermodynamic properties operationally relevant.

For $0<\alpha<1$, assuming $\Delta F_\mtn{\alpha}^{\mtn{(S)}}\le 0$, monotonicity implies
\begin{equation}
  1-(1-\alpha)D_\mtn{\alpha}(\sigma_\mtn{M}\|\gamma_\mtn{M})\ge 0,
\end{equation}
and hence
\begin{equation}
  D_\mtn{\alpha}(\sigma_\mtn{M}\|\gamma_\mtn{M})\le \frac{1}{1-\alpha}.
\end{equation}
This inequality should not be overinterpreted as a sharp lower bound on the catalyst dimension by itself. Rather, it shows that for $0<\alpha<1$ the pseudo-additive correction prevents arbitrarily large catalytic athermality from becoming completely irrelevant: the catalyst contribution remains explicitly visible in the monotonicity condition, in contrast to additive R\'enyi-based formulations.

To clarify the scope of this observation, let us consider the simplest case in which the catalytic Hamiltonian is trivial,
\begin{equation}
  H_\mtn{M}=0,
\end{equation}
so that the thermal state is maximally mixed,
\begin{equation}
  \gamma_\mtn{M} = \mathrm{diag}(\vec{\mu}_{d_\mtn{M}}),
  \qquad
  \vec{\mu}_{d_\mtn{M}} = (1/d_\mtn{M},\dots,1/d_\mtn{M}).
\end{equation}
For a diagonal catalyst $\sigma_\mtn{M}=\mathrm{diag}(\vec p)$, the above condition reduces to
\begin{equation}
  d_\mtn{M}^{\alpha-1}\sum_i p_i^\alpha \ge 0,
\end{equation}
which is trivially satisfied for any probability distribution $\vec p$. Thus, for exact uncorrelated catalysis with a trivial catalyst Hamiltonian, no nontrivial finite-size restriction follows from this bound alone. This is precisely why the approximate setting is more informative: once the catalyst is returned only up to a controlled error, the pseudo-additive structure acquires nontrivial finite-size content.

\subsection{Approximate uncorrelated catalysis} \label{app:approx-uncorr-cat}

Assume now that the catalyst is returned only approximately. Denoting by $\sigma_\mtn{M}$ and $\sigma'_\mtn{M}$ the initial and final states of $M$, respectively, we impose the trace-distance constraint
\begin{equation}
    \frac{1}{2}\|\sigma'_\mtn{M} - \sigma_\mtn{M}\|_1 \le \varepsilon,
    \qquad
    \text{equivalently } \|\vec{q} - \vec{p} \|_1 \le 2\varepsilon,
\label{eq:trace-distance}
\end{equation}
where $\sigma'_\mtn{M} = \mathrm{diag}(\vec{q})$, $\sigma_\mtn{M} = \mathrm{diag}(\vec{p})$, and $\varepsilon \ge 0$ quantifies the degree of catalytic approximation. For uncorrelated catalytic transformations of the form
\begin{equation}
  \rho_\mtn{S} \otimes \sigma_\mtn{M} \longmapsto \rho^\prime_\mtn{S} \otimes \sigma^\prime_\mtn{M},
\end{equation}
the change in total non-additive free energy can be written as
\begin{equation}
    \Delta F_\mtn{\alpha} = \Delta F_\mtn{\alpha}^{\mtn{(S)}} + k_BT\!\left(1+(\alpha-1)\ln Z_\mtn{S}\right)(x-y) + (\alpha-1)\!\left[x\,F_\mtn{\alpha}(\rho_\mtn{S}')-y\,F_\mtn{\alpha}(\rho_\mtn{S})\right],
\label{eq:DeltaFalpha-start}
\end{equation}
where the auxiliary quantities $x$ and $y$ quantify the non-additive athermalities (deviation from thermal equilibrium $\gamma_\mtn{M}$) of the final and initial catalyst states
\begin{equation}
    x \coloneqq D_\mtn{\alpha}(\sigma^\prime_\mtn{M} \| \gamma_\mtn{M}) = \frac{d_\mtn{M}^{\alpha-1}}{\alpha-1}\sum_i q_i^\alpha - \frac{1}{\alpha-1},
    \qquad
    y \coloneqq D_\mtn{\alpha}(\sigma_\mtn{M} \| \gamma_\mtn{M}) = \frac{d_\mtn{M}^{\alpha-1}}{\alpha-1}\sum_i p_i^\alpha - \frac{1}{\alpha-1}.
\label{eq:x-y-def}
\end{equation}

Introducing the shorthand
\begin{equation}
Q_\mtn{\alpha} \coloneqq \sum_i q_i^\alpha,
\qquad
P_\mtn{\alpha} \coloneqq \sum_i p_i^\alpha,
\qquad
A_\mtn{\alpha} \coloneqq \frac{k_BT\!\left(1+(\alpha-1)\ln Z_\mtn{S}\right)}{\alpha-1},
\end{equation}
Eq.~\eqref{eq:DeltaFalpha-start} becomes
\begin{equation}
\Delta F_\mtn{\alpha}
=
d_\mtn{M}^{\alpha-1}
\left[
A_\mtn{\alpha} (Q_\mtn{\alpha}-P_\mtn{\alpha})
+
Q_\mtn{\alpha} F_\mtn{\alpha}(\rho^\prime_\mtn{S})
-
P_\mtn{\alpha} F_\mtn{\alpha}(\rho_\mtn{S})
\right] ,
\label{eq:DeltaFalpha-expanded}
\end{equation}
where the last two terms can be rearranged as
\begin{equation}
Q_\mtn{\alpha} F_\mtn{\alpha}(\rho^\prime_\mtn{S})-P_\mtn{\alpha} F_\mtn{\alpha}(\rho_\mtn{S})
=
P_\mtn{\alpha} \, \Delta F_\mtn{\alpha}^{\mtn{(S)}}
+
(Q_\mtn{\alpha}-P_\mtn{\alpha})F_\mtn{\alpha}(\rho^\prime_\mtn{S}),
\end{equation}
yielding
\begin{equation}
\Delta F_\mtn{\alpha}
=
d_\mtn{M}^{\alpha-1}
\left[
P_\mtn{\alpha} \, \Delta F_\mtn{\alpha}^{\mtn{(S)}}
+
(Q_\mtn{\alpha}-P_\mtn{\alpha})\!\left(F_\mtn{\alpha}(\rho^\prime_\mtn{S})+A_\mtn{\alpha}\right)
\right].
\label{eq:DeltaFalpha-limit-form}
\end{equation}

\subsubsection{Continuity bounds}

\paragraph{Case \texorpdfstring{\(\alpha \ge 1\)}{α ≥ 1}.}
For $\alpha\ge 1$, the function $t^\alpha$ is Lipschitz continuous on $[0,1]$, implying
\begin{equation}
|Q_\mtn{\alpha}-P_\mtn{\alpha}|
=
\Big|\sum_i(q_i^\alpha-p_i^\alpha)\Big|
\le
\alpha \sum_i |q_i-p_i|
\le
2\alpha\,\varepsilon.
\label{eq:Qalpha-bound-alpha-ge-1}
\end{equation}
Hence, the correction proportional to $Q_\mtn{\alpha}-P_\mtn{\alpha}$ in Eq.~\eqref{eq:DeltaFalpha-limit-form} is at most $O(\varepsilon)$.

\paragraph{Case \texorpdfstring{\(0 < \alpha < 1\)}{0 < α < 1}.}
For $0<\alpha<1$, the function $t^\alpha$ is H\"older continuous, leading to
\begin{equation}
|Q_\mtn{\alpha}-P_\mtn{\alpha}|
\le
\sum_i |q_i^\alpha-p_i^\alpha|
\le
\sum_i |q_i-p_i|^\alpha
\le
d_\mtn{M}^{1-\alpha}(2\varepsilon)^\alpha.
\label{eq:Qalpha-bound-alpha-lt-1}
\end{equation}
In this regime, the correction scales as $O(\varepsilon^\alpha)$.

\paragraph{Limit \texorpdfstring{\(\varepsilon\to 0\)}{ε → 0}.}
In both regimes, $Q_\mtn{\alpha}-P_\mtn{\alpha}\to 0$ as $\varepsilon\to 0$. Therefore,
\begin{equation}
\lim_{\varepsilon\to 0}\Delta F_\mtn{\alpha}
=
d_\mtn{M}^{\alpha-1} P_\mtn{\alpha} \, \Delta F_\mtn{\alpha}^{\mtn{(S)}} ,
\label{eq:DeltaFalpha-limit}
\end{equation}
which is consistent with the exact-return expression.

\subsubsection{A transition-dependent trade-off}

Eq.~\eqref{eq:DeltaFalpha-limit-form} yields an explicit finite-size trade-off between the target transformation on $S$ and the allowed deviation of the catalyst. Imposing the monotonicity condition $\Delta F_\mtn{\alpha} \le 0$ and rearranging gives
\begin{equation}
P_\mtn{\alpha} \,\Delta F_\mtn{\alpha}^{\mtn{(S)}}
\;\le\;
-(Q_\mtn{\alpha}-P_\mtn{\alpha})\bigl(F_\mtn{\alpha}(\rho_\mtn{S}')+A_\mtn{\alpha}\bigr).
\label{eq:tradeoff-start}
\end{equation}
Using $-(Q_\mtn{\alpha}-P_\mtn{\alpha})\le |Q_\mtn{\alpha}-P_\mtn{\alpha}|$, one obtains the following sufficient condition ensuring monotonicity:
\begin{equation}
P_\mtn{\alpha}\,\bigl(-\Delta F_\mtn{\alpha}^{\mtn{(S)}}\bigr)
\;\ge\;
|Q_\mtn{\alpha}-P_\mtn{\alpha}|\;\bigl|F_\mtn{\alpha}(\rho_\mtn{S}')+A_\mtn{\alpha}\bigr|.
\label{eq:tradeoff-sufficient}
\end{equation}
We stress that these bounds are sufficient, but not necessary, for approximate catalysis within the class of admissible profiles considered here. This inequality makes the catalyst dependence explicit at the level of the generalized monotonicity condition: for a fixed target transition
$\rho_\mtn{S}\mapsto\rho^\prime_\mtn{S}$, the admissible approximation error $\varepsilon$ depends on the size $d_\mtn{M}$
through the continuity bounds on $|Q_\mtn{\alpha}-P_\mtn{\alpha}|$.

\paragraph{Explicit \(\varepsilon\)-bounds.}
For $\alpha \ge 1$, Eq.~\eqref{eq:Qalpha-bound-alpha-ge-1} implies $|Q_\mtn{\alpha}-P_\mtn{\alpha}|\le 2\alpha\varepsilon$,
and Eq.~\eqref{eq:tradeoff-sufficient} yields
\begin{equation}
\varepsilon
\;\le\;
\frac{P_\mtn{\alpha}\,\bigl(-\Delta F_\mtn{\alpha}^{\mtn{(S)}}\bigr)}{2\alpha\,\bigl|F_\mtn{\alpha}(\rho_\mtn{S}')+A_\mtn{\alpha}\bigr|}.
\label{eq:eps-bound-alpha-ge-1}
\end{equation}
For $0 < \alpha < 1$, Eq.~\eqref{eq:Qalpha-bound-alpha-lt-1} gives
$|Q_\mtn{\alpha}-P_\mtn{\alpha}|\le d_\mtn{M}^{1-\alpha}(2\varepsilon)^\alpha$, which leads to
\begin{equation}
\varepsilon
\;\le\;
\frac{1}{2}\left[
\frac{P_\mtn{\alpha}\,\bigl(-\Delta F_\mtn{\alpha}^{\mtn{(S)}}\bigr)}{d_\mtn{M}^{1-\alpha}\,\bigl|F_\mtn{\alpha}(\rho_\mtn{S}')+A_\mtn{\alpha}\bigr|}
\right]^{\!1/\alpha}.
\label{eq:eps-bound-alpha-lt-1}
\end{equation}
Eqs.~\eqref{eq:eps-bound-alpha-ge-1} and~\eqref{eq:eps-bound-alpha-lt-1} make the catalyst dependence explicit. However, the continuity bound for $0<\alpha<1$ should again be interpreted with care: by itself it is a worst-case estimate, not yet a direct proof of a sharp minimum catalyst dimension. A more robust and physically transparent statement is that, for a fixed target transition $\rho_\mtn{S}\mapsto \rho'_\mtn{S}$, non-additive monotonicity conditions, combined with an explicit finite-dimensional ansatz for approximate catalyst return, can generate nontrivial finite-size feasibility thresholds.

\subsection{Simple finite-dimensional benchmark examples} \label{app:finite-benchmark-examples}

To make this finite-size mechanism explicit, let us consider the simplest benchmark in which the catalyst Hamiltonian is trivial,
\begin{equation}
  H_\mtn{M}=0,
\end{equation}
so that the reference thermal state is uniform:
\begin{equation}
  \gamma_\mtn{M} = \mathrm{diag}(1/d_\mtn{M},\dots,1/d_\mtn{M}).
\end{equation}
Instead of taking the \textit{initial} catalyst to be uniform, it is often more natural in approximate catalysis to regard the \textit{returned} catalyst as the reference object and impose
\begin{equation}
  q_i = \frac{1}{d_\mtn{M}}.
\end{equation}
Then
\begin{equation}
  Q_\mtn{\alpha} = \sum_i q_i^\alpha = d_\mtn{M}^{1-\alpha},
\end{equation}
and all finite-size structure is encoded in the initial profile $\vec p$ through $P_\mtn{\alpha}$ and $Q_\mtn{\alpha}-P_\mtn{\alpha}$.

\paragraph{Example 1: distributed perturbation around a uniform returned catalyst.}

Assume $d_\mtn{M}$ is even and define
\begin{equation}
  q_i = \frac{1}{d_\mtn{M}},
\end{equation}
and
\begin{equation}
  p_i =
  \begin{cases}
    \dfrac{1}{d_\mtn{M}} + \dfrac{2\epsilon}{d_\mtn{M}}, & i\in \mathcal E, \\[6pt]
    \dfrac{1}{d_\mtn{M}} - \dfrac{2\epsilon}{d_\mtn{M}}, & i\in \mathcal O,
  \end{cases}
\end{equation}
where $|\mathcal E|=|\mathcal O|=d_\mtn{M}/2$. Then
\begin{equation}
  \sum_i p_i = 1,
  \qquad
  \frac12\|\vec q-\vec p\|_1 = \epsilon,
\end{equation}
so the approximate-return condition is exactly saturated.

In this case,
\begin{equation} \label{eq:P1}
  P_\mtn{\alpha}
  =
  d_\mtn{M}^{1-\alpha}
  \frac{(1+2\epsilon)^\alpha + (1-2\epsilon)^\alpha}{2},
\end{equation}
and therefore
\begin{equation}
  Q_\mtn{\alpha} - P_\mtn{\alpha}
  =
  d_\mtn{M}^{1-\alpha}
  \left[
    1 -
    \frac{(1+2\epsilon)^\alpha + (1-2\epsilon)^\alpha}{2}
  \right].
\end{equation}
For small $\epsilon$, a Taylor expansion yields
\begin{equation}\label{eq:QmPq1}
  Q_\mtn{\alpha} - P_\mtn{\alpha}
  \approx
  -2\alpha(\alpha-1)\, d_\mtn{M}^{1-\alpha}\epsilon^2.
\end{equation}
Substituting the small-$\epsilon$ expansion in Eq.~\eqref{eq:QmPq1} into Eq.~\eqref{eq:DeltaFalpha-limit-form}, we obtain, to leading nontrivial order in $\epsilon$,
\begin{equation}
\Delta F_\mtn{\alpha}
\approx
d_\mtn{M}^{\alpha-1}P_\mtn{\alpha} \,\Delta F_\mtn{\alpha}^{\mtn{(S)}}
-
2\alpha(\alpha-1)\epsilon^2\bigl(F_\mtn{\alpha}(\rho_\mtn{S}')+A_\mtn{\alpha}\bigr).
\label{eq:I32_sub_I15a}
\end{equation}
Using Eq.~\eqref{eq:P1}, one further has
\begin{equation}
d_\mtn{M}^{\alpha-1}P_\mtn{\alpha}
=
\frac{(1+2\epsilon)^\alpha+(1-2\epsilon)^\alpha}{2}
\approx
1+2\alpha(\alpha-1)\epsilon^2,
\label{eq:I32_sub_I15b}
\end{equation}
so that
\begin{equation}
\Delta F_\mtn{\alpha}
\approx
\Delta F_\mtn{\alpha}^{\mtn{(S)}}
-
2\alpha(\alpha-1)\epsilon^2\bigl(F_\mtn{\alpha}(\rho_\mtn{S})+A_\mtn{\alpha}\bigr).
\label{eq:I32_sub_I15c}
\end{equation}
Thus, within this particular approximate-return construction, the leading pseudo-additive correction is quadratic in $\epsilon$ and remains finite even for large $d_\mtn{M}$. In particular, increasing the catalyst dimension alone does not suppress the correction at this order. For $\alpha>1$, this means that any reduced-state violation $\Delta F_\mtn{\alpha}^{\mtn{(S)}}>0$ that can be compensated within this ansatz is controlled at leading order by an $O(\epsilon^2)$ correction.

\paragraph{Example 2: concentrated two-level perturbation around a uniform returned catalyst.}

A sharper benchmark is obtained by concentrating the same trace-distance error into only two levels:
\begin{equation}
  q_i = \frac{1}{d_\mtn{M}},
\end{equation}
and
\begin{equation}
  p_1 = \frac{1}{d_\mtn{M}}+\epsilon,
  \qquad
  p_2 = \frac{1}{d_\mtn{M}}-\epsilon,
  \qquad
  p_j = \frac{1}{d_\mtn{M}}\quad (j\ge 3),
\end{equation}
with positivity requiring $\epsilon\le 1/d_\mtn{M}$. Again,
\begin{equation}
  \sum_i p_i = 1,
  \qquad
  \frac12\|\vec q-\vec p\|_1 = \epsilon.
\end{equation}
Now
\begin{equation}\label{eq:P2}
  P_\mtn{\alpha}
  =
  \left(\frac{1}{d_\mtn{M}}+\epsilon\right)^\alpha
  +
  \left(\frac{1}{d_\mtn{M}}-\epsilon\right)^\alpha
  +
  (d_\mtn{M}-2)\left(\frac{1}{d_\mtn{M}}\right)^\alpha,
\end{equation}
so that
\begin{equation}
  Q_\mtn{\alpha} - P_\mtn{\alpha}
  =
  2\left(\frac{1}{d_\mtn{M}}\right)^\alpha
  -
  \left(\frac{1}{d_\mtn{M}}+\epsilon\right)^\alpha
  -
  \left(\frac{1}{d_\mtn{M}}-\epsilon\right)^\alpha.
\end{equation}
Since $t^\alpha$ is concave for $0<\alpha<1$ and convex for $\alpha>1$, it follows that
\begin{equation}
  Q_\mtn{\alpha}-P_\mtn{\alpha}
  \ge 0 \quad (0<\alpha<1),
  \qquad
  Q_\mtn{\alpha}-P_\mtn{\alpha}
  \le 0 \quad (\alpha>1).
\end{equation}
For small $\epsilon$, setting $a=1/d_\mtn{M}$, one finds
\begin{equation} \label{eq:QmPq2}
  Q_\mtn{\alpha}-P_\mtn{\alpha}
  \approx
  -\alpha(\alpha-1)a^{\alpha-2}\epsilon^2
  =
  -\alpha(\alpha-1)d_\mtn{M}^{2-\alpha}\epsilon^2.
\end{equation}
Substituting Eq.~\eqref{eq:QmPq2} into Eq.~\eqref{eq:DeltaFalpha-limit-form}, we obtain, to leading nontrivial order in $\epsilon$,
\begin{equation}
\Delta F_\mtn{\alpha}
\approx
d_\mtn{M}^{\alpha-1}P_\mtn{\alpha} \,\Delta F_\mtn{\alpha}^{\mtn{(S)}}
-
\alpha(\alpha-1)d_\mtn{M}\,\epsilon^2\bigl(F_\mtn{\alpha}(\rho_\mtn{S}')+A_\mtn{\alpha}\bigr).
\label{eq:I43_sub_I15a}
\end{equation}
Moreover, expanding Eq.~\eqref{eq:P2} for small $\epsilon$ gives
\begin{equation}
d_\mtn{M}^{\alpha-1}P_\mtn{\alpha}
\approx
1+\alpha(\alpha-1)d_\mtn{M}\epsilon^2,
\label{eq:I43_sub_I15b}
\end{equation}
so that
\begin{equation}
\Delta F_\mtn{\alpha}
\approx
\Delta F_\mtn{\alpha}^{\mtn{(S)}}
-
\alpha(\alpha-1)d_\mtn{M}\epsilon^2\bigl(F_\mtn{\alpha}(\rho_\mtn{S})+A_\mtn{\alpha}\bigr).
\label{eq:I43_sub_I15c}
\end{equation}
Thus, in this concentrated two-level perturbation ansatz, the leading pseudo-additive correction carries an explicit factor of $d_\mtn{M}\epsilon^2$, in contrast with the $O(\epsilon^2)$ behavior of the distributed perturbation considered above. This shows that, within the same fixed-$d_\mtn{M}$ comparison, concentrating the catalytic deviation into a small number of levels can produce a parametrically stronger leading-order correction than distributing it uniformly. At the same time, the positivity condition $\epsilon \le 1/d_\mtn{M}$ constrains this enhancement in the present ansatz, so the large-$d_\mtn{M}$ behavior must be interpreted together with the shrinking admissible range of $\epsilon$. The main lesson is therefore not unrestricted growth with catalyst size, but rather that the peudo-additive correction is sensitive not only to the magnitude of the catalytic deviation, but also to its spectral profile.

\paragraph{Physical interpretation.}

These two examples sharpen the main finite-size message of the pseudo-additive framework. In approximate uncorrelated catalysis, the correction to the monotonicity condition is not determined solely by the trace-distance return error $\epsilon$; it is also sensitive to the \textit{shape} of the catalytic perturbation. In the distributed perturbation ansatz above, the leading pseudo-additive correction is of order $O(\epsilon^2)$, whereas in the concentrated two-level ansatz it carries an explicit factor of $d_\mtn{M}\epsilon^2$ at leading order. Thus, even at the same approximate-return scale, different perturbation profiles can generate parametrically different corrections to the non-additive monotonicity condition. At the same time, in the concentrated ansatz this enhancement is constrained by the positivity requirement $\epsilon \le 1/d_\mtn{M}$, so the large-$d_\mtn{M}$ behavior must be interpreted together with the shrinking admissible range of $\epsilon$.

This is precisely where the finite-size content of the pseudo-additive framework becomes operational. The continuity bounds provide worst-case control, but structured finite-dimensional ans\"atze reveal something stronger: for a fixed target transformation $\rho_\mtn{S}\mapsto \rho'_\mtn{S}$, there may exist dimensions for which no admissible catalyst profile within the allowed trace-distance budget can satisfy the non-additive monotonicity condition, whereas larger dimensions or different spectral profiles may restore feasibility. Thus, rather than yielding a universal sharp dimension bound in closed form, the non-additive formalism naturally induces a \textit{finite-size feasibility landscape}.

In this sense, approximate non-additive catalytic conditions do not merely quantify whether a catalyst exists in principle. They quantify how catalyst dimension, spectral shape, admissible return error, and the thermodynamic demand of the target transition conspire to determine whether catalytic assistance is actually possible at finite resources, thereby providing a concrete finite-size interpretation of the pseudo-additive second laws discussed in the main text.

\paragraph{Comparison with additive R\'enyi bookkeeping.}

For comparison, an additive R\'enyi-based treatment of approximate uncorrelated catalysis with a trivial catalyst Hamiltonian separates the total generalized free-energy change into a system contribution and a distinct catalyst contribution. For the same two finite-dimensional ans\"atze considered above, this additive catalyst term exhibits the same qualitative scaling contrast as in the non-additive case: the distributed perturbation yields a leading correction of order $O(\epsilon^2)$, whereas the concentrated two-level perturbation produces an explicitly amplified contribution of order $O(d_\mtn{M}\epsilon^2)$ within its admissible regime. This shows that the profile dependence exhibited above is not specific to the non-additive formulation: the additive R\'enyi divergences of the catalyst likewise distinguish between different spectral distributions of the same trace-distance deviation. The distinguishing feature of the present non-additive formulation is instead structural: this dependence enters directly into the generalized monotonicity condition through the pseudo-additive term, where it couples to the system contribution. In contrast, within additive formulations, the same dependence appears only through the catalyst divergence itself and does not alter the structure of the system-level free-energy balance.

\renewcommand{\theequation}{J\arabic{equation}}
\setcounter{equation}{0}  
\appsection{Correlated catalytic thermal transformations beyond reduced-state monotones}  \label{app:corrcatal}

We now turn to correlated catalytic thermal transformations, where the catalyst is preserved only marginally and is allowed to become correlated with the system during the transition. In the standard correlated-catalytic setting, one considers transformations of the form
\begin{equation}
\rho_\mtn{S} \otimes \sigma_\mtn{M} \;\longmapsto\; \rho'_\mtn{SM},
\end{equation}
such that the final marginals satisfy
\begin{equation}
\mathrm{tr}_\mtn{M}[\rho'_\mtn{SM}] = \rho'_\mtn{S},
\qquad
\mathrm{tr}_\mtn{S}[\rho'_\mtn{SM}] = \sigma_\mtn{M}.
\end{equation}

Thus, the catalyst can be reused locally, even though the final joint state need not be a product state. In the asymptotic regime of negligibly small correlations, it is known that the infinite family of generalized second laws effectively collapses to the ordinary free-energy inequality. Our interest here is different: we focus on finite-dimensional regimes in which the generated system--catalyst correlations are not negligible and can materially affect the thermo-majorization verdict for the joint transition.

The key question is whether any family of second laws defined only in terms of reduced-state quantities can remain complete in this regime. The examples below show that the answer is generally negative. More precisely, we exhibit explicit finite-dimensional transitions in which the initial state and the thermal reference state are held fixed, while the final system and catalyst marginals are also fixed, yet the thermo-majorization accessibility of the \textit{joint} transformation changes when only the correlation structure of the final state is varied. In this sense, the relevant thermodynamic information is not fully encoded in the reduced state of the system, nor even in the pair of reduced states alone.

\subsection{Common initial product state and thermal reference}

In all examples below, the initial state is taken to be a product of two local thermal states,
\begin{equation}
\rho_\mtn{SM}^\mtn{\mathrm{uc}}=\rho_\mtn{S}^\mtn{\beta_2}\otimes \rho_\mtn{M}^\mtn{\beta_1} ,
\label{eq:rhoP-product}
\end{equation}
where the subsystem $S$ is initially in thermal equilibrium at inverse temperature $\beta_2$, while the subsystem $M$ (the catalyst) is initially in thermal equilibrium at inverse temperature $\beta_1$. We take the thermal operations with respect to a bath at inverse temperature $\beta_b$. For simplicity, we take both subsystems to be two-level systems governed by the same local Hamiltonian:
\begin{equation} 
\hat{H}_\mtn{S} = \hat{H}_\mtn{M} = E_g |g\rangle \langle g| + E_e |e\rangle \langle e | ,
\end{equation}

In the computational basis $\{|ee\rangle,|eg\rangle,|ge\rangle,|gg\rangle\}$, the initial state reads
\begin{equation}
\rho_\mtn{SM}^\mtn{\mathrm{uc}} =
\begin{pmatrix}
\frac{e^{E_e(\beta_1+\beta_2)}}{(e^{E_e\beta_1}+e^{E_g\beta_1})(e^{E_e\beta_2}+e^{E_g\beta_2})} & 0 & 0 & 0\\[4pt]
0 & \frac{e^{E_e\beta_2+E_g\beta_1}}{(e^{E_e\beta_1}+e^{E_g\beta_1})(e^{E_e\beta_2}+e^{E_g\beta_2})} & 0 & 0\\[4pt]
0 & 0 & \frac{e^{E_g\beta_2+E_e\beta_1}}{(e^{E_e\beta_1}+e^{E_g\beta_1})(e^{E_e\beta_2}+e^{E_g\beta_2})} & 0\\[4pt]
0 & 0 & 0 & \frac{e^{E_g(\beta_1+\beta_2)}}{(e^{E_e\beta_1}+e^{E_g\beta_1})(e^{E_e\beta_2}+e^{E_g\beta_2})}
\end{pmatrix}.
\label{eq:rhoP-explicit}
\end{equation}

For the final states, we choose the reduced states of $M$ and $S$ to be fixed thermal states at inverse temperatures $\beta_1$ and $\beta_3$, respectively:
\begin{equation}
\rho^\prime_\mtn{M} =
\begin{pmatrix}
p_\mtn{\beta_1} & 0\\
0 & 1-p_\mtn{\beta_1}
\end{pmatrix},
\qquad
\rho^\prime_\mtn{S} =
\begin{pmatrix}
p_\mtn{\beta_3} & 0\\
0 & 1-p_\mtn{\beta_3}
\end{pmatrix},
\label{eq:fixed-marginals}
\end{equation}
with
\begin{equation}
p_\mtn{\beta_1} = \frac{e^{-E_g\beta_1}}{e^{-E_e\beta_1}+e^{-E_g\beta_1}},
\qquad
p_\mtn{\beta_3} = \frac{e^{-E_g\beta_3}}{e^{-E_e\beta_3}+e^{-E_g\beta_3}}.
\label{eq:pApB-def}
\end{equation}
Thus, in all cases considered below, the catalyst is locally returned to its initial thermal state, while the system is transformed to a different local thermal state. Note that $\beta_1$,  $\beta_2$, and $\beta_3$ characterize the chosen local marginals, whereas $\beta_b$ is the inverse temperature of the bath defining the thermal operations. The numerical values used in the examples are
\begin{equation}
E_g = 0,
\qquad
E_e = 2,
\qquad
\beta_1 = 0.1,
\qquad
\beta_2 = 0.2,
\qquad
\beta_3 = 1,
\qquad
\beta_b = 2.
\label{eq:numerical-parameters}
\end{equation}
With these choices, the initial state $\rho_\mtn{SM}^\mtn{\mathrm{uc}}$, the final marginals $\rho^\prime_\mtn{M},\rho^\prime_\mtn{S}$, and the thermal reference state for the joint thermo-majorization test are fixed throughout. Only the \textit{correlation structure} of the final joint state is varied.

At the reduced-state level, this immediately implies that all the generalized free energies of the system based on R\'enyi and non-additive divergences depend only on the fixed transition $\rho_\mtn{S}^\mtn{\beta_2}\mapsto \rho_\mtn{S}^\mtn{\beta_3}$ and are therefore completely insensitive to the detailed final correlations. In particular, for the above parameters, the changes in the system R\'enyi divergences are
\begin{equation}
\Delta D^\mtn{R}_{0} = 0,
\qquad
\Delta D^\mtn{R}_{1} = -41.01 \times 10^{-2},
\qquad
\Delta D^\mtn{R}_{\infty} = -60.70 \times 10^{-2} .
\label{eq:renyi-system-values}
\end{equation}
For non-additive divergences, the cases $\alpha=0$ and $\alpha=1$  coincide with the corresponding reduced-state verdicts, while the $\alpha\to\infty$ expression becomes ill-defined in this parameterization. The essential point is that all such reduced-state quantities are independent of the final correlation parameters introduced below.

\subsection{Example 1: classical correlations with fixed marginals}

Our first family of final states is purely classically correlated:
\begin{equation}
\rho_\mtn{SM}^\mtn{\mathrm{cc}}(\chi) =
\begin{pmatrix}
p_\mtn{\beta_3} p_\mtn{\beta_1} + \chi & 0 & 0 & 0\\
0 & p_\mtn{\beta_3}(1-p_\mtn{\beta_1})-\chi & 0 & 0\\
0 & 0 & (1-p_\mtn{\beta_3})p_\mtn{\beta_1}-\chi & 0\\
0 & 0 & 0 & (1-p_\mtn{\beta_3})(1-p_\mtn{\beta_1})+\chi
\end{pmatrix},
\label{eq:rhoC}
\end{equation}
where positivity requires
\begin{equation}
-5.37 \times 10^{-2} < \chi < 6.55 \times 10^{-2}
\label{eq:chi-range}
\end{equation}
for the parameter values in Eq.~\eqref{eq:numerical-parameters}. By construction,
\begin{equation}
\mathrm{tr}_\mtn{S}[\rho_\mtn{SM}^\mtn{\mathrm{cc}}(\chi)] = \rho^\prime_\mtn{M},
\qquad
\mathrm{tr}_\mtn{M}[\rho_\mtn{SM}^\mtn{\mathrm{cc}}(\chi)] = \rho^\prime_\mtn{S},
\end{equation}
so the reduced states are independent of $\chi$.

This is already enough to establish the first key point: any family of generalized second laws built only from reduced-state functionals of $S$ (or even from the pair of reduced states $(\rho^\prime_\mtn{M},\rho^\prime_\mtn{S})$) is completely blind to the parameter $\chi$, even though $\chi$ changes the joint state and, as we now show, can change the thermo-majorization verdict.

For $\chi = 5 \times 10^{-2}$, the mutual information of $\rho_\mtn{SM}^\mtn{\mathrm{cc}}$ is
\begin{equation}
I(S:M) = H_\mtn{S} + H_\mtn{M} - H_\mtn{SM} = 7.46 \times 10^{-2},
\end{equation}
and the transition $\rho_\mtn{SM}^\mtn{\mathrm{uc}} \mapsto \rho_\mtn{SM}^\mtn{\mathrm{cc}}$ is \textit{allowed} by thermo-majorization. For $\chi = 6.5 \times 10^{-2}$, the mutual information increases to
\begin{equation}
I(S:M) = 14.63 \times 10^{-2},
\end{equation}
but the transition $\rho_\mtn{SM}^\mtn{\mathrm{uc}} \mapsto \rho_\mtn{SM}^\mtn{\mathrm{cc}}$ becomes \textit{forbidden}. Since the initial state, the final marginals, and the thermal reference state are unchanged, the only difference between these two cases is the amount of classical correlation in the final joint state. Therefore, thermo-majorization distinguishes between two transformations that are indistinguishable at the reduced-state level.

This establishes that, in finite-dimensional correlated catalysis, no family of second laws defined solely in terms of reduced-state monotones can provide a complete characterization of state accessibility. The thermo-majorization verdict depends on genuinely joint-state information that is not recoverable from the marginals alone.

\subsection{Example 2: same marginals, same mutual information, different correlation structure}

The previous example shows that the \textit{amount} of correlation can matter even when the reduced states are fixed. We now show that even the amount of correlation is not, by itself, sufficient. Consider instead the discordant family
\begin{equation}
\rho_\mtn{SM}^\mtn{\mathrm{qc}}(\lambda) =
\begin{pmatrix}
p_\mtn{\beta_3} p_\mtn{\beta_1} & 0 & 0 & 0\\
0 & p_\mtn{\beta_3}(1-p_\mtn{\beta_1}) & \lambda & 0\\
0 & \lambda & (1-p_\mtn{\beta_3})p_\mtn{\beta_1} & 0\\
0 & 0 & 0 & (1-p_\mtn{\beta_3})(1-p_\mtn{\beta_1})
\end{pmatrix},
\label{eq:rhoD}
\end{equation}
which has the same marginals as $\rho_{SM}^{\mathrm{cc}}(\chi)$:
\begin{equation}
\mathrm{tr}_\mtn{S}[\rho_\mtn{SM}^\mtn{\mathrm{qc}}(\lambda)] = \rho^\prime_\mtn{M},
\qquad
\mathrm{tr}_\mtn{M}[\rho_\mtn{SM}^\mtn{\mathrm{qc}}(\lambda)] = \rho^\prime_\mtn{S}.
\end{equation}
Thus, the reduced-state generalized free energies are again identical to those of the previous example.

Choosing
\begin{equation}
\lambda = 9.47 \times 10^{-2},
\end{equation}
one obtains
\begin{equation}
I(S:M) = 7.46 \times 10^{-2},
\end{equation}
which is numerically equal to the mutual information of the classically correlated state $\rho_\mtn{SM}^\mtn{\mathrm{cc}}$ at $\chi=5 \times 10^{-2}$, for which the transition $\rho_\mtn{SM}^\mtn{\mathrm{uc}} \mapsto \rho_\mtn{SM}^\mtn{\mathrm{cc}}$ is allowed. However, the transition
\begin{equation}
\rho_\mtn{SM}^\mtn{\mathrm{uc}} \mapsto \rho_\mtn{SM}^\mtn{\mathrm{qc}}
\end{equation}
is \textit{forbidden} by thermo-majorization.

This comparison sharpens the previous conclusion. The states $\rho_\mtn{SM}^\mtn{\mathrm{cc}}(\chi=5 \times 10^{-2})$ and $\rho_\mtn{SM}^\mtn{\mathrm{qc}}(\lambda=9.47 \times 10^{-2})$ have:
\begin{itemize}
\item identical reduced states,
\item essentially the same mutual information,
\item but different internal correlation structure (one purely classical, the other discordant),
\end{itemize}
and yet thermo-majorization assigns different accessibility verdicts to the corresponding joint transitions. Therefore, even supplementing reduced-state data with a single scalar correlation measure such as mutual information, is not sufficient, in general. Thermodynamic accessibility can depend not only on the amount of correlation, but also on its nature.


%

\end{document}